\documentclass[11pt]{article}
\textheight 22.7 cm
\oddsidemargin 0 mm
\textwidth 16.5 cm
\topmargin -7 mm
\usepackage{hyperref}
\usepackage{amssymb}
\usepackage{amsmath}
\usepackage{mdframed}
\usepackage{subfig}
\usepackage{cancel}
\usepackage{enumerate}
\usepackage{color}
\usepackage{mathtools}
\usepackage{pdfpages}
\usepackage{mathrsfs}
\usepackage{enumitem}
\usepackage{mathtools}
\usepackage{graphicx}
\usepackage{mdframed}
\usepackage{amsfonts}
\usepackage{cancel}
\usepackage{placeins}
\usepackage{amsthm}
\usepackage{tikz}
\usetikzlibrary{shapes,arrows,automata}
\usepackage{xcolor}
\usetikzlibrary{arrows.meta}

\newtheorem{definition}{Definition}[section]
\newtheorem{theorem}{Theorem}[section]

\newtheorem{lemma}{Lemma}[section]
\newtheorem{proposition}{Proposition}[section]
\newtheorem{corollary}{Corollary}[section]

\let\Item\item

\begin{document}
\vspace*{-2.5cm}

\centerline{{\huge Integral feedback in synthetic biology:}}

 \medskip

 \centerline{{\huge Negative-equilibrium catastrophe}}
 
\medskip
\bigskip

\centerline{
\renewcommand{\thefootnote}{$1$}
{\Large Tomislav Plesa,\footnote{
Department of Bioengineering, Imperial College London,
Exhibition Road, London, SW7 2AZ, UK}\renewcommand{\thefootnote}{$*$}\footnote{
Corresponding author and lead contact. E-mail: t.plesa@ic.ac.uk}
\qquad 
Alex Dack$^1$,
\qquad 
Thomas E. Ouldridge$^1$
}}

\medskip
\bigskip

\noindent
{\bf Abstract}: A central goal of synthetic biology is the design of molecular
controllers that can manipulate the dynamics of intracellular networks
in a stable and accurate manner. To address the fact that detailed knowledge 
about intracellular networks is unavailable, integral-feedback controllers (IFCs) 
 have been put forward for controlling molecular abundances. 
These controllers can maintain accuracy in spite of the uncertainties in the controlled networks.
However, this desirable feature is achieved only if stability is also maintained. 
In this paper, we show that molecular IFCs can suffer from a hazardous instability called 
\emph{negative-equilibrium catastrophe} (NEC), whereby
all nonnegative equilibria vanish under the action of the controllers, 
and some of the molecular abundances blow up.
We show that unimolecular IFCs do not exist due to a NEC.
We then derive a family of bimolecular IFCs that are safeguarded against NECs
when uncertain unimolecular networks, with any number of molecular species, are controlled. 
However, when IFCs are applied on uncertain bimolecular (and hence most intracellular) networks, 
we show that preventing NECs generally becomes an intractable problem
as the number of interacting molecular species increases.

\section{Introduction} \label{sec:intro}
A main objective in synthetic biology is to control
living cells~\cite{SynthBio1,Control1,Toggle,Repressilator,PreTranscription,PostTranscription}
- a challenging problem that requires addressing a number of complicating factors displayed 
by intracellular networks:
\begin{enumerate}
\item[{\rm (N)}] \textbf{Nonlinearity}. Intracellular networks are \emph{bimolecular} (nonlinear), 
i.e. they include reactions involving two reacting molecules.
\item[{\rm (HD)}] \textbf{Higher-dimensionality}.
Intracellular networks are \emph{higher-dimensional}, 
i.e. they contain larger number of coupled molecular species.
\item[{\rm (U)}] \textbf{Uncertainty}. 
The experimental information about the structure,
rate coefficients and initial conditions of intracellular networks is 
\emph{uncertain}/incomplete.
\end{enumerate} 
When embedded into an \emph{input} network satisfying properties (N), (HD) and (U),
an ideal molecular \emph{controller} network would ensure that the resulting 
\emph{output} network autonomously traces a predefined
dynamics in a \emph{stable} and \emph{accurate} manner over a desired time-interval.
Controllers that maintain accuracy in spite of suitable uncertainties
are said to achieve \emph{robust adaptation} (homeostasis) - 
a fundamental design principle of living 
systems~\cite{Adaptation,CellSignal,Glycolic,Chemotaxis1,Chemotaxis2,Chemotaxis3}.
Control can be sought over deterministic dynamics when all of the molecular
species are in higher-abundance~\cite{Feinberg,Kurtz}, or over 
 stochastic dynamics when some species are 
present at lower copy-numbers~\cite{RadekBook,CellCycle,Circadian}.
See also Figure~\ref{fig:Control_Theory}, and 
Appendices~\ref{app:background} and~\ref{app:biochemical_control}. 

\begin{figure}[!htbp]
\vskip  -2.7cm
\centerline{
\includegraphics[width=0.8\columnwidth]{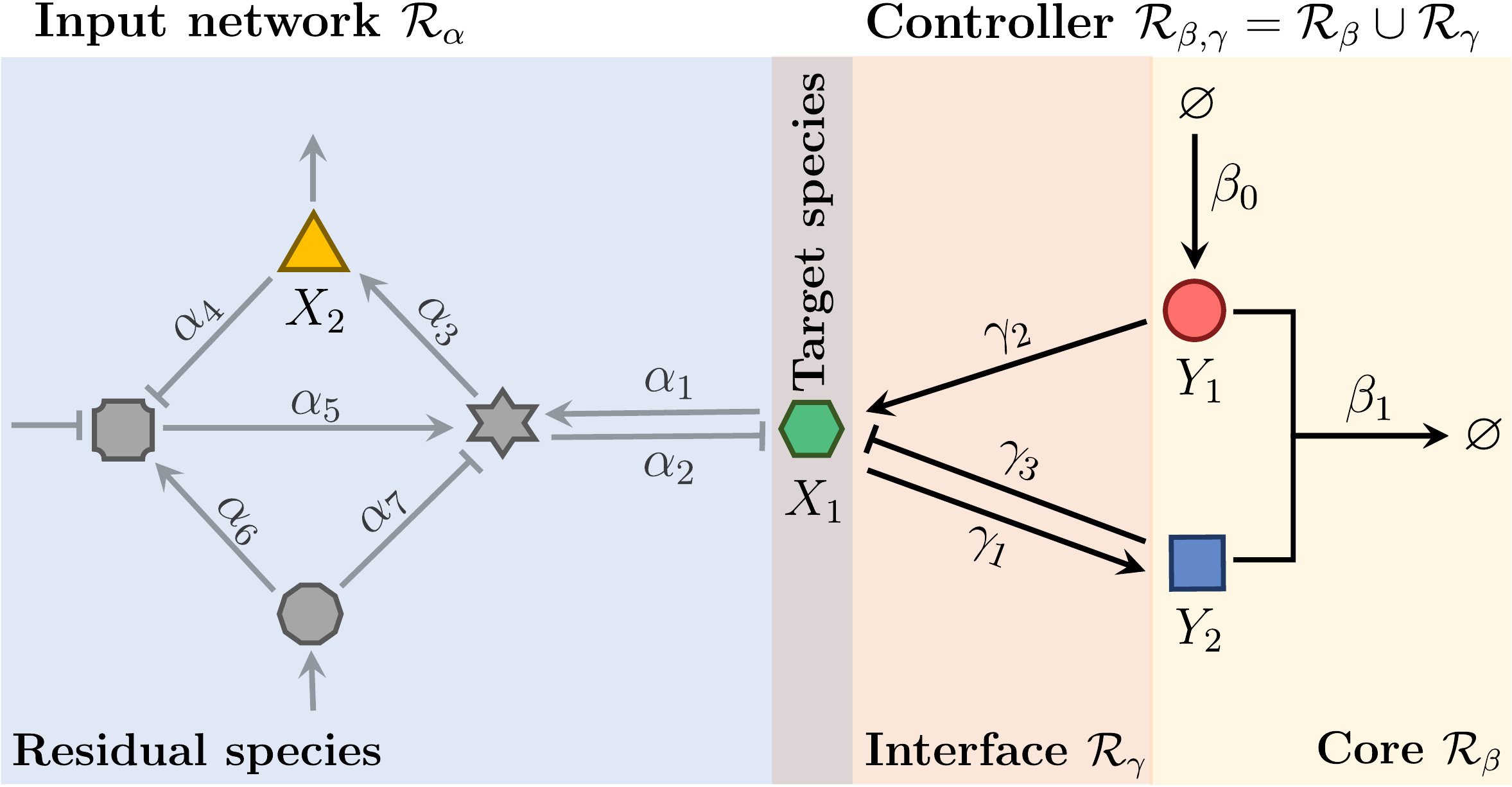}
}
\vskip -0.2cm
\caption{\it{\emph{Schematic representation of biochemical control}.
A black-box \emph{input} network $\mathcal{R}_{\alpha}$ is displayed, 
consisting of unknown biochemical interactions shown in grey, 
where $X_1 \to X_2$ (respectively, $X_1 \dashv X_2$) indicates 
that species $X_1$ influences species $X_2$ positively (respectively, negatively). 
The particular input network contains one \emph{target} species $X_1$, 
shown as a green hexagon, that can be interfaced with a controller. 
The rest of the input species, that cannot be interfaced with a controller,
are called \emph{residual} species; one of the residual species, $X_2$, 
is highlighted as a yellow triangle, while some of the other ones are displayed
in grey. 
A \emph{controller} network $\mathcal{R}_{\beta, \gamma}
= \mathcal{R}_{\beta} \cup \mathcal{R}_{\gamma}$ is also shown,
consisting of \emph{controlling} species $Y_1$ and $Y_2$
shown as a red circle and blue square, respectively,
whose predefined biochemical interactions are shown in black.
The controller consists of the \emph{core} 
$\mathcal{R}_{\beta} = \mathcal{R}_{\beta}(Y_1, Y_2)$, 
specifying how the controlling species interact among themselves, 
and the \emph{interface} $\mathcal{R}_{\gamma} = \mathcal{R}_{\gamma}(X_1, Y_1, Y_2)$, 
specifying how the controlling species $Y_1$ and $Y_2$ interact with the 
target species $X_1$. The composite network 
$\mathcal{R}_{\alpha,\beta,\gamma} = \mathcal{R}_{\alpha} \cup \mathcal{R}_{\beta, \gamma}$
is called an \emph{output} network. The particular controller $\mathcal{R}_{\beta, \gamma}$
displayed corresponds to the network~{\rm(\ref{eq:IFCnetapp})},
with $i = j = 1$, from {\rm Section~\ref{sec:nonlinear}}.}}  \label{fig:Control_Theory}
\end{figure}  

In context of electro-mechanical systems, accuracy robust to some uncertainties 
can be achieved via so-called \emph{integral-feedback controllers} (IFCs)~\cite{Control_theory}. 
Loosely speaking, IFCs dynamically calculate a time-integral of a difference (error) between
the target and actual values of the controlled variable. 
The error is then used to decrease (respectively, increase) the controlled variable
when it deviates above (respectively, below) its target value via a negative-feedback loop. 
However, IFCs implementable with electro-mechanical systems
are not necessarily implementable with biochemical reactions~\cite{Me_Homoclinic}. 
Central to this problem is the fact that the error takes both positive and negative values
and, therefore, cannot be directly represented as a nonnegative molecular abundance. 
In this context, a linear non-biochemical IFC has been mapped to 
a bimolecular one in~\cite{Biochemical_IFC}, which has been adapted in~\cite{Khammash} and called
the antithetic integral-feedback controller (AIFC). 

Performance of the AIFC has been largely studied
when unimolecular and/or lower-dimensional 
input networks are controlled~\cite{Khammash,Khammash2,AIFC_1,AIFC_2}; 
in contrast, intracellular networks are generally bimolecular and higher-dimensional
(challenges (N) and (HD) stated above).
For example, authors from~\cite{Khammash} analyze performance of the 
AIFC in context of controlling average copy-numbers of intracellular species at the stochastic level.
In this setting, in~\cite[Theorem~2]{Khammash}, the authors specify a class
of unimolecular input networks that can be controlled with the AIFC; 
in particular, to ensure stability, these input networks have to satisfy an algebraic constraint
given as~\cite[Equation~7]{Khammash}. This technical condition cannot generally be guaranteed to hold
as it not only depends on the rate coefficients of the controller, but also 
on the uncertain rate coefficients of a given input network. More precisely, \emph{affine} input networks, 
i.e. unimolecular input networks that contain one or more basal productions (zero-order reactions),
can violate condition~\cite[Equation~7]{Khammash}. In contrast, \emph{linear} input networks,
i.e. unimolecular networks with no basal production, always satisfy this condition.
To showcase the performance of the AIFC, the authors from~\cite{Khammash} 
put forward a gene-expression system as the input network, 
given by~\cite[Equation~9]{Khammash}, and demonstrate that the AIFC can 
arbitrarily control the average protein copy-number, and mitigate the
uncertainties in the input rate coefficients (challenge (U) stated above). 
However, this unconditional success 
arises because basal transcription is not included in the 
linear gene-expression input network~\cite[Equation~9]{Khammash},
which ensures that condition~\cite[Equation~7]{Khammash} always holds.
A similar choice of an input network without basal production
is put forward in~\cite{Khammash2}, where the AIFC is experimentally implemented.
The AIFC has also been analyzed in context of 
controlling species concentrations  at the deterministic level in~\cite{AIFC_1,AIFC_2};
however, the results are derived only for a restricted class of 
linear input networks that, due to lacking basal production,
unconditionally satisfy~\cite[Theorem~2]{Khammash}.

Questions of critical importance arise in context of controlling unimolecular networks:
When the AIFC is applied on affine input networks (unimolecular networks with basal production),  
how likely is control to fail? Are the consequences of control failures biochemically safe or hazardous~\cite{Burden}? 
Does there exist a molecular IFC with a better stability performance than the AIFC?
Such questions are of great importance when intracellular networks are controlled. 
In particular, for a fixed affine model of an intracellular network,
due to the uncertainties in experimental measurements of the underlying rate coefficients 
(challenge (U)), it is not possible to a-priori guarantee that the stability condition~\cite[Equation~7]{Khammash} holds. 
Furthermore, due to the uncertainties in the structure of intracellular networks, 
only approximate models are available that are obtained by neglecting a number
of underlying coupled molecular species and processes. Hence, even if~\cite[Equation~7]{Khammash}
holds for a less-detailed model of an intracellular network, there is no guarantee that 
this will remain true when a more-detailed model is used - a phenomenon we call 
\emph{phantom control}. In fact, in light of the challenges (N) and (HD), 
more-detailed models of most intracellular networks are nonlinear, 
so that condition~\cite[Equation~7]{Khammash} is inapplicable, and 
a question of fundamental importance to intracellular control is: 
How do molecular IFCs perform when applied to bimolecular and 
higher-dimensional input networks?

The objective of this paper is to address these questions. 
We show that at the center of all these issues are 
equilibria - stationary solutions of the reaction-rate equations (RREs)
that govern the deterministic dynamics of biochemical networks~\cite{Feinberg}. 
In particular, molecular concentrations can reach only equilibria that are nonnegative.
In this context, we show that IFCs can destroy all nonnegative equilibria of
the controlled system and lead to a control failure; furthermore, 
this failure can be catastrophic, as some of the molecular concentrations 
can then experience an unbounded increase with time (blow-up).
We call this hazardous phenomenon, involving absence of nonnegative
equilibria and blow-up of some of the underlying species abundances, 
a \emph{negative-equilibrium catastrophe} (NEC), which we outline in Figure~\ref{fig:cells}. 
To the best of our knowledge, NECs and the related challenges, which are the focus of this paper,
have not been previously analyzed in the literature. For example,
the only form of instability presented in~\cite{Khammash,AIFC_1,AIFC_2}
are bounded deterministic oscillations, which average out at the stochastic level 
and do not correspond to violation of condition~\cite[Equation~7]{Khammash}; 
in contrast, we show that this condition is violated when NECs occur.

The paper is organized as follows. 
In Section~\ref{sec:linear}, we prove that unimolecular IFCs do not exist due to a NEC. 
We then derive a class of bimolecular IFCs given by~(\ref{eq:IFCnetapp}) in Section~\ref{sec:nonlinear};
as a consequence of demanding in the derivation that the controlling variables are positive, 
we obtain IFCs that influence the target species both positively and negatively, 
in contrast to the AIFC that acts only positively. 
In Section~\ref{sec:first_order}, we apply different variants of 
these controllers on a unimolecular gene-expression network~(\ref{eq:input_1}), 
and then generalize the results. In particular, 
we show that the AIFC can lead to a NEC when applied to~(\ref{eq:input_1}),
both deterministically and stochastically; more broadly, we show that the AIFC does 
not generically operate safely when applied to unimolecular networks. 
Furthermore, we prove that there exists a two-dimensional 
(two-species) IFC that eliminates NECs when applied to any (arbitrarily large) 
stable unimolecular input network.
However, in Section~\ref{sec:second_order} we demonstrate that, without detailed information about the input systems,
 NECs generally cannot be prevented when bimolecular networks are controlled. 
In particular, we show that, as opposed to 
dimension-independent control of unimolecular networks, control of bimolecular networks suffers 
from the \emph{curse of dimensionality} - the problem becomes more challenging as the 
dimension of the input network increases. 
We conclude the paper by presenting a summary and discussion in Section~\ref{sec:discussion}.
Notation and background theory are introduced as needed in the paper, 
and are summarized in Appendices~\ref{app:background} and~\ref{app:biochemical_control}.
Rigorous proofs of the results presented in Sections~\ref{sec:linear}, \ref{sec:first_order}
and~\ref{sec:second_order} are provided in Appendices~\ref{app:nonexistence}--\ref{app:kinetictrans}, 
\ref{app:proof} and~\ref{app:biproof}, respectively.

\begin{figure}[!htbp]
\vskip  0cm
\centerline{
\hskip 1.0cm
\includegraphics[scale=0.5]{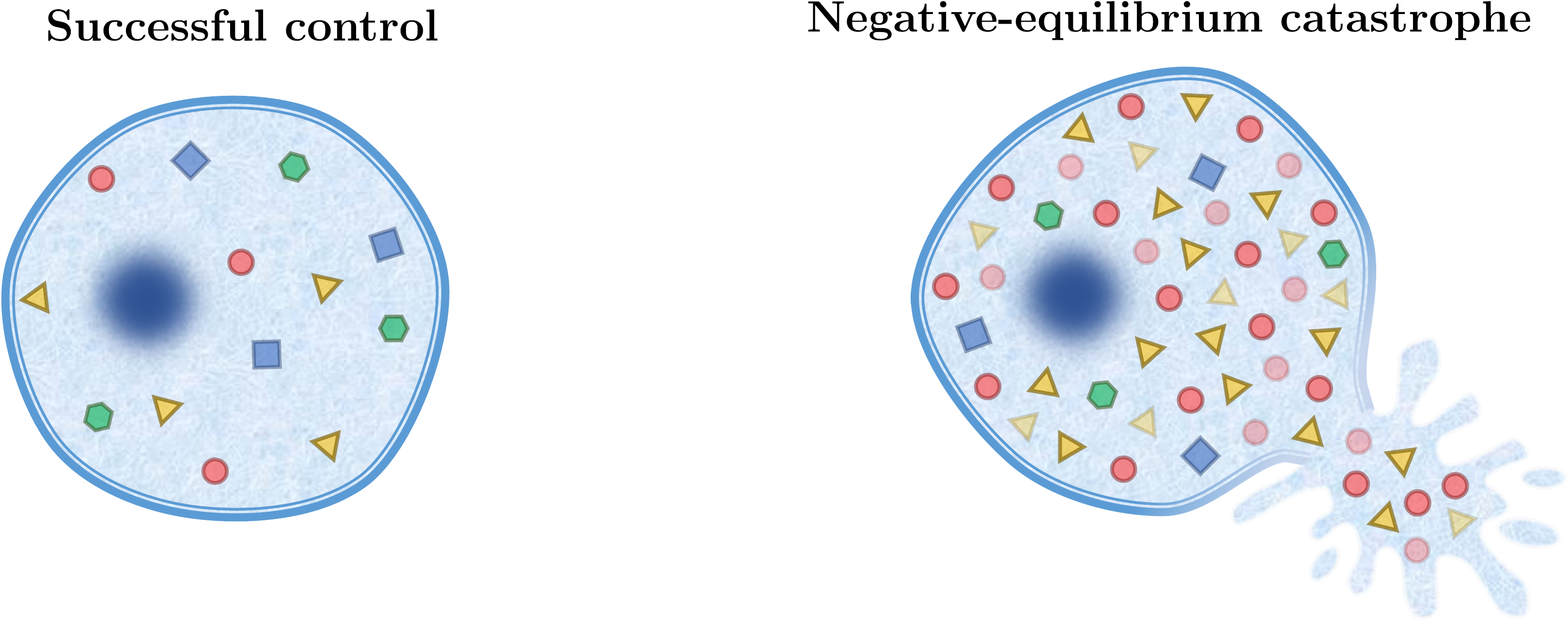}
}
\vskip -5.5cm
\leftline{\hskip -0.1cm (a) \hskip 9.0cm (c)}
\vskip 5.2cm
\centerline{
\includegraphics[width=0.55\columnwidth]{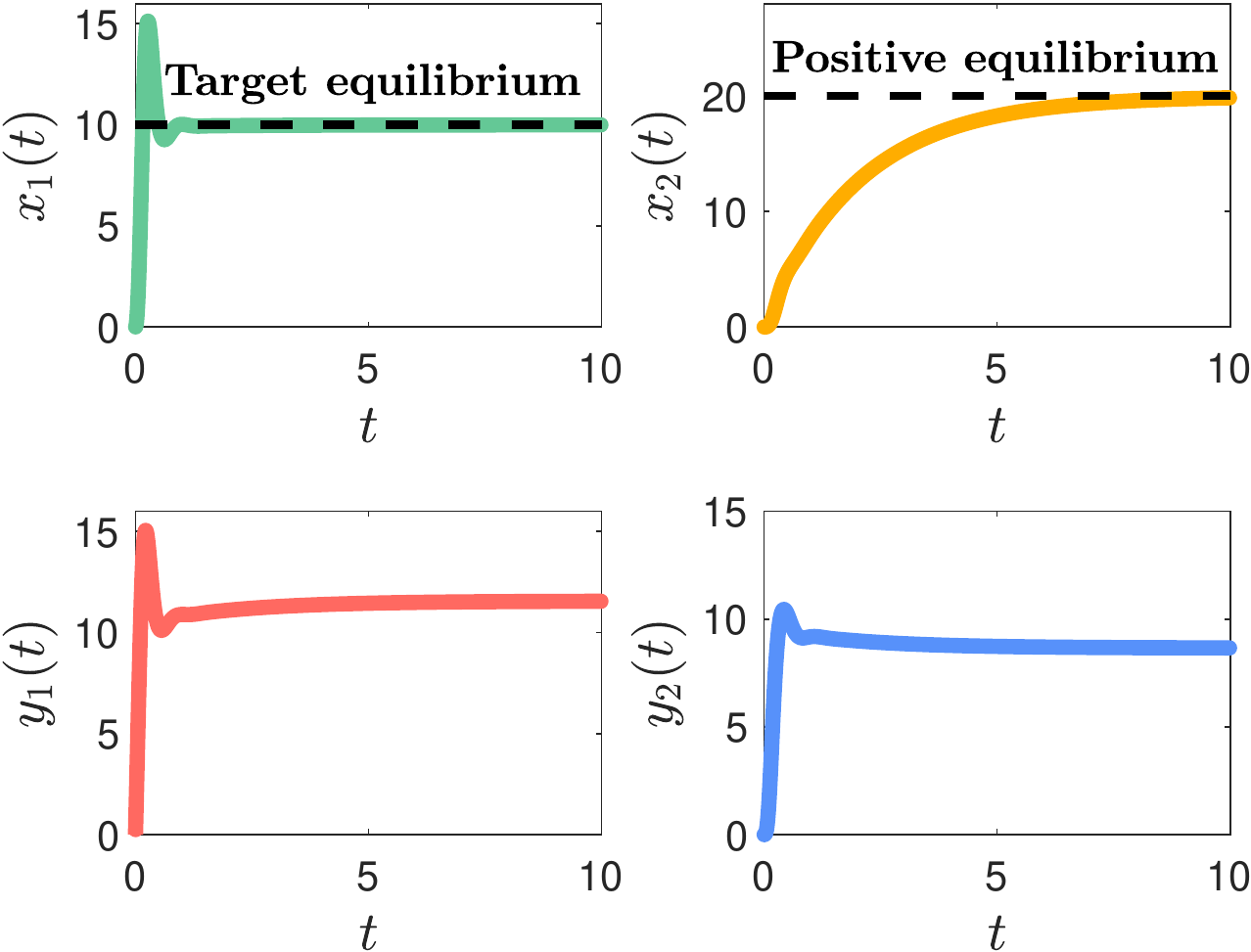}
\hskip 0.3cm
\includegraphics[width=0.55\columnwidth]{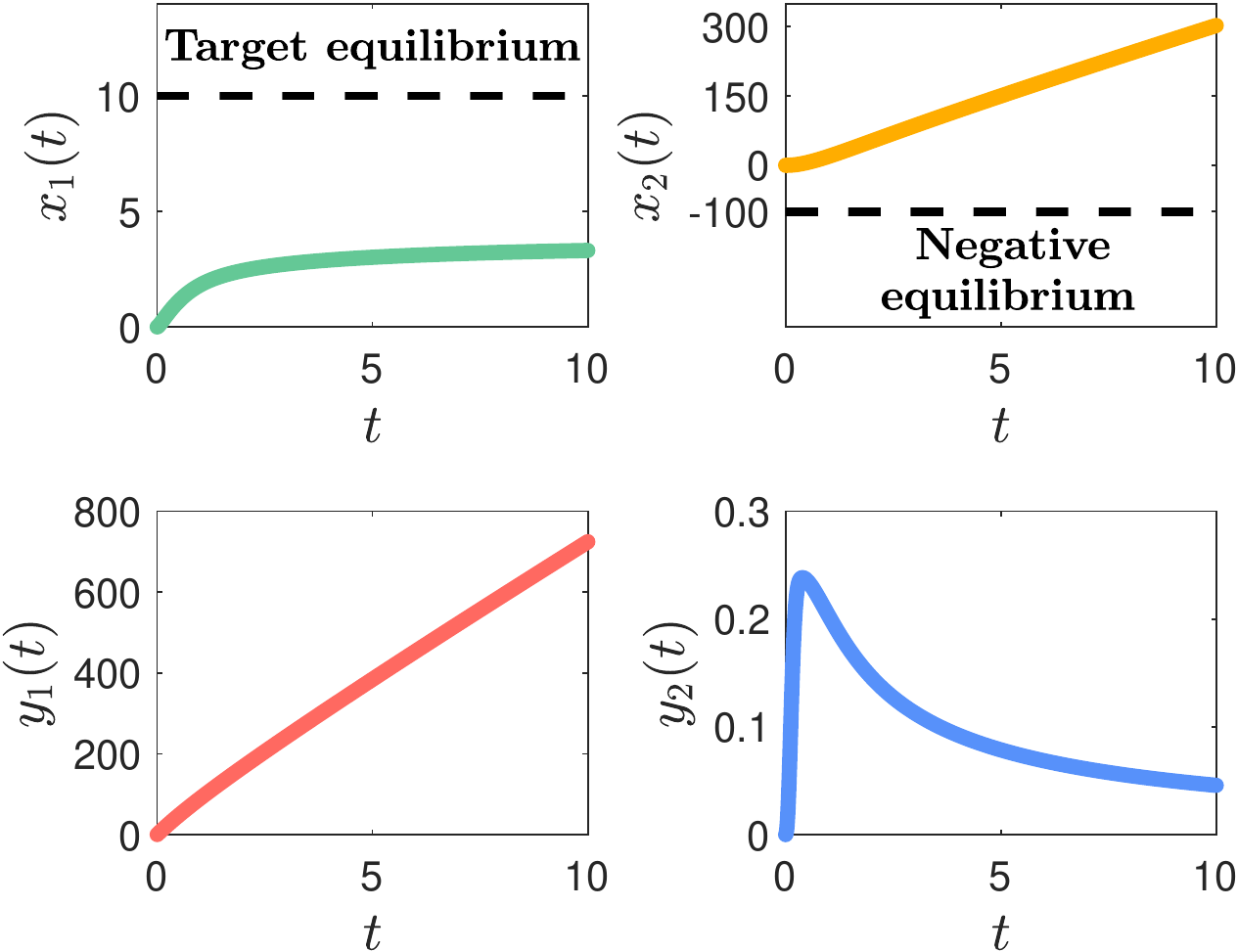}
}
\vskip -7.8cm
\leftline{\hskip -0.1cm (b) \hskip 9.0cm (d)}
\vskip 7.0cm
\caption{\it{\emph{Caricature representation of a successful and catastrophically failed
intracellular control}.
Panel~{\rm (a)} displays a cell successfully controlled with 
the {\rm IFC} in the setup shown in {\rm Figure~\ref{fig:Control_Theory}}. 
The time-evolution of the underlying species concentrations 
are shown in panel~{\rm (b)}. In particular, the target species $X_1$ approaches
a desired equilibrium, shown as a black dashed line, 
and the equilibrium for the residual species $X_2$ is positive. 
Panel~{\rm (c)} displays a cell that has taken lethal damage due to 
a failure of the {\rm IFC}. In particular, as shown in panel~{\rm (d)}, 
the target equilibrium for $X_1$ enforces a negative equilibrium for the residual 
species $X_2$. However, since molecular concentrations
are nonnegative, this equilibrium cannot be reached and, therefore, 
control fails. Furthermore, the failure is catastrophic, as
concentrations of some of the underlying species (in this example, species
$X_2$ and $Y_1$) blow up, placing a lethal burden on the cell.
Panels~{\rm (b)}, and~{\rm (d)}, are obtained by solving the reaction-rate equations
for the output network~{\rm (\ref{eq:IFC_positive_negative})$\cup$(\ref{eq:input_2})}
from {\rm Section~\ref{sec:second_order}} with the dimensionless coefficients 
$(\alpha_0, \alpha_1, \alpha_2, \alpha_3) = (1, 1,1/10, 3/2)$, 
$(\beta_0, \beta_1, \gamma_1, \gamma_2, \gamma_3) = (100, 1, 10, 10, 1)$, and with 
$(\alpha_0, \alpha_1, \alpha_2, \alpha_3) = (1, 25,2/5,3/2)$, 
$(\beta_0, \beta_1, \gamma_1, \gamma_2, \gamma_3) = (100, 1, 10,2/3, 1)$, 
respectively.}}  \label{fig:cells}
\end{figure}  

\newpage
\section{Nonexistence of unimolecular IFCs} \label{sec:linear}
In this section, we consider an arbitrary one-dimensional black-box input network 
$\mathcal{R}_{\alpha} = \mathcal{R}_{\alpha}(X_1)$, 
where $X_1$ is a single target species and there are no residual species, 
see also Figure~\ref{fig:Control_Theory} for a more general setup. 
In this paper, we assume all reaction networks are under
mass-action kinetics~\cite{Feinberg} with 
positive dimensionless rate coefficients, which
are displayed above or below the reaction arrows; we denote
the rate coefficients of $\mathcal{R}_{\alpha}$ by 
$\boldsymbol{\alpha} \in \mathbb{R}_{>}^a$, where $\mathbb{R}_{>}$
is the space of positive real numbers. 
In what follows, we say that a reaction network is unimolecular
(respectively, bimolecular) if it contains a reaction with 
one (respectively, two), but not more, reactants.

\textbf{Linear non-biochemical controller}. Let us consider a controller formally described by the network
$\mathcal{\bar{R}}_{\beta, \gamma} = \mathcal{\bar{R}}_{\beta}(\bar{Y}_1) \cup 
\mathcal{\bar{R}}_{\gamma}(X_1, \bar{Y}_1)$, given by
\begin{align}
\mathcal{\bar{R}}_{\beta}: \hspace{0.5cm} 
\varnothing & \xrightarrow[]{\beta_{0}} \bar{Y}_1, \nonumber \\
\mathcal{\bar{R}}_{\gamma}: \hspace{0.5cm} 
X_1 & \xrightarrow[]{\gamma_1} X_1 - \bar{Y}_1, \nonumber \\
\bar{Y}_1 & \xrightarrow[]{\gamma_2} \bar{Y}_1 + X_1.
\label{eq:controller_bar}
\end{align}
Here, $\mathcal{\bar{R}}_{\beta} = \mathcal{\bar{R}}_{\beta}(\bar{Y}_1)$ is the controller core, 
describing the internal dynamics of the controlling species $\bar{Y}_1$,
where the source $\varnothing$ denotes some species that are not explicitly modelled,
while $\mathcal{\bar{R}}_{\gamma} = \mathcal{\bar{R}}_{\gamma}(X_1, \bar{Y}_1)$
is the controller interface, specifying interactions between $\bar{Y}_1$ and the target
species $X_1$ from the input network, see also Figure~\ref{fig:Control_Theory}.
Let us denote abundances of species $\{X_1, \bar{Y}_1\}$ from the output network 
$\mathcal{R}_{\alpha} \cup \mathcal{\bar{R}}_{\beta, \gamma}$ at time
 $t \ge 0$ by $(x_1, \bar{y}_1) = (x_1(t), \bar{y}_1(t)) \in \mathbb{R}^2$.
At the deterministic level, formal reaction-rate equations (RREs)~\cite{Feinberg} read
\begin{align}
\frac{\mathrm{d} x_1}{\mathrm{d} t} & = f_1(x_1; \, \boldsymbol{\alpha}) + \gamma_2 \bar{y}_1, 
\hspace{1.0cm} x_1^* = \frac{\beta_0}{\gamma_1}, \nonumber \\
\frac{\mathrm{d} \bar{y}_1}{\mathrm{d} t} & = \beta_0 - \gamma_{1} x_1,
\hspace{2.2cm}
\bar{y}_1^*= - \gamma_2^{-1} f_1 \left(\frac{\beta_0}{\gamma_1}; \, \boldsymbol{\alpha}\right),
 \label{eq:RREs_bar}
\end{align}
where $f_1(x_1; \, \boldsymbol{\alpha})$ is an unknown function describing the 
dynamics of $\mathcal{R}_{\alpha}$, and
 $(x_1^*, \bar{y}_1^*) \in \mathbb{R}^2$ is the unique equilibrium of
the output network, obtained by solving the RREs with zero left-hand sides.
Assuming that $(x_1^*, \bar{y}_1^*)$ is globally stable, 
network~(\ref{eq:controller_bar}) is an IFC;
in particular, in this case, $x_1^* = (\beta_0/\gamma_1)$ is independent 
of the initial conditions and the input coefficients $\boldsymbol{\alpha}$.
However, controller~(\ref{eq:RREs_bar}) cannot be interpreted
as a biochemical reaction network. In particular, the term $(-\gamma_{1} x_1)$
in~(\ref{eq:RREs_bar}) induces a process graphically described by 
$X_1 \xrightarrow[]{\gamma_1} X_1 - \bar{Y}_1$ in~(\ref{eq:controller_bar}), 
which consumes species $\bar{Y}_1$ even when its abundance is zero.
Consequently, variables $(x_1, \bar{y}_1)$ may take negative values
and, therefore, cannot be interpreted as molecular concentrations~\cite{Me_Homoclinic}.

\textbf{Unimolecular controllers}. The only unimolecular analogue of the 
IFC~(\ref{eq:controller_bar}), that contains only one controlling species $Y_1$, is of the form
\begin{align}
\mathcal{R}_{\beta}: \hspace{0.5cm} 
\varnothing & \xrightarrow[]{\beta_{0}} Y_1, \nonumber \\
\mathcal{R}_{\gamma}: \hspace{0.5cm} 
X_1 & \xrightarrow[]{\gamma_1} X_1 + Y_1, \nonumber \\
Y_1 & \xrightarrow[]{\gamma_2} Y_1 + X_1.
\label{eq:controller_blowup}
\end{align}
The RREs and the equilibrium for the output network 
$\mathcal{R}_{\alpha} \cup \mathcal{R}_{\beta, \gamma}$ are given by 
\begin{align}
\frac{\mathrm{d} x_1}{\mathrm{d} t} & = f_1(x_1; \, \boldsymbol{\alpha}) + \gamma_2 y_1, 
\hspace{1.0cm} x_1^* = -\frac{\beta_0}{\gamma_1}, 
\nonumber \\
\frac{\mathrm{d} y_1}{\mathrm{d} t} & = \beta_0 + \gamma_{1} x_1,
\hspace{2.2cm}
y_1^*=  - \gamma_2^{-1} f_1 \left(-\frac{\beta_0}{\gamma_1}; \, \boldsymbol{\alpha}\right).
 \label{eq:RREs_blowup}
\end{align}
Given nonnegative initial conditions, variables $(x_1, y_1)$ from~(\ref{eq:RREs_blowup}) are confined 
to the nonnegative quadrant $\mathbb{R}_{\ge}^2$, and represent biochemical concentrations.
However, the $x_1$-component of the equilibrium from~(\ref{eq:RREs_blowup}) is negative
and, therefore, not reachable by the controlled system. Furthermore, 
$y_1$ is a monotonically increasing 
function of time, $\mathrm{d} y_1/\mathrm{d} t > 0$, 
i.e. $y_1$ blows up. We call this phenomenon 
a deterministic \emph{negative-equilibrium catastrophe} (NEC),
see also Appendix~\ref{app:background}.
Network~(\ref{eq:controller_blowup})
not only fails to achieve control, but it introduces an unstable
species and is, hence, biochemically hazardous. 
In Appendix~\ref{app:nonexistence}, we prove that a NEC
occurs at both deterministic and stochastic levels 
for any candidate unimolecular IFC, which we state as the following theorem.

\begin{theorem}\label{theorem:linear_nonexist}
There does not exist a unimolecular integral-feedback controller.
\end{theorem}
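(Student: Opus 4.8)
\noindent\emph{Proof plan.}
The strategy is to parametrise every unimolecular controller, turn the two defining demands on an IFC --- robustness of the target $x_1^*$ to the black-box input and global stability of the closed loop --- into algebraic conditions on the reaction-rate equations, and show that these conditions force $x_1^*\le 0$ together with an unbounded controlling species. Denote the controlling species by $Y_1,\dots,Y_m$ and the output concentrations by $(x_1,\mathbf{y})\in\mathbb{R}_{\ge}^{m+1}$. Each controller reaction has at most one reactant, so every controller rate term is affine in the concentrations, and mass-action kinetics keeps $\mathbb{R}_{\ge}^{m+1}$ forward-invariant; collecting the $x_1$-only terms into a single function $F(x_1;\boldsymbol{\alpha})$, the output RREs must then take the form
\[
\frac{\mathrm{d} x_1}{\mathrm{d} t}=F(x_1;\boldsymbol{\alpha})+\mathbf{d}^{\top}\mathbf{y},
\qquad
\frac{\mathrm{d}\mathbf{y}}{\mathrm{d} t}=\mathbf{a}+\mathbf{b}\,x_1+C\mathbf{y},
\]
with $\mathbf{a},\mathbf{b},\mathbf{d}\in\mathbb{R}_{\ge}^{m}$ and $C$ Metzler (nonnegative off-diagonal entries); the case $m=1$, $C=0$ recovers the controller~(\ref{eq:controller_blowup}). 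The crucial sign constraint is that $x_1$ cannot enter $\dot{\mathbf{y}}$ with a negative coefficient, since consuming some $Y_i$ at a rate proportional to $x_1$ would require the bimolecular reactant pair $X_1+Y_i$ or the non-biochemical term of~(\ref{eq:controller_bar}); by the same token the controller acts on $x_1$ only through the nonnegative term $\mathbf{d}^{\top}\mathbf{y}$.

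Now suppose such a controller is an IFC. Because $F$ enters only the $\dot x_1$ equation, $x_1^*$ can be independent of $\boldsymbol{\alpha}$ only if $\dot{\mathbf{y}}=0$ already pins $x_1$: if $C$ were invertible, $\mathbf{y}$ would be slaved to $x_1$ and $x_1^*$ would solve an equation containing $F$; and if $\mathbf{b}$ were orthogonal to the whole left-null space of $C$, the equations $\dot{\mathbf{y}}=0$ would still leave $x_1$ free, so again $x_1^*$ would depend on $F$. Hence $C$ is singular and has a left-null vector $\boldsymbol{\mu}\neq\mathbf{0}$ with $\boldsymbol{\mu}^{\top}\mathbf{b}\neq 0$; then $z:=\boldsymbol{\mu}^{\top}\mathbf{y}$ satisfies the integrator equation $\dot z=\boldsymbol{\mu}^{\top}\mathbf{a}+(\boldsymbol{\mu}^{\top}\mathbf{b})\,x_1$, and every equilibrium of the output network has $x_1=x_1^*=-\boldsymbol{\mu}^{\top}\mathbf{a}/(\boldsymbol{\mu}^{\top}\mathbf{b})$, which is the integral-feedback property.

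The last step uses stability. Since $x_1(t)\ge 0$, the $\mathbf{y}$-subsystem $\dot{\mathbf{y}}=C\mathbf{y}+(\mathbf{a}+\mathbf{b}\,x_1(t))$ carries a forcing that cannot stabilise it, so global stability of the closed loop forces every eigenvalue of $C$ to have nonpositive real part, while the integral mode supplies the eigenvalue $0$ (necessarily semisimple, else $\mathbf{y}$ grows secularly). Thus the spectral abscissa of the Metzler matrix $C$ is $0$, and applying the Perron--Frobenius theorem to $C$ and to its (also Metzler) transpose shows the left-null vector may be taken componentwise nonnegative, $\boldsymbol{\mu}\ge\mathbf{0}$. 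With $\mathbf{a},\mathbf{b}\ge\mathbf{0}$ this gives $x_1^*=-\boldsymbol{\mu}^{\top}\mathbf{a}/(\boldsymbol{\mu}^{\top}\mathbf{b})\le 0$: no unimolecular controller can impose a positive target. Moreover, if $\boldsymbol{\mu}^{\top}\mathbf{a}>0$ then $\dot z\ge\boldsymbol{\mu}^{\top}\mathbf{a}>0$ everywhere on $\mathbb{R}_{\ge}^{m+1}$, so $z$ --- hence some $Y_i$ --- blows up while the output network has no nonnegative equilibrium: precisely a NEC; the borderline case $\boldsymbol{\mu}^{\top}\mathbf{a}=0$ forces $x_1^*=0$, i.e.\ ``control'' only by extinguishing the target, which I would not count as an IFC either. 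The same drift structure handles the stochastic level: all controller propensities are affine in the copy numbers, so the mean copy-number equations close exactly, $\tfrac{\mathrm{d}}{\mathrm{d} t}\mathbb{E}[z]$ is bounded below by a positive constant, $\mathbb{E}[z(t)]\to\infty$, the chemical master equation has no stationary distribution, and the copy number of some $Y_i$ diverges.

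I expect the main obstacle to be the degenerate linear-algebraic cases glossed over above. When the zero eigenvalue of $C$ is not simple the left-null space is higher-dimensional, and one must check that \emph{every} integral mode consistent with the existence of equilibria --- not just the Perron one --- yields $x_1^*\le 0$; when $C$ is reducible the Perron left-null vector may have zero entries, so the ``no equilibrium plus blow-up'' conclusion has to be localised to the communicating class on which the reduced dynamics is eventually trapped. Secondary points are pinning down ``integral-feedback controller'' precisely --- the admissible input class and the exact stability notion --- so that the implications ``not robust $\Rightarrow$ not an IFC'' and ``not stable $\Rightarrow$ not an IFC'' are airtight, and upgrading ``$\mathbb{E}[z]\to\infty$'' to almost-sure transience of the underlying Markov chain via a Foster--Lyapunov argument.
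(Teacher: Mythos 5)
Your proposal is correct and follows essentially the same route as the paper's proof in Appendix~\ref{app:nonexistence}: there, robustness is likewise shown to force the controller's internal (Metzler) matrix $\bar{C}^{2,2}$ to be singular with nonpositive spectrum and a nonzero coupling column from the target species (Lemma~\ref{lemma:linear_intfed}), after which a nonnegative/positive left-null vector from the Perron--Frobenius theory of Metzler matrices yields a monotonically increasing integrator variable, hence a negative target equilibrium and a deterministic and stochastic blow-up. Even your flagged loose ends (reducibility of $C$, non-simple zero eigenvalue) are handled in the paper with the same brevity, so the two arguments match in both substance and level of rigor.
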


\begin{proof}
See Appendix~\ref{app:nonexistence}.
\end{proof}

To the best of our knowledge, Theorem~\ref{theorem:linear_nonexist}
has not been previously reported in the literature. 
A related result is presented in~\cite[Proposition S2.7]{Khammash2}
and states that a molecular controller $\mathcal{R}_{\beta} \cup \mathcal{R}_{\gamma}$, 
satisfying a set of assumptions, including the assumption that the interface 
$\mathcal{R}_{\gamma}$ contains only catalytic reactions,
is a molecular IFC only if the core $\mathcal{R}_{\beta}$ contains a 
bimolecular degradation. No such assumptions
have been made in Theorem~\ref{theorem:linear_nonexist}, 
which holds for all unimolecular networks; 
in particular, we allow interface $\mathcal{R}_{\gamma}$ 
to contain non-catalytic reactions, such as $Y_i \to X_j$
and $X_i \to Y_j$.

\section{Design of bimolecular IFCs} \label{sec:nonlinear}
Theorem~\ref{theorem:linear_nonexist} implies that only 
bimolecular (and higher-molecular) biochemical networks 
may exert integral-feedback control. An approach to finding such 
networks is to map non-biochemical IFCs into biochemical networks,
while preserving the underlying integral-feedback structure.
This task can be achieved using special mappings
called \emph{kinetic transformations}~\cite{Me_Homoclinic}.
Let us consider the non-biochemical system~(\ref{eq:RREs_bar}).
The first step in bio-transforming~(\ref{eq:RREs_bar}) is to translate
relevant trajectories $(x_1, \bar{y}_1)$ into the nonnegative quadrant. However, 
since $\mathcal{R}_{\alpha}(X_1)$ is a black-box network,
i.e. $f_1(x_1; \, \boldsymbol{\alpha})$ is unknown and unalterable, 
only $\bar{y}_1$ can be translated; to this end, 
we define a new variable $y_1 \equiv (\bar{y}_1 + T)$, 
with translation $T > 0$, under which~(\ref{eq:RREs_bar}) becomes
\begin{align}
\frac{\mathrm{d} x_1}{\mathrm{d} t} & = f_1(x_1; \, \boldsymbol{\alpha}) + \gamma_2 y_1 - \gamma_2 T, 
\hspace{1.0cm} x_1^* = \frac{\beta_0}{\gamma_1}, \nonumber \\
\frac{\mathrm{d} y_1}{\mathrm{d} t} & = \beta_0 - \gamma_{1} x_1, 
\hspace{3.3cm}
y_1^*= - \gamma_2^{-1} f_1 \left(\frac{\beta_0}{\gamma_1}; \, \boldsymbol{\alpha}\right) + T.
 \label{eq:RREs_debar1}
\end{align} 
Terms $(-\gamma_1 x_1)$ and $(-\gamma_2 T)$, called \emph{cross-negative} terms~\cite{Me_Homoclinic}, 
do not correspond to biochemical reactions and, therefore, must be eliminated. 
Let us note that cross-negative terms also play a central role
in the questions of existence of other fundamental phenomena in biochemistry, such as
oscillations, multistability and chaos~\cite{Me_Homoclinic,Me_Limitcycles}.
Term $(-\gamma_1 x_1)$ can be eliminated with the so-called hyperbolic kinetic transformation,
presented in Appendix~\ref{app:kinetictrans}, which involves introducing an additional 
controlling species $Y_2$ and extending system~(\ref{eq:RREs_debar1}) into
\begin{align}
\frac{\mathrm{d} x_1}{\mathrm{d} t} & = f_1(x_1; \, \boldsymbol{\alpha}) + \gamma_2 y_1 - \gamma_2 T,
\hspace{1.0cm} x_1^* = \frac{\beta_0}{\gamma_1}, \nonumber \\
\frac{\mathrm{d} y_1}{\mathrm{d} t} & = \beta_0 - \beta_{1} y_1 y_2, 
\hspace{2.9cm} y_1^*= - \gamma_2^{-1} f_1 \left(\frac{\beta_0}{\gamma_1}; \, \boldsymbol{\alpha}\right) + T, \nonumber \\
\frac{\mathrm{d} y_2}{\mathrm{d} t} & =  \gamma_{1} x_1 - \beta_{1} y_1 y_2, 
\hspace{2.5cm} y_2^*= \frac{\beta_0}{\beta_1}(y_1^*)^{-1}.
 \label{eq:RREs_debar2}
\end{align}
Note that~(\ref{eq:RREs_debar1}) and~(\ref{eq:RREs_debar2}) have identical
equilibria (time-independent solutions) for the species $X_1$ and $Y_1$, and
that the equilibria for the species $Y_1$ and $Y_2$ have a hyperbolic relationship; 
furthermore, provided $\beta_1$ is sufficiently large, time-dependent solutions of~(\ref{eq:RREs_debar1}) 
and~(\ref{eq:RREs_debar2}) are close as well, see Appendix~\ref{app:kinetictrans}.
On the other hand, cross-negative term $(-\gamma_2 T)$ can be eliminated via
multiplication with $x_1$ and any other desired factor; such operations
do not influence the $x_1$-equilibrium, which is determined solely 
by the RREs for $y_1$ and $y_2$, and which we want to preserve.
One option is to simply map $(-\gamma_2 T)$ to $(-\gamma_2 T x_1)$,
and take $T$ large enough to ensure that the $y_1^*$-equilibrium is positive;
however, this approach requires the knowledge of $f_1(x_1; \, \boldsymbol{\alpha})$.
A more robust approach is to map $(-\gamma_2 T)$ to $(-\gamma_2 T x_1 y_2)$, 
under which, defining $\gamma_3 \equiv \gamma_2 T$, one obtains 
\begin{align}
\frac{\mathrm{d} x_1}{\mathrm{d} t} & = f_1(x_1; \, \boldsymbol{\alpha}) + \gamma_2 y_1 - \gamma_3 x_1 y_2, 
\hspace{1.0cm} x_1^* = \frac{\beta_0}{\gamma_1}, \nonumber \\
\frac{\mathrm{d} y_1}{\mathrm{d} t} & = \beta_0 - \beta_{1} y_1 y_2, 
\hspace{3.5cm} 0 = (y_1^*)^2 + \left[\gamma_2^{-1} 
f_1 \left(\frac{\beta_0}{\gamma_1}; \, \boldsymbol{\alpha}\right) \right] y_1^*
- \left(\frac{\gamma_3}{\gamma_1 \gamma_2} \frac{\beta_0^2}{\beta_1} \right),
\nonumber \\
\frac{\mathrm{d} y_2}{\mathrm{d} t} & =  \gamma_{1} x_1 - \beta_{1} y_1 y_2, 
\hspace{3.0cm} y_2^*= \frac{\beta_0}{\beta_1}(y_1^*)^{-1}.
 \label{eq:RREs_debar3}
\end{align}
The quadratic equation for $y_1^*$ from~(\ref{eq:RREs_debar3}) always has one positive solution;
therefore, there always exists an equilibrium with positive $y_1$- and $y_2$-components.

In what follows, we largely consider input networks $\mathcal{R}_{\alpha}(\mathcal{X})$ with 
at most two target species $\{X_1, X_2\}$, and focus on controlling $X_1$ with the bimolecular 
controllers induced by~(\ref{eq:RREs_debar3}), given by
\begin{align}
\mathcal{R}_{\beta}(Y_1, Y_2): \;
& & \varnothing & \xrightarrow[]{\beta_0} Y_1, \nonumber \\
& & Y_1 + Y_2 & \xrightarrow[]{\beta_1} \varnothing, \nonumber \\
\mathcal{R}_{\gamma}^{0}(Y_2; \, X_1): \;
& & X_1 & \xrightarrow[]{\gamma_{1}} X_1 + Y_2, \nonumber \\
\mathcal{R}_{\gamma}^{+}(X_i; \, Y_1): \;
& & Y_1 & \xrightarrow[]{\gamma_{2}} X_i + Y_1, \; \; \; \; \; \; \textrm{for some } i \in \{1, 2\}, \nonumber \\
\mathcal{R}_{\gamma}^{-}(X_j; \, Y_2): \;
& & X_j + Y_2 & \xrightarrow[]{\gamma_{3}} Y_2, \hspace{1.5cm} \textrm{for some } j \in \{1, 2\}.
\label{eq:IFCnetapp}
\end{align}
In particular, the controller core $\mathcal{R}_{\beta}(Y_1, Y_2)$ consists
of a production of $Y_1$ from a source, and a bimolecular degradation of $Y_1$ and $Y_2$. 
On the other hand, the controller interface consists of the unimolecular reactions
 $\mathcal{R}_{\gamma}^0(Y_2; \, X_1)$, and $\mathcal{R}_{\gamma}^{+}(X_i; \, Y_1)$, that 
produce $Y_2$ catalytically in $X_1$, and $X_i$ catalytically in $Y_1$, respectively, and 
the bimolecular reaction  $\mathcal{R}_{\gamma}^{-}(X_j; \, Y_2)$
that degrades a target species $X_j$ catalytically in $Y_2$. 
We call reactions $\mathcal{R}_{\gamma}^{+}(X_i; \, Y_1)$
and $\mathcal{R}_{\gamma}^{-}(X_j, Y_2)$ \emph{positive}
and \emph{negative} interfacing, respectively. Furthermore, we say that 
positive (respectively, negative) interfacing is \emph{direct} 
if $i = 1$ (respectively, if $j = 1$), i.e. if it is applied directly to the controlled
species $X_1$; otherwise, the interfacing is said to be \emph{indirect}. 
In Figure~\ref{fig:Control_Theory}, we display controller~(\ref{eq:IFCnetapp}) with direct positive and 
negative interfacing applied to an input network with a single target species $X_1$.

As shown in this section, positive and negative interfacing arise naturally when molecular IFCs 
are designed using the theoretical framework from~\cite{Me_Homoclinic}.
It is interesting to note that the ``housekeeping" sigma/anti-sigma system in \emph{E. coli}, 
proposed to implement integral control~\cite{Khammash}, has been experimentally shown to be capable 
of exhibiting both positive and negative transcriptional control, 
at least when hijacked by bacteriophage~\cite{Sigma}.
Let us note that the AIFC from~\cite{Khammash} is of the form~(\ref{eq:IFCnetapp}), 
but it lacks negative interfacing $\mathcal{R}_{\gamma}^{-}(X_j; \, Y_2)$.
In view of the derivation from this section, the AIFC is missing a key designing step, 
namely the translation from~(\ref{eq:RREs_debar1});
consequently, NECs may occur due to $y_1^*$- and $y_2^*$-equilibria being negative.
Let us also note that the negative interfacing $\mathcal{R}_{\gamma}^{-}(X_j; \, Y_2)$
has also been considered in~\cite{Khammash3}, where this reaction is shown to be capable of eliminating
oscillations at the deterministic level, and reducing variance at the stochastic level, 
for a particular gene-expression input network. In contrast, in this section, 
we have systematically derived reaction $\mathcal{R}_{\gamma}^{-}(X_j; \, Y_2)$ in order to ensure that a positive
equilibrium for $Y_1$ and $Y_2$ exists. Such matters are not discussed
 in~\cite{Khammash3}, where basal transcription is set to zero in the gene-expression input network considered
and, therefore, negative equilibria are not encountered.

\section{Control of unimolecular input networks} \label{sec:first_order}
In this section, we study performance of the IFCs~(\ref{eq:IFCnetapp})
when applied on unimolecular input networks. To this end, let us consider the 
input network $\mathcal{R}_{\alpha}^1 = \mathcal{R}_{\alpha}^1(X_1, X_2)$, given by
\begin{align}
\mathcal{R}_{\alpha}^1(X_1, X_2):
& & \varnothing & \xrightleftharpoons[\alpha_{1}]{\alpha_{0}} X_2, \hspace{0.3cm}
X_2 \xrightarrow[]{\alpha_{2}} X_1 + X_2, \hspace{0.3cm}
X_1 \xrightarrow[]{\alpha_{3}} \varnothing. 
\label{eq:input_1}
\end{align}
We interpret~(\ref{eq:input_1}) as a two-dimensional reduced (simplified) model of
a higher-dimensional gene-expression network.
In this context, $X_1$ is a degradable protein species
that is produced via translation from 
a degradable mRNA species $X_2$, which is transcribed from a gene; 
some of the ``hidden" species (dimensions), that are not explicitly modelled, 
such as genes, transcription factors and waste molecules,
 are denoted by $\varnothing$. See also Figure~\ref{fig:genetic}(a)
for a schematic representation of network~(\ref{eq:input_1}).
The RREs of~(\ref{eq:input_1}) have 
a unique globally stable equilibrium given by
\begin{align}
x_1^{**} & = \frac{\alpha_0 \alpha_2}{\alpha_1 \alpha_3}, 
\hspace{0.5cm}
x_2^{**} = \frac{\alpha_0}{\alpha_1}. \label{eq:input_eq}
\end{align}
The goal in this section is to control the equilibrium concentration 
of the protein species $X_1$ at the deterministic level, and its average
copy-number at the stochastic level. To this end, we embed different variants of the 
controller~(\ref{eq:IFCnetapp}) into~(\ref{eq:input_1}).

\begin{figure}[!htbp]
\vskip  0.4cm
\centerline{
\includegraphics[width=0.47\columnwidth]{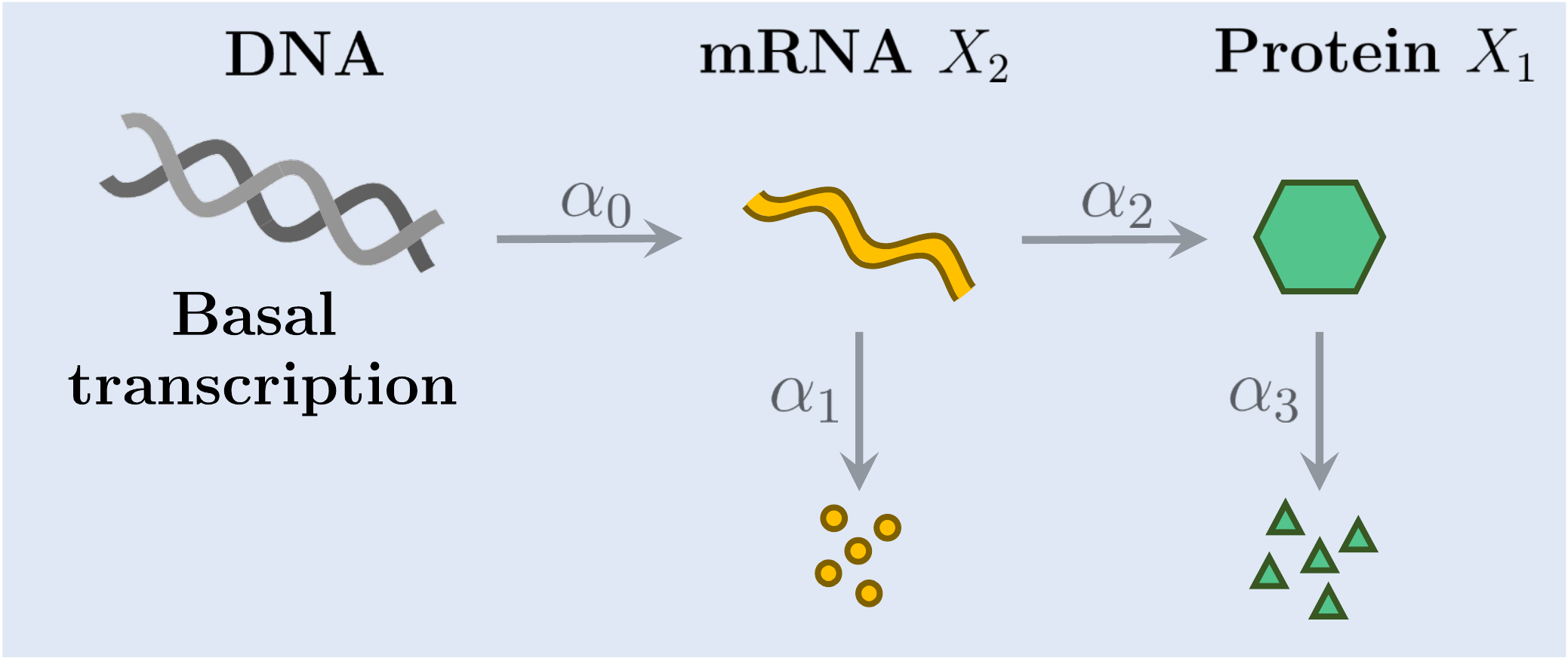}
\hskip 1.3cm
\includegraphics[width=0.47
\columnwidth]{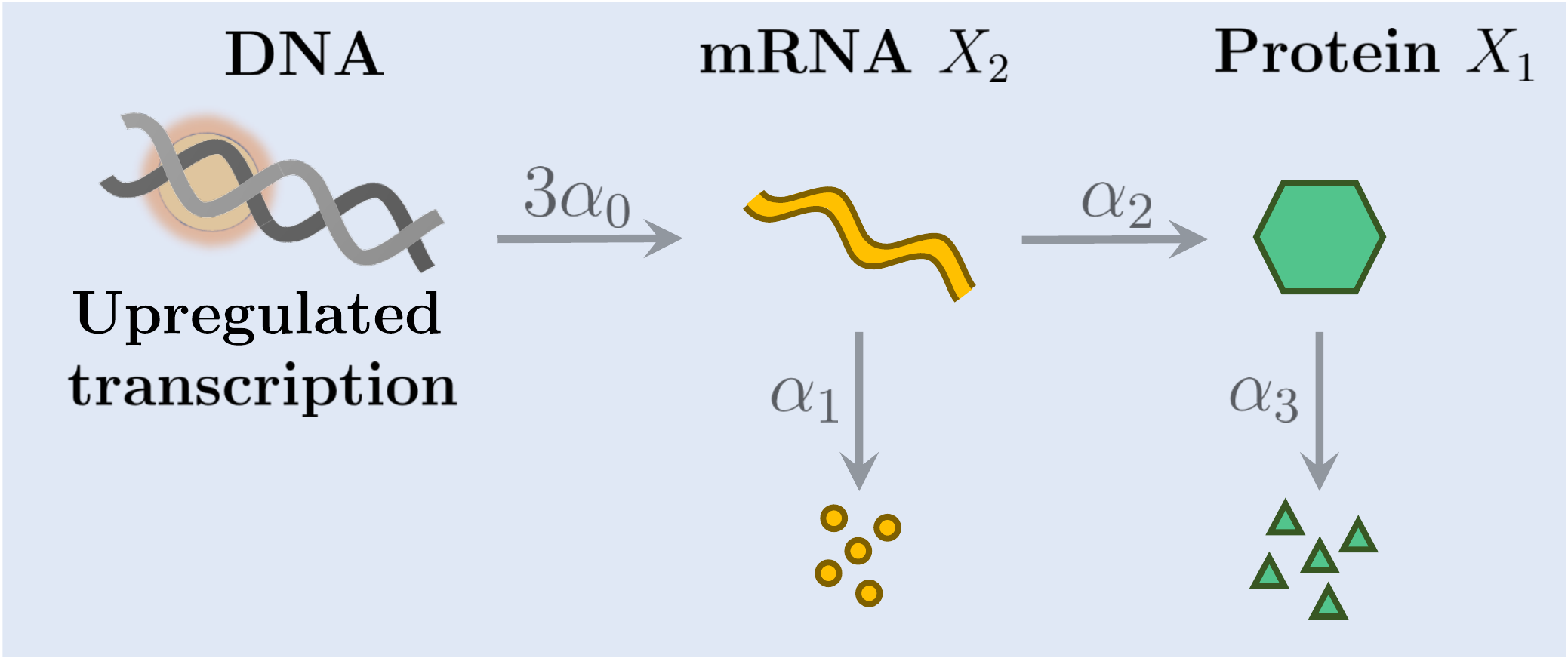}
}
\vskip -3.9cm
\leftline{\hskip -0.15cm (a) \hskip 8.55cm (b)}
\vskip 3.2cm
\caption{\it{\emph{Schematic representation of the gene-expression input network~{\rm(\ref{eq:input_1})}}.
Panel~{\rm (a)} displays~{\rm(\ref{eq:input_1})} with basal transcription rate
$\alpha_0$. Panel~{\rm (b)} displays network~{\rm(\ref{eq:input_1})} with tripled 
effective transcription rate, $3 \alpha_0$, arising when an activating transcription factor binds
to the underlying gene promoter.}}  \label{fig:genetic}
\end{figure}  

\textbf{Pure positive interfacing}. Let us first consider controller~(\ref{eq:IFCnetapp}) 
with only positive interfacing, i.e. the AIFC from~\cite{Khammash}.
We denote the controller by $\mathcal{R}_{\beta, \gamma}^{+} \equiv 
\mathcal{R}_{\beta} \cup \mathcal{R}_{\gamma}^0 \cup \mathcal{R}_{\gamma}^{+}$
and, for simplicity, assume that interfacing is direct:
\begin{align}
\mathcal{R}_{\beta}(Y_1, Y_2): \;
& & \varnothing & \xrightarrow[]{\beta_0} Y_1, \nonumber \\
& & Y_1 + Y_2 & \xrightarrow[]{\beta_1} \varnothing, \nonumber \\
\mathcal{R}_{\gamma}^{0}(Y_2; \, X_1): \;
& & X_1 & \xrightarrow[]{\gamma_{1}} X_1 + Y_2, \nonumber \\
\mathcal{R}_{\gamma}^{+}(X_1; \, Y_1): \;
& & Y_1 & \xrightarrow[]{\gamma_{2}} X_1 + Y_1.
\label{eq:IFC_positive}
\end{align}
The RREs for the output network~(\ref{eq:input_1})$\cup$(\ref{eq:IFC_positive})
 are given by 
\begin{align}
\frac{\mathrm{d} x_1}{\mathrm{d} t} & = \left(\alpha_2 x_2 - \alpha_3 x_1 \right) + \gamma_2 y_1, 
\hspace{0.8cm}
\frac{\mathrm{d} x_2}{\mathrm{d} t} = \alpha_0 - \alpha_1 x_2, \nonumber \\
\frac{\mathrm{d} y_1}{\mathrm{d} t} & = \beta_0 - \beta_{1} y_1 y_2, 
\hspace{2.4cm}
\frac{\mathrm{d} y_2}{\mathrm{d} t} =  \gamma_{1} x_1 - \beta_{1} y_1 y_2,
 \label{eq:RREs_plus}
\end{align}
with the unique equilibrium
\begin{align}
x_1^* & = \frac{\beta_0}{\gamma_1}, 
\hspace{0.5cm}
x_2^* = \frac{\alpha_0}{\alpha_1}, 
\hspace{0.5cm}
y_1^* = \frac{\alpha_3}{\gamma_2} \left( \frac{\beta_0}{\gamma_1} 
-  \frac{\alpha_0 \alpha_2}{\alpha_1 \alpha_3}\right), 
\hspace{0.5cm}
y_2^* = \frac{\beta_0}{\beta_1} (y_1^*)^{-1}.
\label{eq:output_eq_1}
\end{align}
As anticipated in Section~\ref{sec:nonlinear}, the AIFC can lead to
equilibria with negative $y_1$- and $y_2$-components. In particular,
equation~(\ref{eq:output_eq_1}) implies that the output nonnegative equilibrium
is destroyed when $x_1^* < x_1^{**}$ (equivalently, 
when $\beta_0/\gamma_1 < \alpha_0 \alpha_2/(\alpha_1 \alpha_3)$).
Hence, using only positive interfacing, it is not possible to achieve 
an output equilibrium below the input one.
To determine the dynamical behavior of~(\ref{eq:input_1})$\cup$(\ref{eq:IFC_positive})
when the nonnegative equilibrium ceases to exist, let us consider the linear combination 
of species concentration $(\alpha_3^{-1} x_1 + \alpha_1^{-1} \alpha_2 \alpha_3^{-1} x_2 
+ \gamma_1^{-1} (y_2 - y_1))$ that, using~(\ref{eq:RREs_plus}), satisfies
\begin{align}
\frac{\mathrm{d}}{\mathrm{d} t} \left(\frac{1}{\alpha_3} x_1 + \frac{\alpha_2}{\alpha_1 \alpha_3} x_2 
+ \frac{1}{\gamma_1} (y_2 - y_1) \right) & = - \left(  \frac{\beta_0}{\gamma_1} 
-  \frac{\alpha_0 \alpha_2}{\alpha_1 \alpha_3}\right) + \frac{\gamma_2}{\alpha_3} y_1
\ge - \left(  \frac{\beta_0}{\gamma_1} 
-  \frac{\alpha_0 \alpha_2}{\alpha_1 \alpha_3}\right). 
 \label{eq:blowup_plus}
\end{align}
When $\beta_0/\gamma_1 < \alpha_0 \alpha_2/(\alpha_1 \alpha_3)$, 
equation~(\ref{eq:blowup_plus}) implies that a species concentration 
blows up for all nonnegative initial conditions, i.e. the output network 
displays a deterministic NEC; by applying identical argument to the first-moment equations, 
it follows that a stochastic NEC occurs as well. This result 
is summarized as a bifurcation diagram in Figure~\ref{fig:linear}(a).

\begin{figure}[!htbp]
\vskip  -2.5cm
\centerline{
\hskip 1mm
\includegraphics[width=0.4\columnwidth]{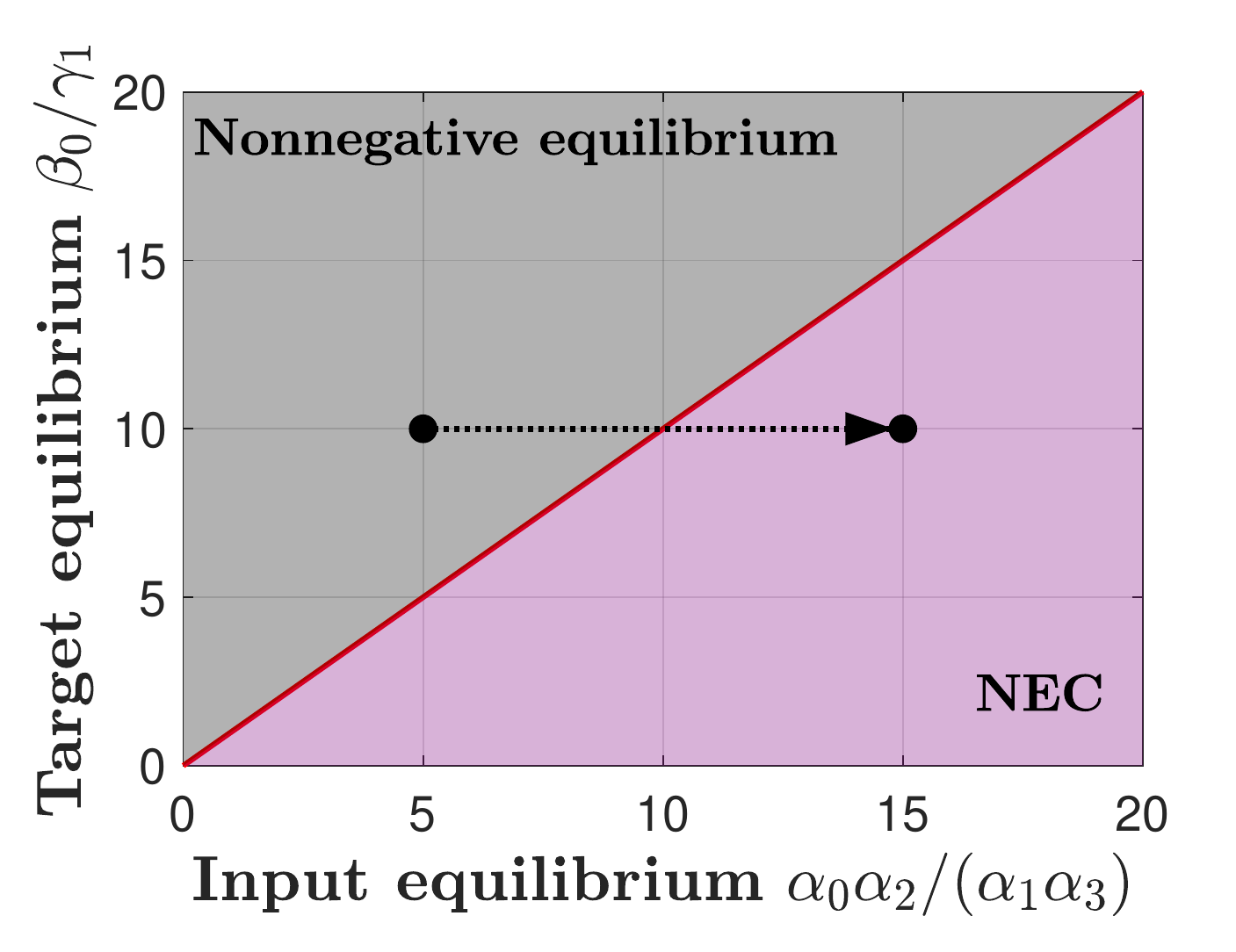}
\hskip -0.3cm
\includegraphics[width=0.4\columnwidth]{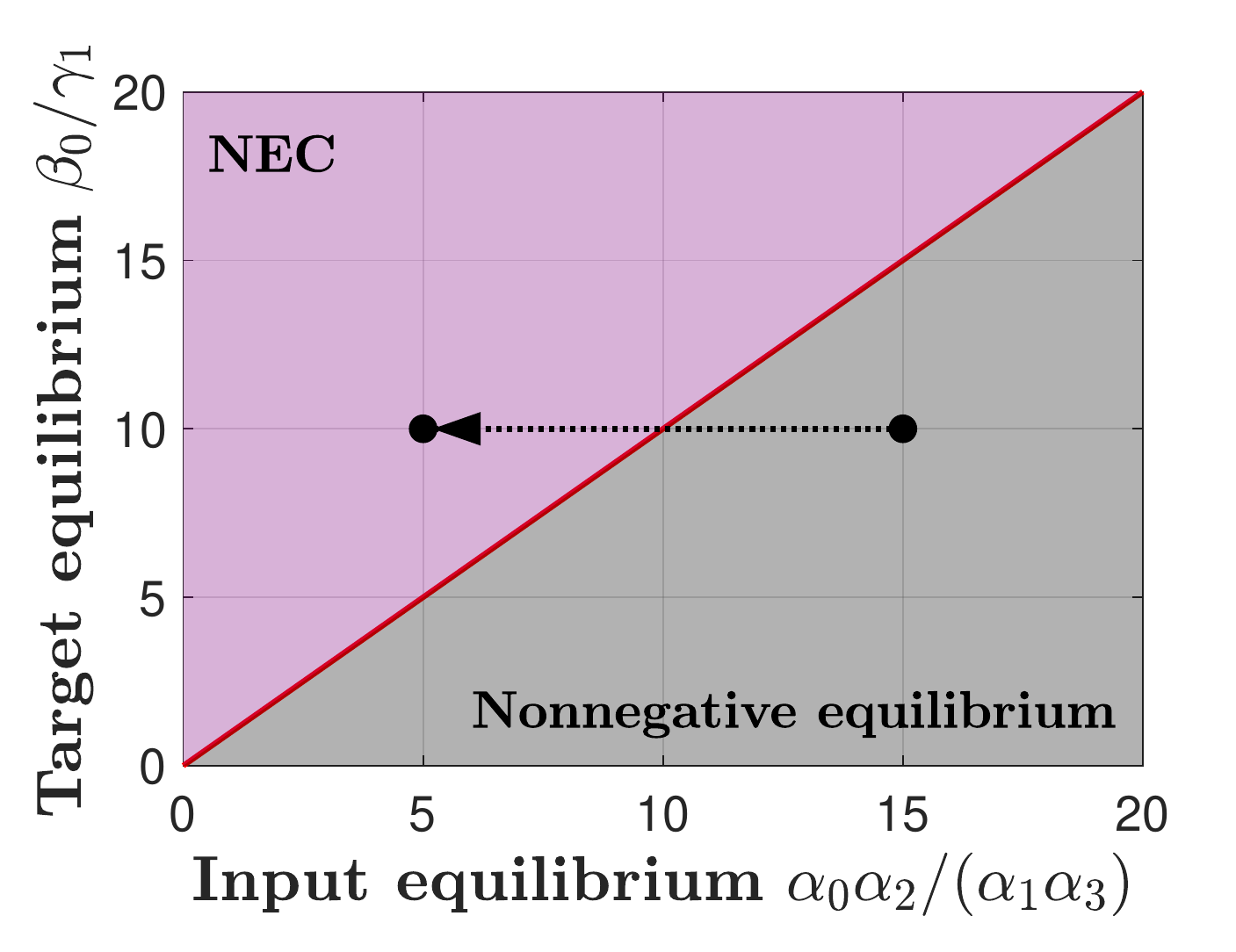}
\hskip -0.3cm
\includegraphics[width=0.4\columnwidth]{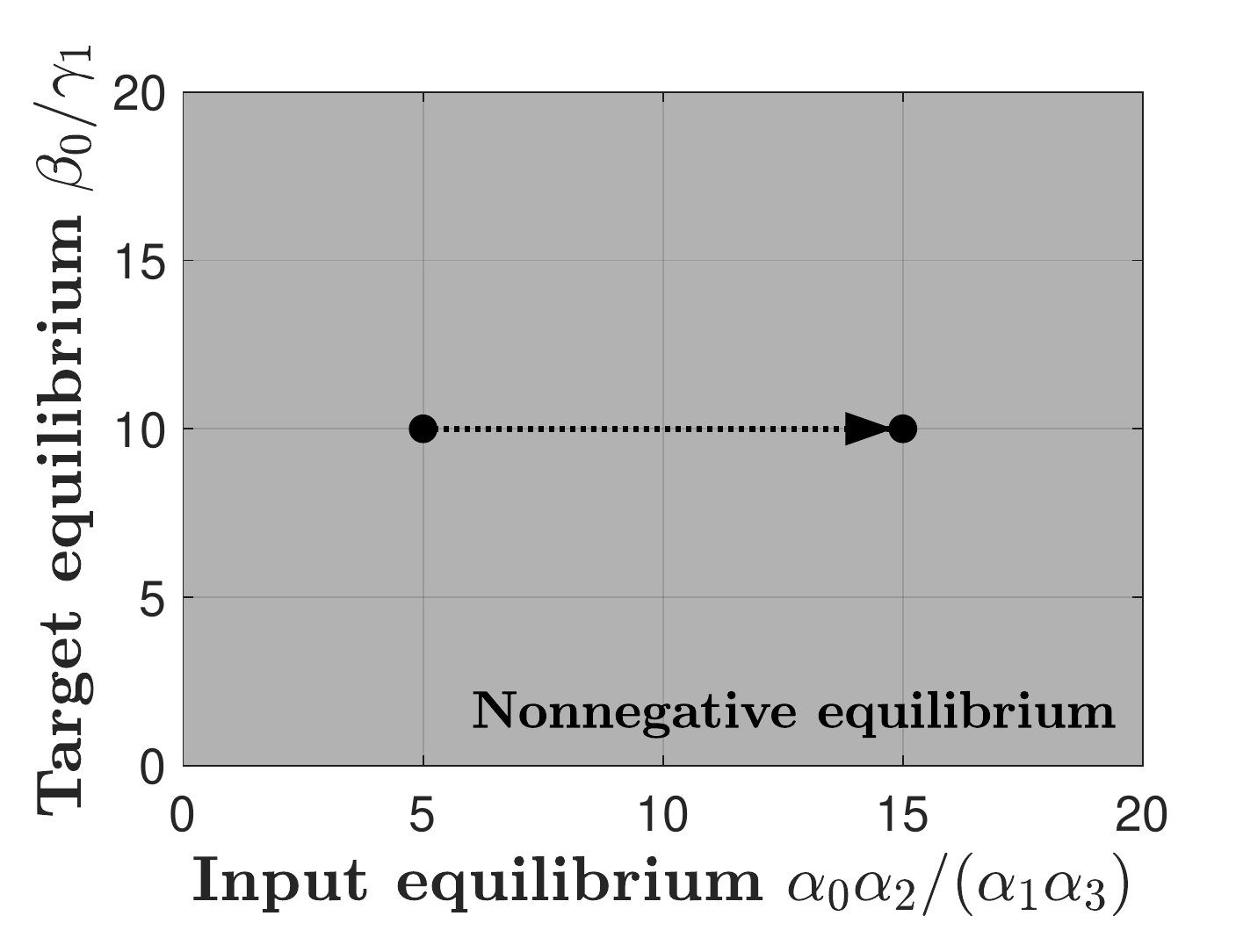}
}
\vskip -5.4cm
\leftline{\hskip -0.8cm (a) \hskip 5.9cm (e) \hskip 5.8cm (i)}
\vskip 4.9cm
\centerline{
\hskip 1mm
\includegraphics[width=0.4\columnwidth]{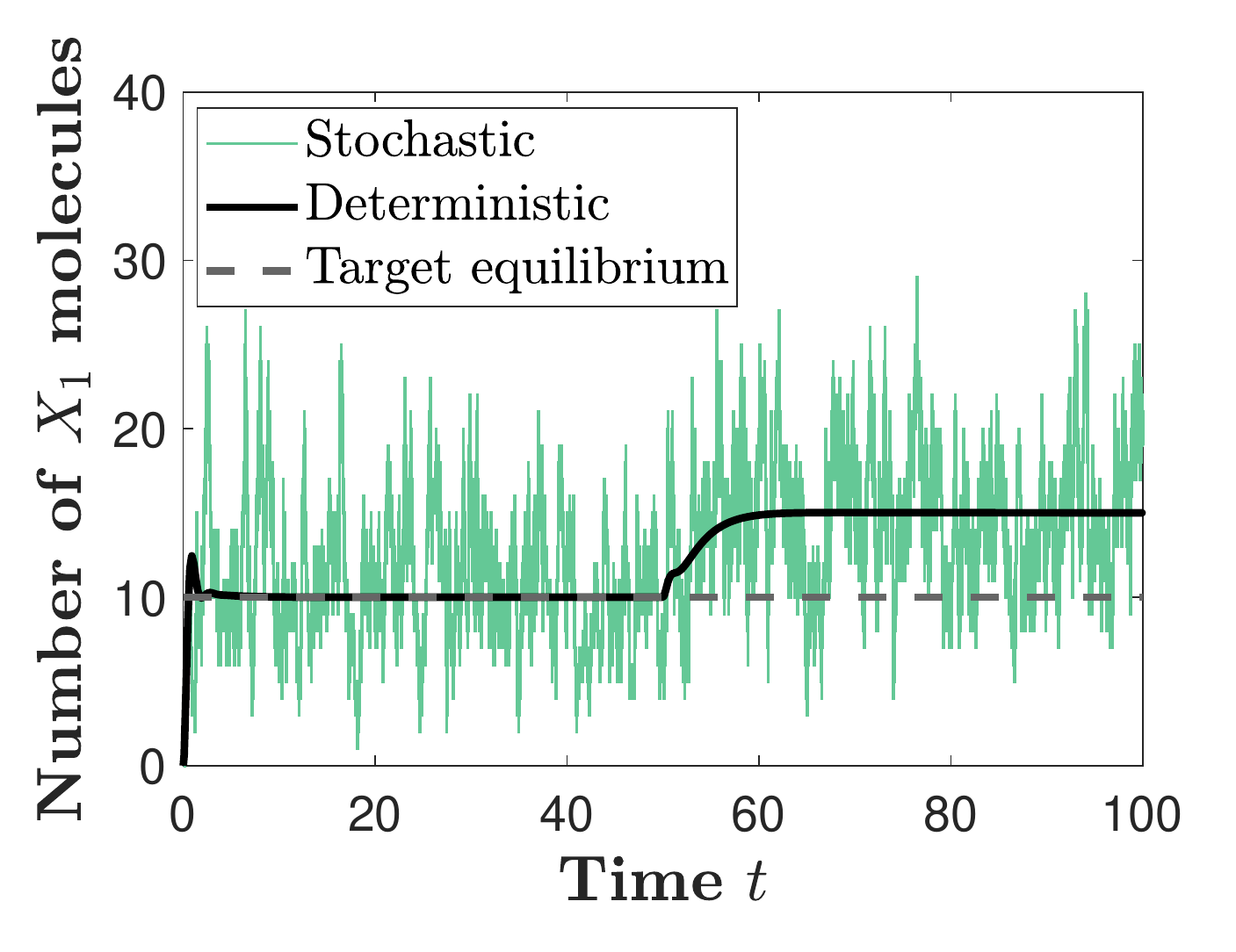}
\hskip -0.3cm
\includegraphics[width=0.4\columnwidth]{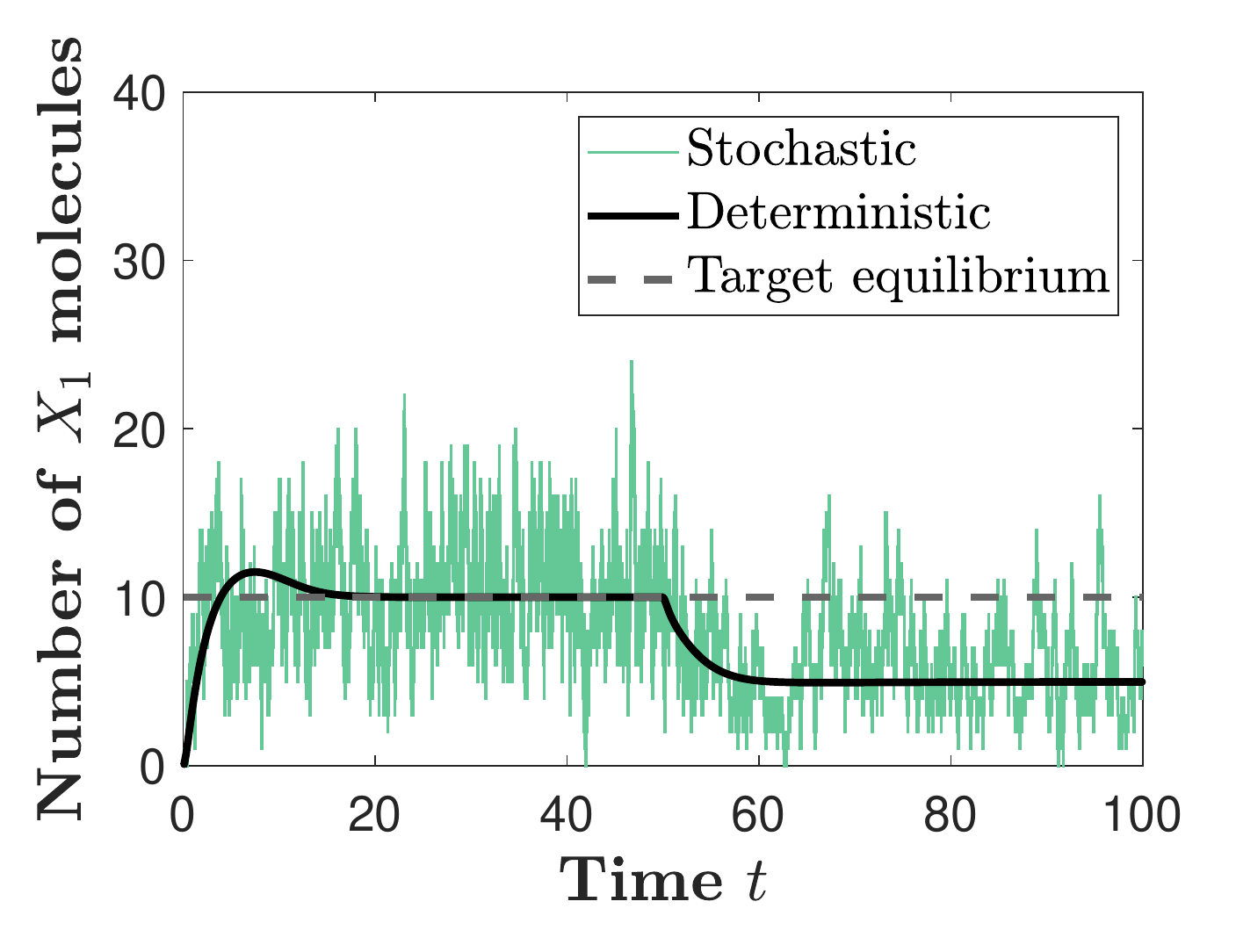}
\hskip -0.3cm
\includegraphics[width=0.4\columnwidth]{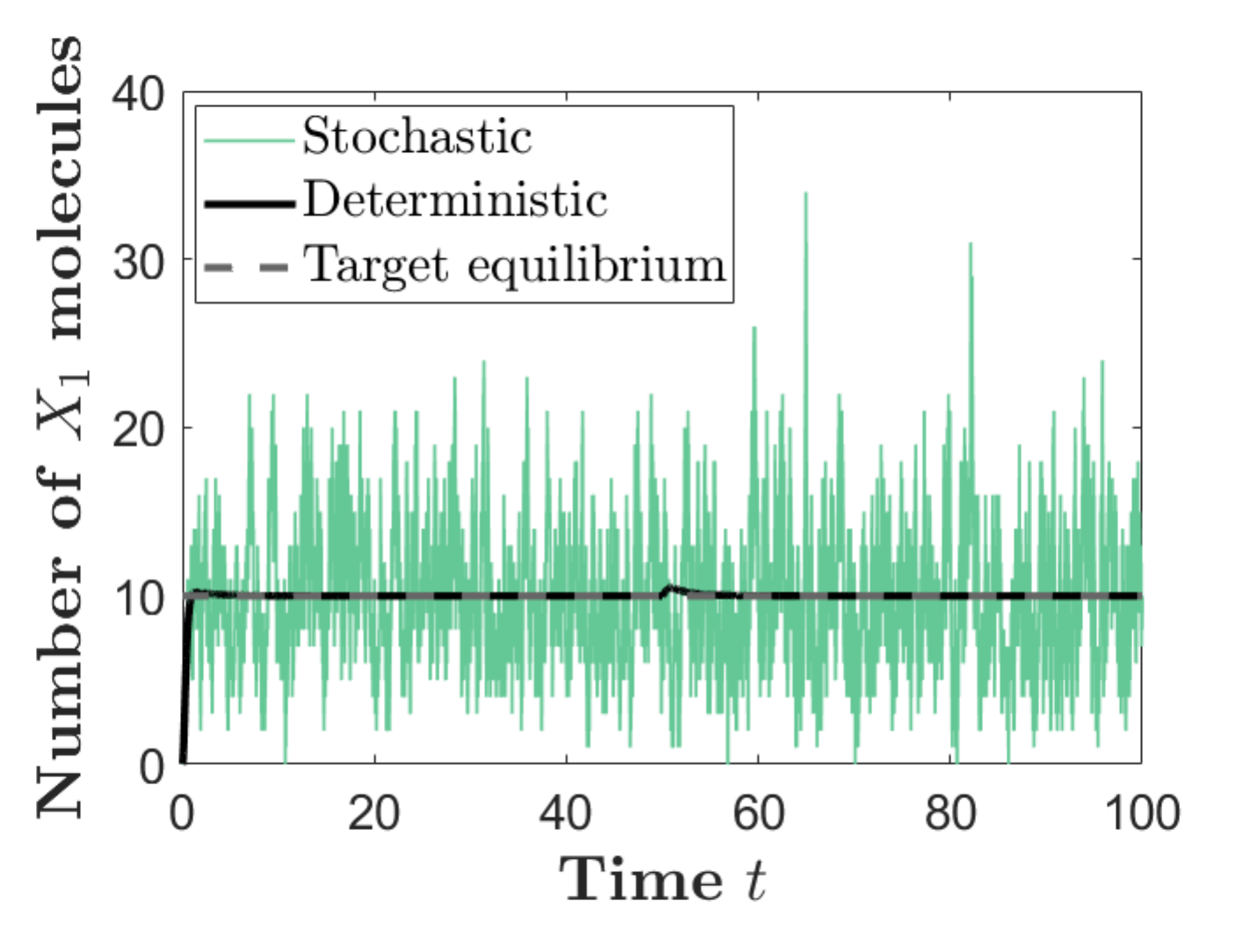}
}
\vskip -5.4cm
\leftline{\hskip -0.8cm (b) \hskip 5.9cm (f) \hskip 5.8cm (j)}
\vskip 4.9cm
\centerline{
\hskip 1mm
\includegraphics[width=0.4\columnwidth]{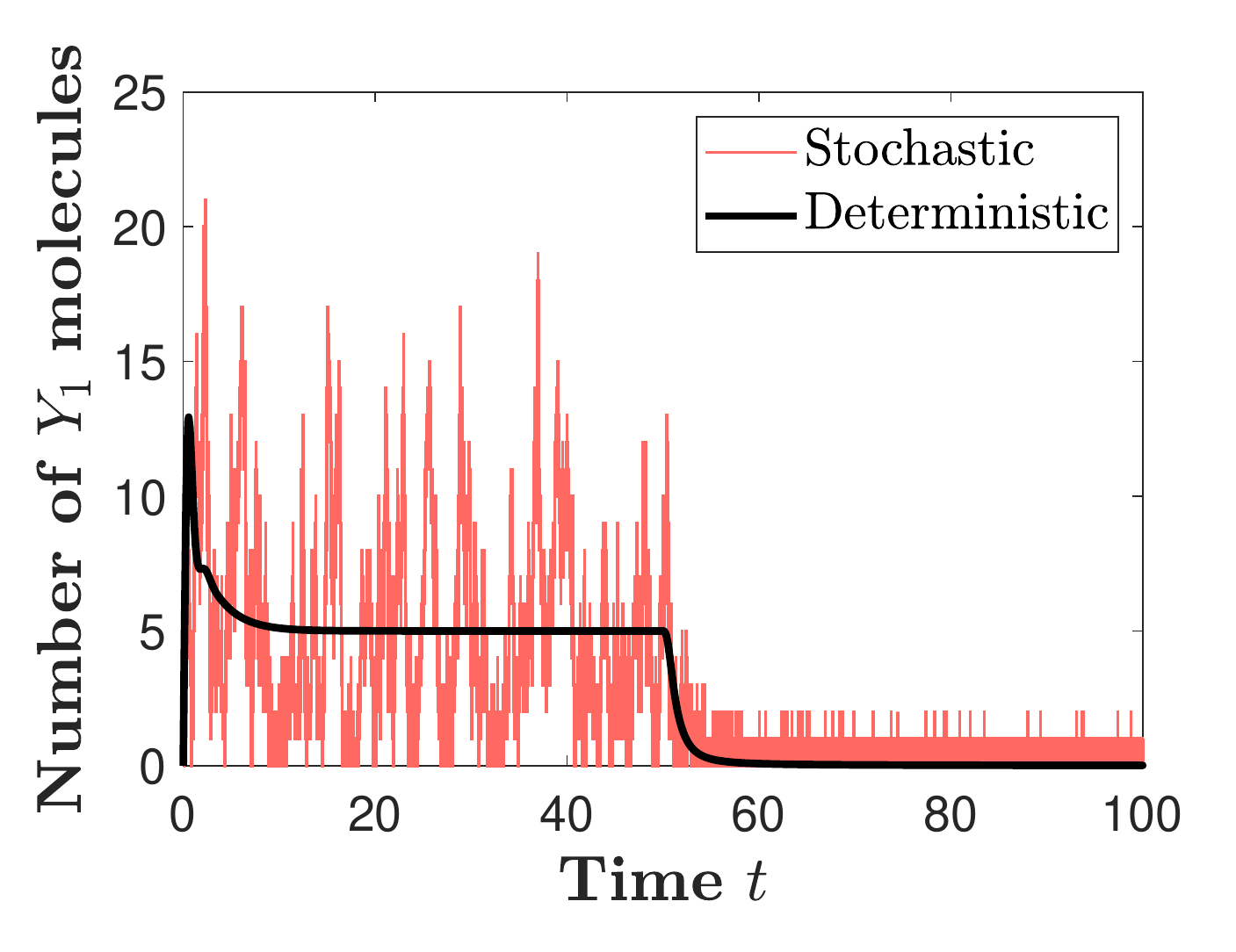}
\hskip -0.3cm
\includegraphics[width=0.4\columnwidth]{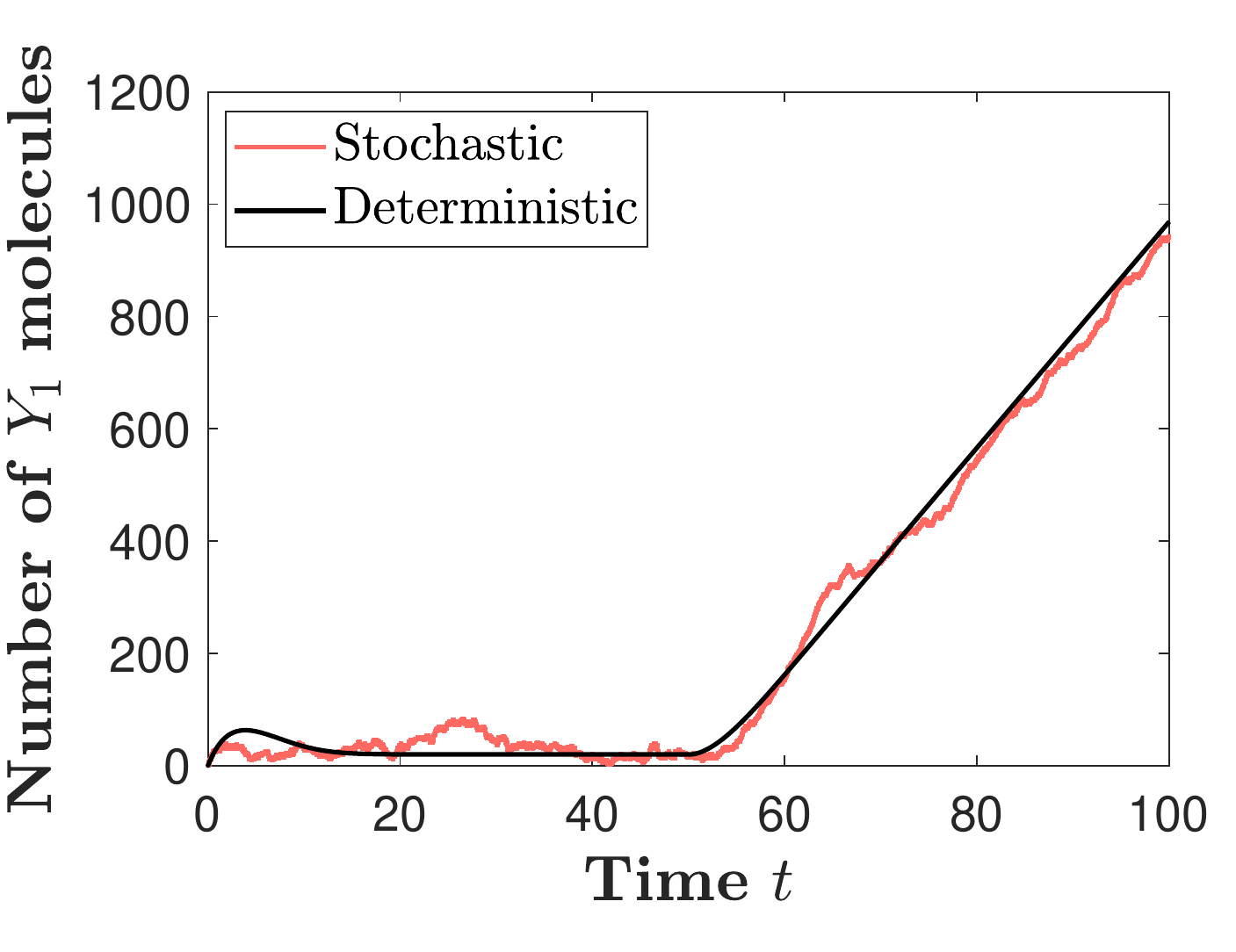}
\hskip -0.3cm
\includegraphics[width=0.4\columnwidth]{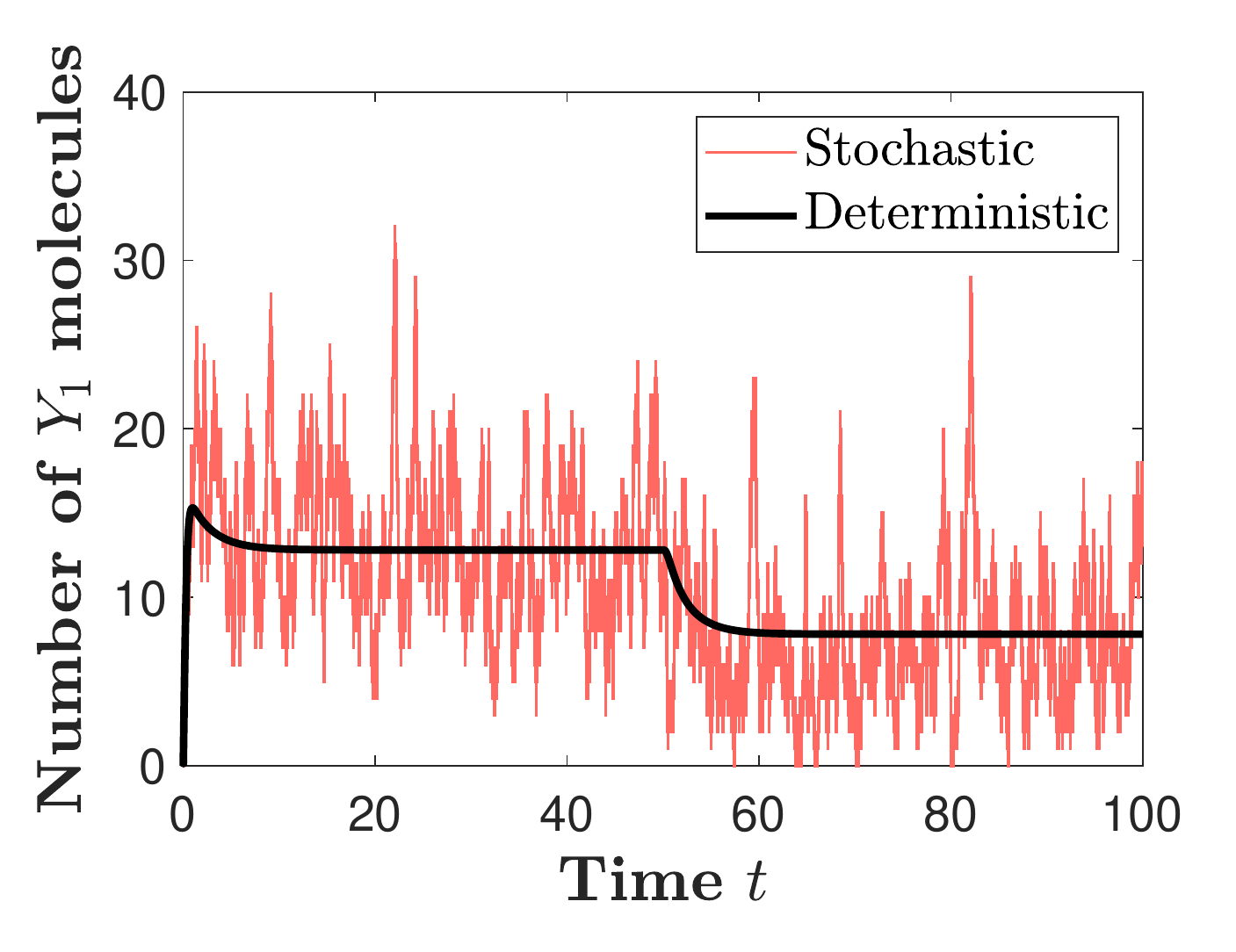}
}
\vskip -5.4cm
\leftline{\hskip -0.8cm (c) \hskip 5.9cm (g) \hskip 5.8cm (k)}
\vskip 4.9cm
\centerline{
\hskip 1mm
\includegraphics[width=0.4\columnwidth]{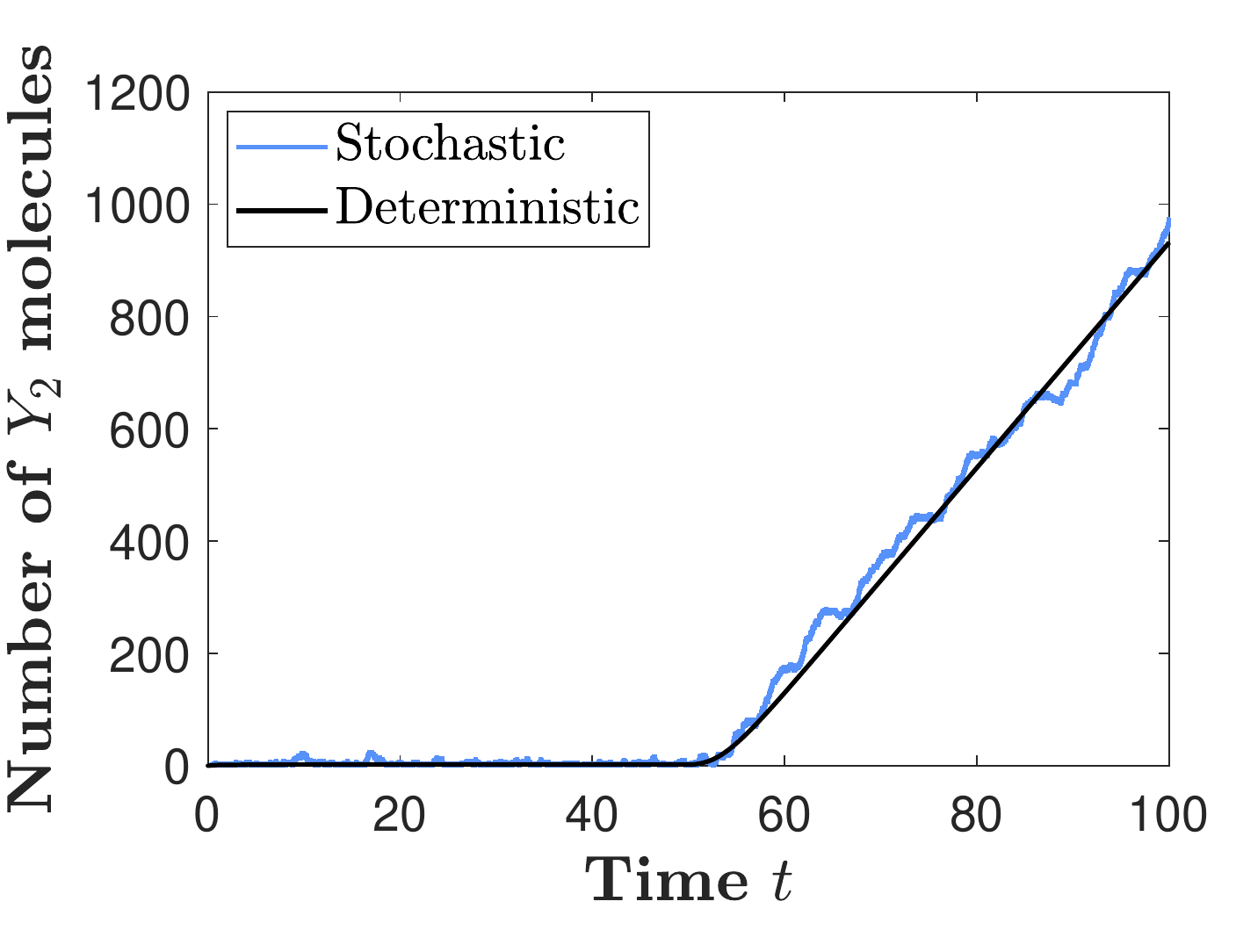}
\hskip -0.3cm
\includegraphics[width=0.4\columnwidth]{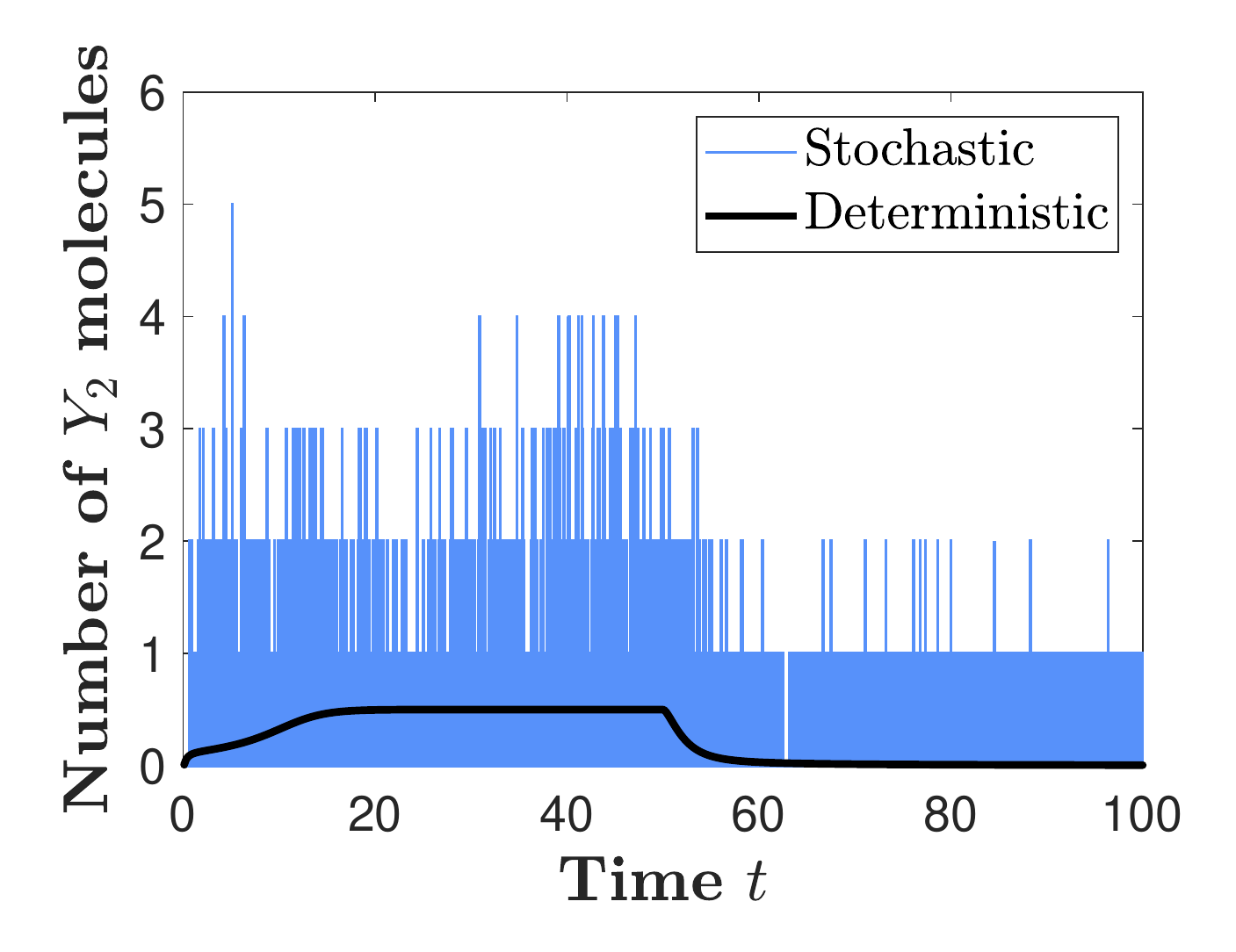}
\hskip -0.3cm
\includegraphics[width=0.4\columnwidth]{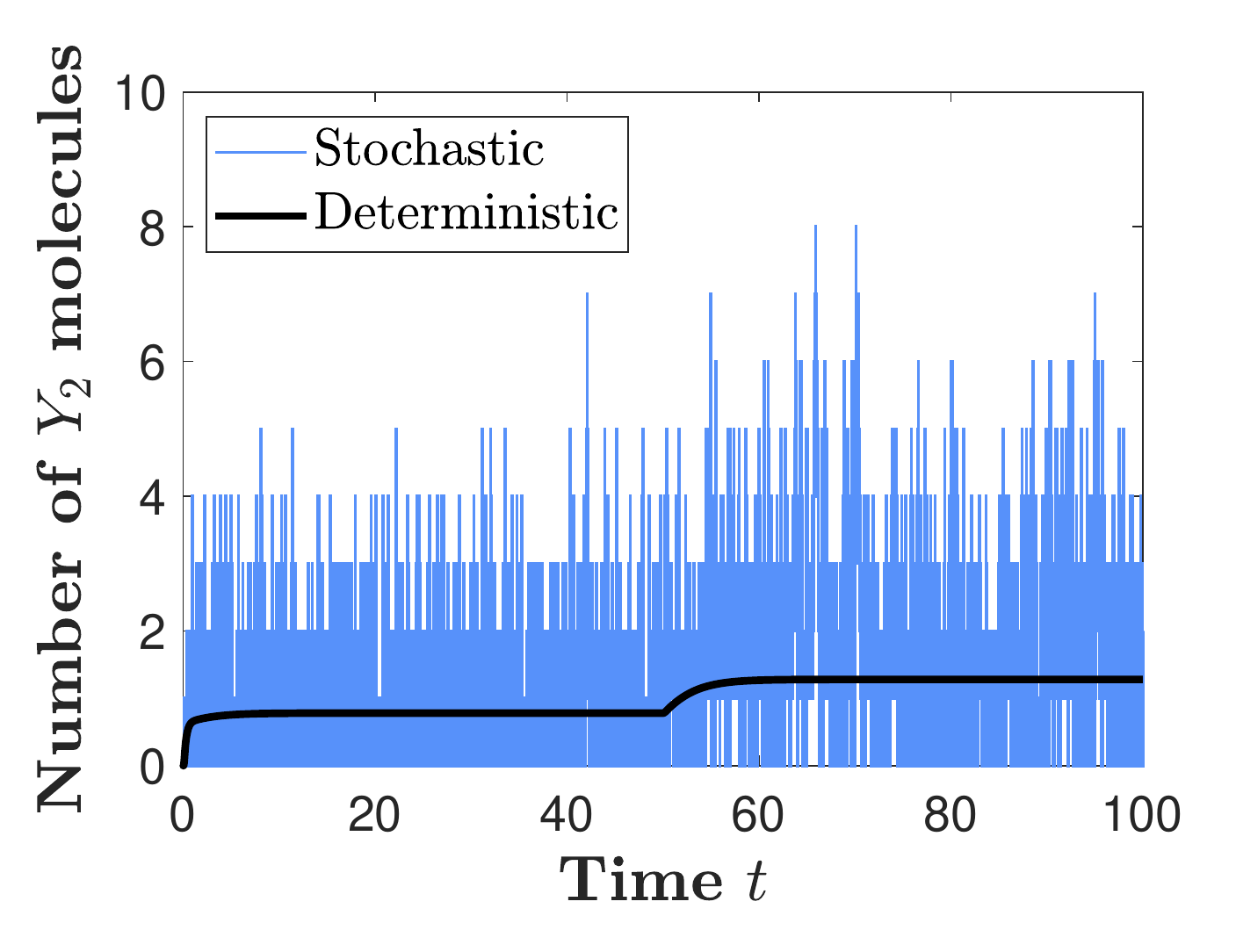}
}
\vskip -5.4cm
\leftline{\hskip -0.8cm (d) \hskip 5.9cm (h) \hskip 5.8cm (l)}
\vskip 4.4cm
\caption{\it{\emph{Application of the IFCs~(\ref{eq:IFCnetapp}) on
 the input network~(\ref{eq:input_1}) with rate coefficients
$(\alpha_1, \alpha_2, \alpha_3) = (2/5,2, 4)$,
and piecewise constant $\alpha_0 = \alpha_0(t)$ which changes at $t = 50$
and leads to a catastrophic bifurcation}.
Panel~{\rm (a)} displays a bifurcation diagram for the output 
network~{\rm(\ref{eq:input_1})}$\cup${\rm(\ref{eq:IFC_positive})},
while panels {\rm (b)--(d)} show the underlying deterministic 
and stochastic trajectories with $\alpha_0 = 4$ for $t < 50$
and $\alpha_0 = 12$ for $t \ge 50$, leading to a change in the 
parameter space indicated by the black arrow in panel~{\rm (a)}.
The control coefficients are fixed to $(\beta_0, \beta_1,\gamma_1, \gamma_2) = (40, 4, 4, 4)$. 
Analogous plots are shown in panels {\rm (e)--(h)} for the output 
network~{\rm(\ref{eq:input_1})}$\cup${\rm(\ref{eq:IFC_negative})}
with $\alpha_0 = 12$ for $t < 50$ and $\alpha_0 = 4$ for $t \ge 50$, 
and $(\beta_0, \beta_1,\gamma_1, \gamma_3) = (40, 4, 4, 4)$, 
and for the output network~{\rm(\ref{eq:input_1})}$\cup${\rm(\ref{eq:IFC_positive_negative})}
in panels {\rm (i)--(l)} using the same coefficient values as in panels {\rm (a)--(d)}, 
and with $\gamma_3 = 4$}.} \label{fig:linear}
\end{figure}  

Let us consider network~(\ref{eq:input_1}) with rate coefficients
$\boldsymbol{\alpha} = (\alpha_0, \alpha_1, \alpha_2, \alpha_3)$ fixed so that the 
input $x_1$-equilibrium from~(\ref{eq:input_eq}) is given by $x_1^{**} = 5$; 
then, the output $x_1$-equilibrium from~(\ref{eq:output_eq_1})
must satisfy the constraint $x_1^* > 5$. Let us stress that, since the input rate coefficients 
$\boldsymbol{\alpha}$ (and the structure
of the input network itself) are generally uncertain (see property (U) from Section~\ref{sec:intro}), 
condition $x_1^* > 5$ is not a-priori known. 
Assume the goal is to steer the output equilibrium to $10$,
i.e. we fix the control coefficients $\beta_0$ and $\gamma_1$ so that $x_1^* =  \beta_0/\gamma_1 = 10$;  
this setup is shown as a black dot at $(5, 10)$ in Figure~\ref{fig:linear}(a), 
which happens to lie in the region where the output network displays a nonnegative equilibrium. 
However, assume that at a future time, as a response to an environmental perturbation, 
an activating transcription factor binds to the underlying gene promoter, 
tripling the transcription rate of the input network, which we model by 
allowing the transcription rate coefficient $\alpha_0$ to be time-dependent, 
see Figure~\ref{fig:genetic}(b).
Such a perturbation would move the system
from coordinate $(5, 10)$ to $(15, 10)$, into the unstable region, 
which we show as a black arrow in Figure~\ref{fig:linear}(a). 
Put another way, the AIFC would fail at its main objective - 
maintaining accurate control robustly with respect to uncertainties
(environmental perturbations). 

In Figure~\ref{fig:linear}(b)--(d), we show the deterministic and stochastic trajectories 
for the species $X_1$, $Y_1$ and $Y_2$, obtained by solving the RREs~(\ref{eq:RREs_plus}) 
and applying the Gillespie algorithm~\cite{GillespieSSA} on~(\ref{eq:input_1})$\cup$(\ref{eq:IFC_positive}), 
respectively; also shown as a dashed grey line in Figure~\ref{fig:linear}(b) is the target equilibrium $x_1^* = 10$. 
For time $t < 50$, when the input network operates as in Figure~\ref{fig:genetic}(a),
and the output system is in the configuration $(5, 10)$
from Figure~\ref{fig:linear}(a), the AIFC achieves control. However, 
for time $t > 50$, when the transcription rate has increased as in Figure~\ref{fig:genetic}(b),
and the output system is at $(15, 10)$ from Figure~\ref{fig:linear}(a),
 control fails; even worse, the species $Y_2$ blows up for all admissible initial conditions.
Intuitively, this hazardous phenomenon (NEC) occurs because, when the target equilibrium is below
the input one, the best accuracy result that the AIFC can achieve is to minimally increase $X_1$. 
Such a task is accomplished with $y_1 \to 0$ which, as a consequence of a hyperbolic relationship between $y_1$ and $y_2$, 
enforces $y_2 \to \infty$, which is a worst stability result. 

\textbf{Pure negative interfacing}.
Consider controller~(\ref{eq:IFCnetapp}) 
with pure direct negative interfacing,  
denoted by $\mathcal{R}_{\beta, \gamma}^{-} \equiv 
\mathcal{R}_{\beta} \cup \mathcal{R}_{\gamma}^0 \cup \mathcal{R}_{\gamma}^{-}$
and given by
\begin{align}
\mathcal{R}_{\beta}(Y_1, Y_2): \;
& & \varnothing & \xrightarrow[]{\beta_0} Y_1, \nonumber \\
& & Y_1 + Y_2 & \xrightarrow[]{\beta_1} \varnothing, \nonumber \\
\mathcal{R}_{\gamma}^{0}(Y_2; \, X_1): \;
& & X_1 & \xrightarrow[]{\gamma_{1}} X_1 + Y_2, \nonumber \\
\mathcal{R}_{\gamma}^{-}(X_j; \, Y_2): \;
& & X_1 + Y_2 & \xrightarrow[]{\gamma_{3}} Y_2.
\label{eq:IFC_negative}
\end{align}
By analogous arguments as with controller~(\ref{eq:IFC_positive}), 
one can prove that deterministic and stochastic NECs occur when 
$x_1^* > x_1^{**}$ (equivalently, when
$\beta_0/\gamma_1 > \alpha_0 \alpha_2/(\alpha_1 \alpha_3)$),
i.e. controller~(\ref{eq:IFC_negative}) cannot steer the
output equilibrium above the input one; when control above the input 
equilibrium is attempted, species
$Y_1$ blows up. We display a bifurcation diagram and trajectories
 for the output network~(\ref{eq:input_1})$\cup$(\ref{eq:IFC_negative})
in Figure~\ref{fig:linear}(e)--(h).

\textbf{Combined positive and negative interfacing}.
Let us now analyze controller~(\ref{eq:IFCnetapp}) 
with both positive and negative interfacing directly applied
to the protein species $X_1$, as suggested by the
derivation in Section~\ref{sec:nonlinear}.
This controller is denoted by $\mathcal{R}_{\beta, \gamma}^{\pm} \equiv 
\mathcal{R}_{\beta} \cup \mathcal{R}_{\gamma}^0 \cup
 \mathcal{R}_{\gamma}^{+} \cup \mathcal{R}_{\gamma}^{+}$
and given by
\begin{align}
\mathcal{R}_{\beta}(Y_1, Y_2): \;
& & \varnothing & \xrightarrow[]{\beta_0} Y_1, \nonumber \\
& & Y_1 + Y_2 & \xrightarrow[]{\beta_1} \varnothing, \nonumber \\
\mathcal{R}_{\gamma}^{0}(Y_2; \, X_1): \;
& & X_1 & \xrightarrow[]{\gamma_{1}} X_1 + Y_2, \nonumber \\
\mathcal{R}_{\gamma}^{+}(X_1; \, Y_1): \;
& & Y_1 & \xrightarrow[]{\gamma_{2}} X_1 + Y_1, \nonumber \\
\mathcal{R}_{\gamma}^{-}(X_j; \, Y_2): \;
& & X_1 + Y_2 & \xrightarrow[]{\gamma_{3}} Y_2.
\label{eq:IFC_positive_negative}
\end{align}
The RREs of the output network~(\ref{eq:input_1})$\cup$(\ref{eq:IFC_positive_negative}) 
have two equilibria, given by
\begin{align}
x_1^* & = \frac{\beta_0}{\gamma_1}, 
\hspace{0.5cm}
x_2^* = \frac{\alpha_0}{\alpha_1}, 
\hspace{0.5cm}
y_2^* = \frac{\beta_0}{\beta_1} (y_1^*)^{-1},
\label{eq:output_eq_2}
\end{align}
where $y_1^*$ satisfies 
\begin{align}
(y_1^*)^2 + \frac{\alpha_3}{\gamma_2} \left(\frac{\alpha_0 \alpha_2}{\alpha_1 \alpha_3} - 
\frac{\beta_0}{\gamma_1} \right) y_1^* - 
\left(\frac{\gamma_3}{\gamma_1 \gamma_2} \frac{\beta_0^2}{\beta_1} \right) & = 0.  \label{eq:output_quadratic}
\end{align}
By design from Section~\ref{sec:nonlinear}, quadratic equation~(\ref{eq:output_quadratic}) 
always has exactly one positive equilibrium, so that no NEC can 
occur with controller~(\ref{eq:IFC_positive_negative});
we confirm this fact in Figure~\ref{fig:linear}(i)--(l).

\subsection{Arbitrary unimolecular input networks}
Test network~(\ref{eq:input_1}) demonstrates
that controller~(\ref{eq:IFCnetapp}) with only positive, or only negative,
 interfacing does not generically ensure stability
of the output network. In other words, the output network experiences NECs
over larger regions in the parameter space, as displayed in Figures~\ref{fig:linear}(a) and~(e).
On the other hand, controller~(\ref{eq:IFCnetapp}) with combined positive and negative
interfacing, applied directly to the species of interest, induces no
NEC, as shown in Figure~\ref{fig:linear}(i). 
A special feature of unimolecular networks is that distinct 
species cannot influence each other negatively. 
Consequently, to ensure existence of a nonnegative equilibrium, 
negative interfacing must generally be applied directly to the target 
species whose dynamics is controlled, while positive interfacing
can be applied directly or indirectly. We now more formally state this result; 
for more details and a proof, see Appendix~\ref{app:proof}. To aid the statement of the theorem,
consider two sets of unimolecular networks whose $x_1$-equilibrium is zero, $x_1^{**} = 0$: 
those that contain the target species $X_2$, and
those with the target species $X_2$ deleted (i.e. we fix $x_2^{**} \equiv 0$).
These two sets of networks form a negligibly small
subset of general unimolecular networks, and are called degenerate; the set of 
all other unimolecular networks is said to be \emph{nondegenerate}.

\begin{theorem}\label{theorem:IFCsummary}
Consider an arbitrary nondegenerate unimolecular input network 
$\mathcal{R}_{\alpha}$ whose {\rm RRE}s have an asymptotically stable equilibrium,
the family of controllers $\mathcal{R}_{\beta,\gamma}$ given by~{\rm (\ref{eq:IFCnetapp})}, 
and the output network $\mathcal{R}_{\alpha,\beta,\gamma} = 
\mathcal{R}_{\alpha} \cup \mathcal{R}_{\beta,\gamma}$.
Then, controller $\mathcal{R}_{\beta,\gamma}^{\pm}$ with 
both positive and negative interfacing, with negative interfacing
being direct, ensures that the output network 
$\mathcal{R}_{\alpha,\beta,\gamma}$ has a nonnegative equilibrium 
for all parameter values $(\boldsymbol{\alpha}, \boldsymbol{\beta},\boldsymbol{\gamma}) 
\in \mathbb{R}_{>}^{a + b + c}$. 
On the other hand, the other variants of the controller~{\rm (\ref{eq:IFCnetapp})}
do not generically ensure that the output network $\mathcal{R}_{\alpha,\beta,\gamma}$
has a nonnegative equilibrium; furthermore, when a nonnegative
equilibrium does not exist, these controllers induce deterministic and stochastic blow-ups
({\rm NEC}s) for all nonnegative initial conditions.
\end{theorem}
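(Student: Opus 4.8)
The plan is to reduce the equilibrium analysis of the output network $\mathcal{R}_{\alpha,\beta,\gamma}$ to the scalar equation governing $y_1^*$, exactly as in the derivation of Section~\ref{sec:nonlinear}, and then to exploit the sign structure of unimolecular kinetics. First I would write the RREs of an arbitrary nondegenerate unimolecular input network $\mathcal{R}_\alpha$ in the affine form $\dot{\mathbf{x}} = A\mathbf{x} + \mathbf{b}$, where $A$ is a Metzler matrix (off-diagonal entries nonnegative, since distinct unimolecular species influence each other only nonnegatively) and $\mathbf{b} \ge 0$ encodes basal production; asymptotic stability of the input equilibrium means $A$ is Hurwitz, hence invertible with $-A^{-1} \ge 0$ entrywise. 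Appending controller~(\ref{eq:IFCnetapp}) adds the coupling terms to the $X_1$-equation and introduces $Y_1, Y_2$; setting all derivatives to zero, the $Y_1$-equation gives $\beta_1 y_1^* y_2^* = \beta_0$ and the $Y_2$-equation gives $\beta_1 y_1^* y_2^* = \gamma_1 x_1^*$, so $x_1^* = \beta_0/\gamma_1$ and $y_2^* = (\beta_0/\beta_1)(y_1^*)^{-1}$ for any interfacing variant — the target equilibrium is always pinned, and the question is purely whether a positive $y_1^*$ exists.

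The key step is to substitute $x_1^* = \beta_0/\gamma_1$ back into the remaining input equations to solve for all residual species $x_k^*$ as affine functions of the (fixed) $x_1^*$, and then feed the result into the $X_1$-equilibrium equation. Because $x_1^*$ is fixed, the residual block determines $x_k^*$ uniquely; the $X_1$-equation then reads $0 = g(x_1^*; \boldsymbol\alpha) + \gamma_2 y_1^* \cdot [i{=}1] - \gamma_3 x_1^* y_2^* \cdot [j{=}1] + (\text{indirect terms})$, where $g(x_1^*;\boldsymbol\alpha)$ is the net input flux at the target, which plays the role of $f_1(\beta_0/\gamma_1;\boldsymbol\alpha)$ in~(\ref{eq:RREs_debar3}). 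For the $\pm$ controller with direct negative interfacing ($i$ arbitrary but $j=1$), substituting $y_2^* = (\beta_0/\beta_1)(y_1^*)^{-1}$ yields, after clearing denominators, a quadratic in $y_1^*$ of the form $(y_1^*)^2 + c_1 y_1^* - c_2 = 0$ with $c_2 = (\gamma_3/(\gamma_1\gamma_2))(\beta_0^2/\beta_1) > 0$ (this is the structural content of~(\ref{eq:output_quadratic})); since the constant term is strictly negative, Descartes' rule gives exactly one positive root, so a nonnegative equilibrium always exists. For the purely positive or purely indirect-negative variants, the $(y_1^*)^{-1}$ term is absent from whichever equation would have supplied the negative constant, so $y_1^*$ solves a \emph{linear} equation whose solution has the sign of $x_1^{**} - x_1^*$ (or its analogue), which is negative over an open region of parameter space — establishing the ``do not generically ensure'' half.

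For the blow-up claim I would mimic the argument around~(\ref{eq:blowup_plus}): form the conserved-type linear combination $L = \mathbf{c}^\top \mathbf{x} + \gamma_1^{-1}(y_2 - y_1)$ where $\mathbf{c}^\top$ is the left null-ish vector making the input contributions telescope (concretely $\mathbf{c}^\top A = -e_1^\top \cdot (\text{const})$, which exists because $A$ is invertible), so that $\dot L$ collapses to a constant plus a manifestly nonnegative term; when the nonnegative equilibrium is absent that constant is strictly negative, forcing $L$ — and hence some coordinate — to grow without bound, and the same computation applied to the first-moment equations gives the stochastic NEC. The main obstacle I anticipate is the bookkeeping for \emph{indirect} interfacing: when $i \ne 1$ or $j \ne 1$, the extra production/degradation of $X_i$ or $X_j$ feeds back into $x_1^*$ through the residual block $-A^{-1}$, so one must check that the relevant entry of $-A^{-1}$ is strictly positive (not merely nonnegative) for the indirect term to actually enter the $y_1^*$-equation with the right sign — this is where nondegeneracy ($x_1^{**} \ne 0$, i.e. the target is genuinely reachable from $X_2$) is used, and pinning down exactly which degenerate inputs must be excluded, and verifying strict positivity via irreducibility/connectivity of the unimolecular graph, is the delicate part of the proof.
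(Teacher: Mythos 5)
Your proposal follows essentially the same route as the paper's proof in Appendix~\ref{app:proof}: writing the unimolecular input as a Metzler--Hurwitz affine system $\dot{\mathbf{x}}=\boldsymbol{\alpha}_{\cdot,0}+A\mathbf{x}$, eliminating the residual species via $-A_{1,1}^{-1}\ge 0$ in Cramer/cofactor form, reducing existence of a nonnegative equilibrium to the sign pattern of a quadratic (or degenerate) equation in $y_1^*$ whose constant term is strictly negative precisely for the direct-negative-plus-positive variant (Theorem~\ref{theorem:equilibrium}), and establishing the blow-up through a linear combination built from the cofactor vector $\mathbf{w}$ with $A^{\top}\mathbf{w}=\mathbf{e}_1$ (Theorem~\ref{theorem:IFCblowup}). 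The only slight imprecision is that the pure indirect-negative variant also yields a quadratic rather than a linear equation, producing a two-sided admissibility window for $\beta_0/\gamma_1$, but this does not affect your conclusion, and you correctly isolate the condition $|A_{2,1}|\neq 0$ (connectivity of $X_2$ to $X_1$) as the delicate hypothesis for the indirect cases.
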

\begin{proof}
See Appendix~\ref{app:proof}.
\end{proof}
\noindent Note that the only variant of~(\ref{eq:IFCnetapp}) that generically ensures
a nonnegative equilibrium is also the one which may be experimentally most
challenging to implement. In particular, one must generally implement the second-order
reaction $\mathcal{R}_{\gamma}^{-}$ from~(\ref{eq:IFCnetapp}) 
applied \emph{directly} to the target species whose dynamics is controlled. 

When the equilibrium of the target species from the input network is zero, $x_1^{**} = 0$
(a degenerate case), both the positive-negative controller $\mathcal{R}_{\beta,\gamma}^{\pm}$, 
and the AIFC $\mathcal{R}_{\beta,\gamma}^{+}$, generically 
ensure existence of a nonnegative equilibrium.
However, these degenerate input networks can describe only a small class of biochemical processes.
For example, when there is no basal transcription, i.e. when $\alpha_0 = 0$, 
the equilibrium of network~(\ref{eq:input_1}) is zero and, consequently, 
the output equilibrium~(\ref{eq:output_eq_1}) is always nonnegative. 
In particular, the output equilibrium is then nonnegative 
independent of the uncertainties in the input coefficients, 
so that the key challenge (U) highlighted in Section~\ref{sec:intro} is mitigated.
This gene-expression input network without
basal transcription, $\alpha_0 = 0$, has been used in~\cite{Khammash}
to demonstrate a desirable stochastic behavior of the AIFC.
However, as we have shown in this section, when a more general gene-expression 
model is used, with  $\alpha_0 \ne 0$, the AIFC can fail and induce
both deterministic and stochastic catastrophes as a consequence
of the challenge (U). Similar degenerate input networks have also been used in~\cite{AIFC_1,AIFC_2}.

\section{Control of bimolecular input networks: Curse of dimensionality} \label{sec:second_order}
As expressed by challenge (N) in Section~\ref{sec:intro}, most intracellular networks are bimolecular,
rather than unimolecular, limiting the applicability of Theorem~\ref{theorem:IFCsummary}. 
For example, in Section~\ref{sec:first_order}, we have used the unimolecular input network~(\ref{eq:input_1}) 
with a time-dependent rate coefficient to model intracellular gene expression 
with regulated transcription. To obtain a more realistic model, instead of 
allowing for an effective time-dependent rate coefficient, the reduced network~(\ref{eq:input_1}) 
should be extended by including other coupled auxiliary species 
(e.g. transcription factors and genes) and processes (pre-transciptional and post-translational events);
the resulting extended input network is then bimolecular, and therefore
Theorem~\ref{theorem:IFCsummary} no longer applies.
In particular, a special property of unimolecular networks is that distinct species can influence each other only
positively; in contrast, distinct species can influence each other both positively and negatively
in bimolecular networks. For this reason, when stable unimolecular input networks are controlled,
NECs can be eliminated purely by ensuring that the 
controlling species equilibrium $\mathbf{y}^*$ is positive;
the input species equilibrium $\mathbf{x}^*$ is then necessarily nonnegative. 
In other words, the problem of controlling unimolecular networks 
is independent of the challenge (HD) from Section~\ref{sec:intro},
i.e. the problem does not become more challenging as the dimension of the input network increases.
On the other hand, we show in this section that, as a consequence of 
nonlinearities and positive-negative interactions among distinct species, 
the problem of controlling bimolecular networks
suffers from the \emph{curse of dimensionality}  - the problem becomes
more challenging as dimension of the input network increases. 
In particular, we show that, for bimolecular networks, ensuring 
that the controlling species equilibrium is positive is generally
not sufficient for nonnegativity of the input species equilibrium.

\subsection{Two-species reduced input network: Residual NEC} \label{sec:reduced}
Let us consider a two-dimensional reduced model of an intracellular process, given by 
the bimolecular input network $\mathcal{R}_{\alpha}^2(X_1, X_2)$ which reads
\begin{align}
\mathcal{R}_{\alpha}^2(X_1, X_2):
& & \varnothing \xrightarrow[]{\alpha_{0}} & X_1, \hspace{0.3cm}
X_1 \xrightarrow[]{\alpha_{1}} X_2, \hspace{0.3cm}
X_1 + X_2 \xrightarrow[]{\alpha_{2}} 2 X_2, \hspace{0.3cm}
X_2 \xrightarrow[]{\alpha_{3}} \varnothing, 
\label{eq:input_2}
\end{align}
where $X_1$ is produced from a source and converted 
into a degradable species $X_2$ via first- and second-order conversion reactions. 
We assume that $X_1$ is a target species, while $X_2$ is a residual
species, i.e. $X_2$ cannot be interfaced with a given controller.
The RREs of~(\ref{eq:input_2}) have a unique asymptotically stable equilibrium, given by
\begin{align}
x_1^{**} & = \frac{\alpha_0 \alpha_3}{\alpha_0 \alpha_2 + \alpha_1 \alpha_3}, 
\hspace{0.5cm}
x_2^{**} = I_2(x_1^{**}; \, \boldsymbol{\alpha}) = \frac{\alpha_0}{\alpha_3}, \label{eq:input2_eq}
\end{align}
where the function $I_2 = I_2(x_1; \, \boldsymbol{\alpha})$ is given by
\begin{align}
I_2(x_1; \, \boldsymbol{\alpha}) \equiv \frac{\alpha_1}{\alpha_2} x_1
\left(\frac{\alpha_3}{\alpha_2} - x_1 \right)^{-1}. \label{eq:input2_eqI}
\end{align}
We call~(\ref{eq:input2_eqI}) a \emph{residual invariant}, which is simply the $x_2$-equilibrium
expressed as a function of the $x_1$-equilibrium and the input coefficients $\boldsymbol{\alpha}$.

\begin{figure}[!htbp]
\vskip -1.6cm
\centerline{
\includegraphics[width=0.4\columnwidth]{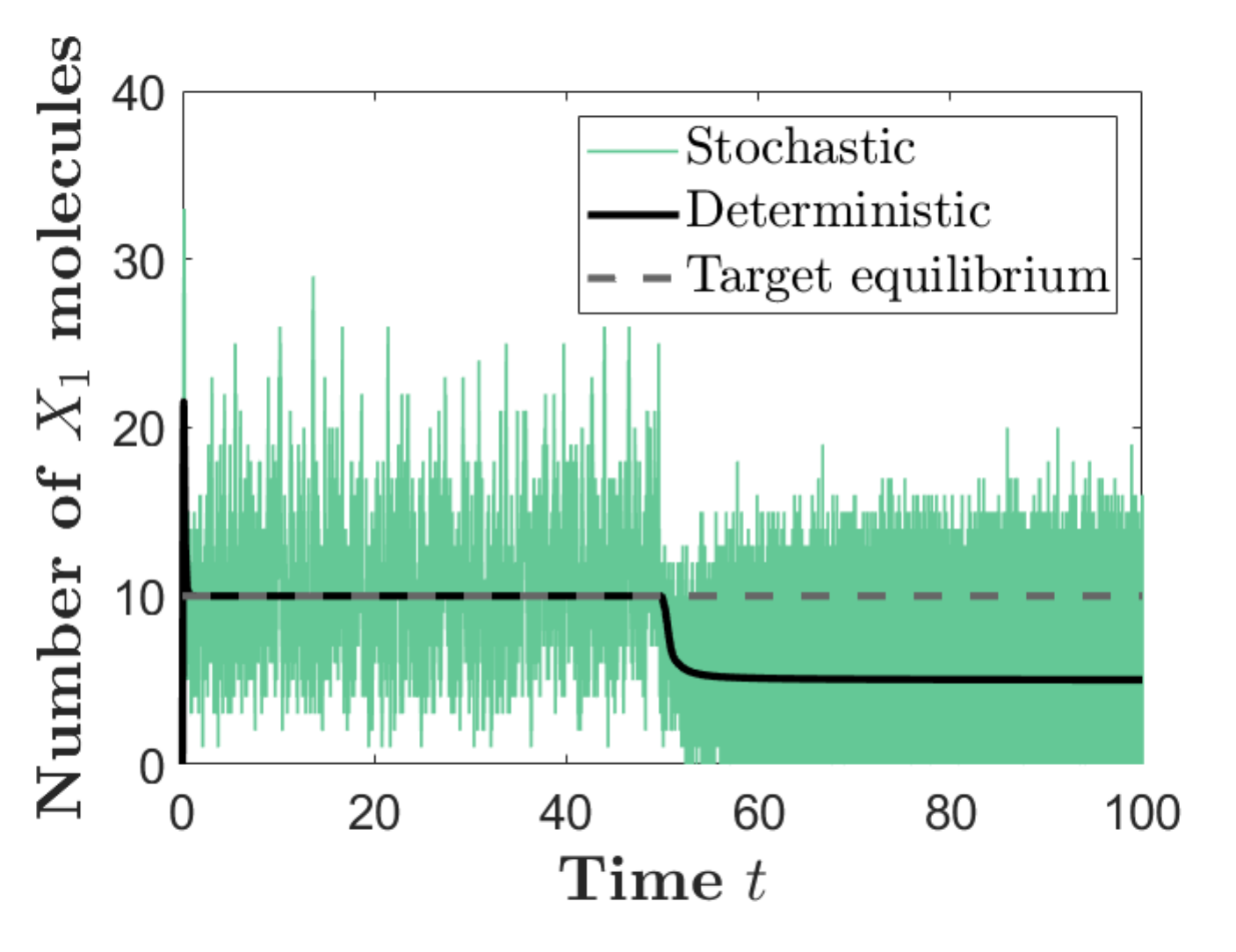}
\hskip 1mm
\includegraphics[width=0.4\columnwidth]{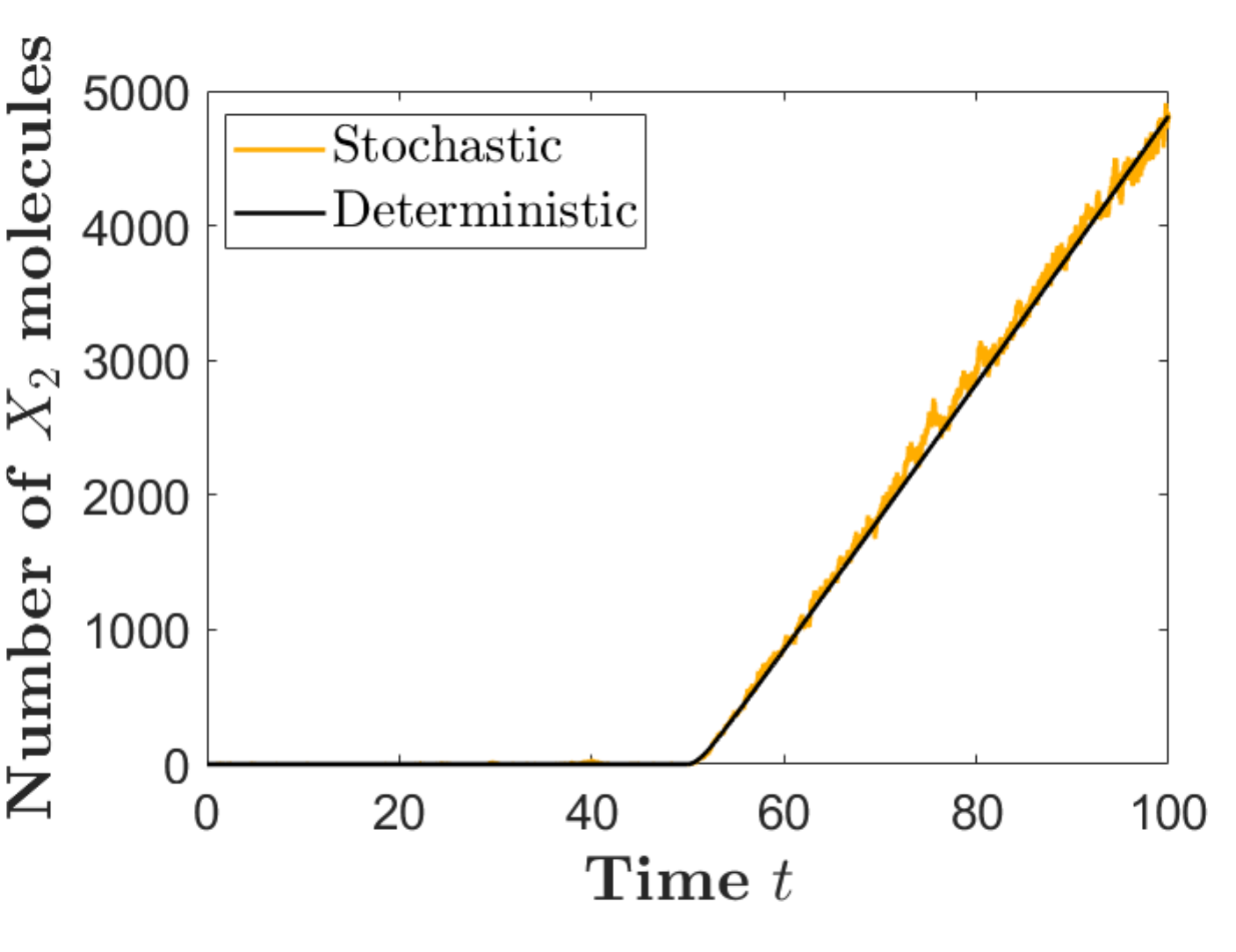}
}
\vskip -5.4cm
\leftline{\hskip 2.2cm (a) \hskip 6.3cm (b)}
\vskip 4.9cm
\centerline{
\includegraphics[width=0.4\columnwidth]{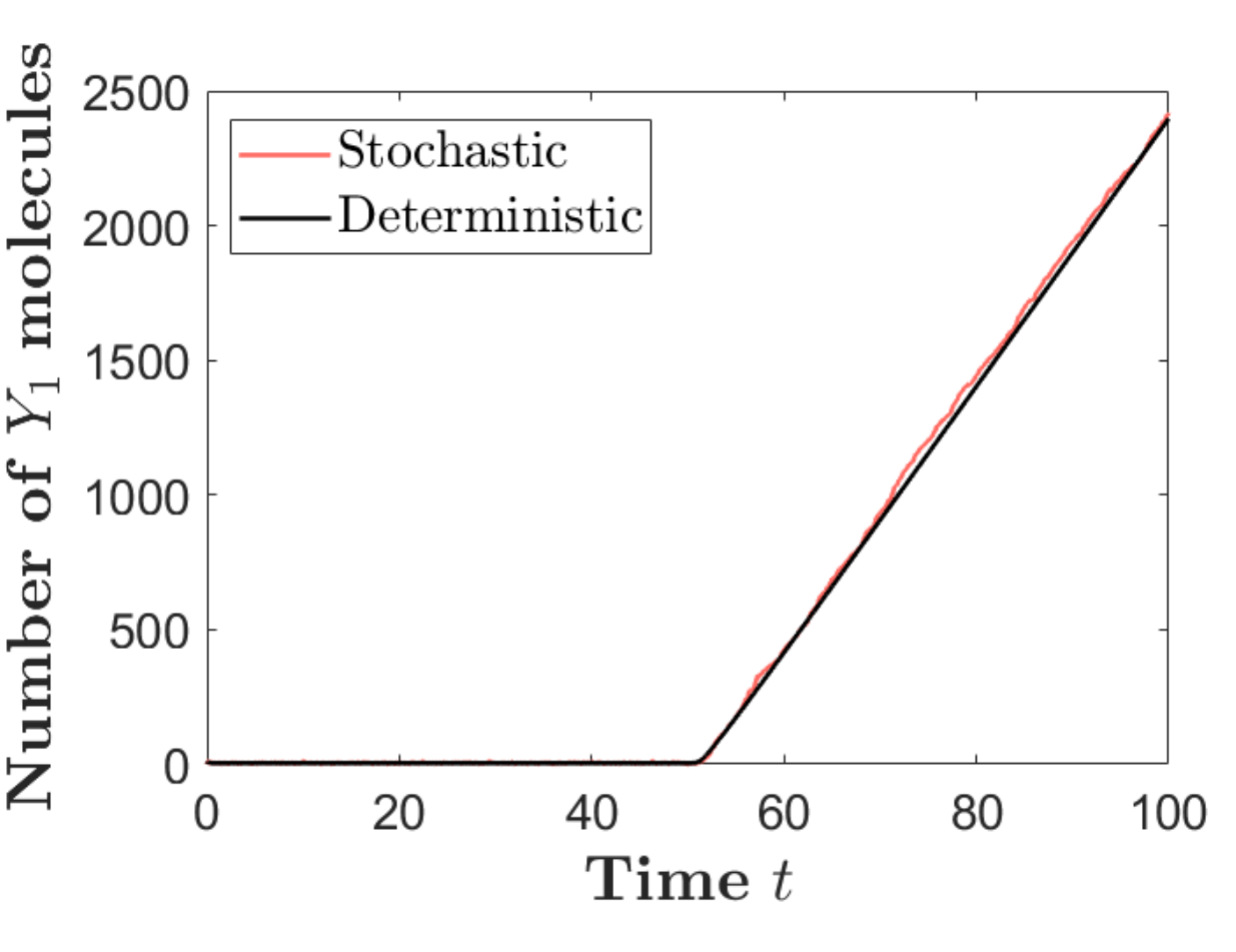}
\hskip 1mm
\includegraphics[width=0.4\columnwidth]{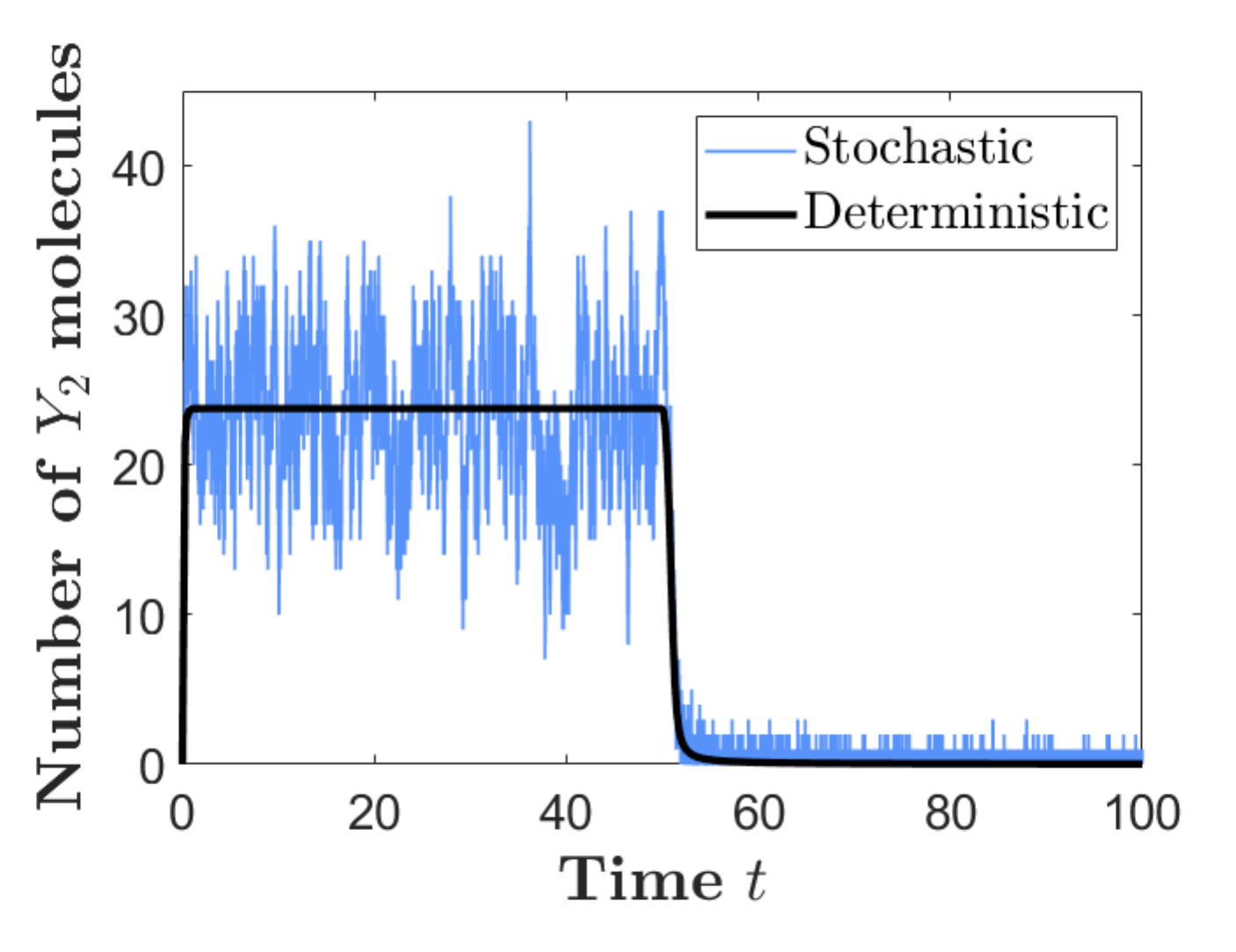}
}
\vskip -5.4cm
\leftline{\hskip 2.2cm (c) \hskip 6.3cm (d)}
\vskip 4.4cm
\caption{\it{\emph{Application of the IFC~(\ref{eq:IFC_positive_negative}) on
the input network~(\ref{eq:input_2}) with rate coefficients
$(\alpha_0, \alpha_1, \alpha_3) = (200, 1/7, 5)$,
and  $\alpha_2 = 1/3$ for $t < 50$, which changes to
$\alpha_2 = 1$ for $t \ge 50$ and leads to a catastrophic bifurcation}.
Panels~{\rm (a)}--{\rm (d)} display the deterministic 
and stochastic trajectories for the output 
network~{\rm(\ref{eq:IFC_positive_negative})}$\cup${\rm(\ref{eq:input_2})}, 
with control coefficients $(\beta_0, \beta_1,\gamma_1, \gamma_2, \gamma_3) 
= (100, 1,10,10,1)$.}}  \label{fig:nonlinear}
\end{figure}  

Let us embed the controller~(\ref{eq:IFC_positive_negative}) into~(\ref{eq:input_2}),
leading to the RREs of the output network given by
\begin{align}
\frac{\mathrm{d} x_1}{\mathrm{d} t} & = \left(\alpha_0 - \alpha_1 x_1 - \alpha_2 x_1 x_2 \right) 
+ h(x_1, y_1, y_2; \, \boldsymbol{\gamma}),
\hspace{0.3cm} \textrm{where } h(x_1, y_1, y_2; \, \boldsymbol{\gamma}) = \gamma_2 y_1 - \gamma_3 x_1 y_2,\nonumber \\
\frac{\mathrm{d} x_2}{\mathrm{d} t} & = \alpha_1 x_1 + \alpha_2 x_1 x_2 - \alpha_3 x_2, \nonumber \\
\frac{\mathrm{d} y_1}{\mathrm{d} t} & = g_1(x_1, y_1, y_2; \, \boldsymbol{\beta}, \boldsymbol{\gamma}) 
= \beta_0 - \beta_{1} y_1 y_2, \nonumber \\
\frac{\mathrm{d} y_2}{\mathrm{d} t} & = g_2(x_1, y_1, y_2; \, \boldsymbol{\beta}, \boldsymbol{\gamma})
= \gamma_{1} x_1 - \beta_{1} y_1 y_2,
 \label{eq:RREs_2}
\end{align}
which display two equilibria, one of which 
is never nonnegative, while the other equilibrium satisfies
\begin{align}
x_1^* & = \frac{\beta_0}{\gamma_1}, 
\hspace{0.5cm}
x_2^* = I_2 \left(x_1^{*}, \boldsymbol{\alpha} \right),
\hspace{0.5cm} y_1^* > 0, \; \; y_2^* > 0.
 \label{eq:output2_eq}
\end{align}
In particular, the functional form of the residual equilibrium $x_2^{*}$
from~(\ref{eq:output2_eq}) is the same as the form of $x_2^{**}$ from~(\ref{eq:input2_eq}); 
put another way, the form of the residual species equilibrium is invariant under control, 
justifying calling the function~(\ref{eq:input2_eqI}) a residual invariant.
To ensure that the output network~(\ref{eq:IFC_positive_negative})$\cup$(\ref{eq:input_2})
displays a nonnegative equilibrium, the residual invariant~(\ref{eq:input2_eqI}), 
now evaluated at the target equilibrium $x_1^* = \beta_0/\gamma_1$, 
must be nonnegative, giving rise to the condition
\begin{align}
I_2 \left(x_1^{*}, \boldsymbol{\alpha} \right) \ge 0 \iff
\frac{\beta_0}{\gamma_1} \le \frac{\alpha_3}{\alpha_2}.
 \label{eq:output2_condition}
\end{align}
Therefore, while~(\ref{eq:IFC_positive_negative}) unconditionally guarantees existence of a nonnegative equilibrium 
for stable unimolecular input networks (see Theorem~\ref{theorem:IFCsummary}), 
the same is generally false for bimolecular networks, as the equilibrium of the 
residual species, which are not interfaced with the controller, can become negative.
Let us note that the combination of parameters $\alpha_3/\alpha_2$ from~(\ref{eq:output2_condition})
cannot be interpreted as a component of the input equilibrium~(\ref{eq:input2_eq}).
Let us also note that the residual invariant evaluated at the input $x_1$-equilibrium is 
always nonnegative, $I_2 \left(x_1^{**}, \boldsymbol{\alpha} \right) = \alpha_0/\alpha_3 \ge 0$; 
equation~(\ref{eq:output2_condition}) shows that this unconditional
nonnegativity is violated when the control is applied.

In Figure~\ref{fig:nonlinear}, we display the deterministic and stochastic trajectories
for the output network~(\ref{eq:IFC_positive_negative})$\cup$ (\ref{eq:input_2})
over a time-interval such that condition~(\ref{eq:output2_condition}) is satisfied
for $t < 50$, and violated for $t \ge 50$ due to a change in $\alpha_2$. 
One can notice that the output network undergoes deterministic and 
stochastic NECs. Critically, not only does the controlling species $Y_1$ blow up, 
but also the residual species $X_2$. In other words, controller~(\ref{eq:IFC_positive_negative})
\emph{destabilizes} the originally asymptotically stable input network~(\ref{eq:input_2}).
We call this hazardous phenomenon a \emph{residual} NEC, as it arises 
because a residual species has no nonnegative equilibrium 
(equivalently, because a residual invariant is not nonnegative). 
Intuitively, when the concentration of the target species $X_1$ is increased
beyond the upper bound from~(\ref{eq:output2_condition}), residual species $X_2$, 
which influences $X_1$ negatively, counteracts the positive action of the
controlling species $Y_1$, resulting in a joint blow-up. We also display this
phenomenon in context of intracellular control in Figure~\ref{fig:cells}.

Let us stress that the condition $I_2 \left(x_1^{*}, \boldsymbol{\alpha} \right) \ge 0$
from~(\ref{eq:output2_condition}) must be obeyed by \emph{every} molecular controller 
(e.g. containing integral, proportional and/or derivative actions~\cite{Control_theory}) 
that cannot be interfaced with $X_2$. 
Put another way, no matter how one chooses the functions $g_1$, $g_2$ and $h$
in~(\ref{eq:RREs_2}), the inequality $I_2 \left(x_1^{*}, \boldsymbol{\alpha} \right) \ge 0$
must be satisfied, which imposes an upper bound on the achievable output equilibrium 
via $x_1^* < \alpha_3/\alpha_2$. The only way to eliminate this residual invariant condition
is to eliminate the residual species $X_2$, i.e. to design an appropriate controller that
can be interfaced with both $X_1$ and $X_2$. However, as stated in challenges (N), (HD)
and (U) in Section~\ref{sec:intro}, intracellular networks generally contain larger number of coupled biochemical species
with different biophysical properties, some of which may be unknown (hidden) or poorly characterized; 
therefore, it is generally unfeasible to demand that a controller is designed that can be interfaced 
with any desired species. In what follows, we further investigate this issue in context of model choice.

\subsection{Three-species extended input network: Phantom control} \label{sec:extended}
The two-dimensional network~(\ref{eq:input_2}) has been put forward as 
a reduced model of an intracellular network, obtained
by neglecting a number of molecular species that do not influence the
dynamics of $X_1$ and $X_2$, or by using perturbation theory to eliminate slower or
faster auxiliary species from considerations~\cite{Pavliotis}. However, the goal of such model reductions
is to capture the dynamics of the species $X_1$ and $X_2$ on a desired time-scale, 
and not necessarily to capture how the underlying higher-dimensional model responds to control. 
In this context, let us extend network~(\ref{eq:input_2}) by including a ``hidden" residual species 
$X_3$ into consideration, which interacts with $X_1$ and $X_2$ according to the three-dimensional input 
network $\mathcal{R}_{\alpha, \varepsilon}^3 = \mathcal{R}_{\alpha, \varepsilon}^3(X_1, X_2,X_3)$, given by
\begin{align}
\mathcal{R}_{\alpha, \varepsilon}^3:
& & \varnothing \xrightarrow[]{\alpha_{0}} & X_1, \hspace{0.3cm}
X_1 \xrightarrow[]{\alpha_{1}} X_2, \hspace{0.3cm}
X_1 + X_2 \xrightarrow[]{\alpha_{2}} 2 X_2, \hspace{0.3cm}
X_2 \xrightarrow[]{\alpha_{3}} \varnothing, \nonumber \\
& & X_3 \xrightarrow[]{\varepsilon} & X_1 + X_3, \hspace{0.3cm}
\varnothing \xrightarrow[]{\alpha_{4}}  X_3, \hspace{0.3cm}
X_3 \xrightarrow[]{\alpha_{5}} 2 X_3, \hspace{0.3cm}
X_2 + X_3 \xrightarrow[]{\alpha_{6}} X_2, 
\; \; \; \textrm{where } \frac{\alpha_5}{\alpha_6} < \frac{\alpha_0}{\alpha_3},
\; \;  0 < \varepsilon \ll 1.
\label{eq:input_3}
\end{align}
The residual species $X_3$ influences $X_1$
and $X_2$ only weakly via the slower reaction $X_3 \xrightarrow[]{\varepsilon}  
X_1 + X_3$ in~(\ref{eq:input_3}), where $0 < \varepsilon \ll 1$ is sufficiently small.
One can readily show that the dynamics of the species $X_1$ and $X_2$ 
from the input networks~(\ref{eq:input_2}) and~(\ref{eq:input_3}) are identical
as $\varepsilon \to 0$, which we denote by writing $\lim_{\varepsilon \to 0} \mathcal{R}_{\alpha, \varepsilon}^3
=  \mathcal{R}_{\alpha}^2$. Furthermore, the RREs of the network~(\ref{eq:input_3}) 
have a unique asymptotically stable positive equilibrium, given at the leading order by
\begin{align}
x_1^{**} & \approx \frac{\alpha_0 \alpha_3}{\alpha_0 \alpha_2 + \alpha_1 \alpha_3}, 
\hspace{0.5cm} 
x_2^{**} \approx I_2(x_1^{**}; \, \boldsymbol{\alpha}) = \frac{\alpha_0}{\alpha_3}, 
\hspace{0.5cm} 
x_3^{**} \approx I_3(x_1^{**}; \, \boldsymbol{\alpha})
= \frac{\alpha_4}{\alpha_6} \left(\frac{\alpha_0}{\alpha_3} - \frac{\alpha_5}{\alpha_6}\right)^{-1},
\label{eq:input3_eq}
\end{align}
where the residual invariants $I_2 = I_2(x_1; \, \boldsymbol{\alpha})$
and $I_3 = I_3(x_1; \, \boldsymbol{\alpha})$ are given by
\begin{align}
I_2(x_1; \, \boldsymbol{\alpha}) \equiv \frac{\alpha_1}{\alpha_2} x_1
\left(\frac{\alpha_3}{\alpha_2} - x_1 \right)^{-1}, 
\hspace{0.5cm} 
I_3(x_1; \, \boldsymbol{\alpha}) \equiv \frac{\alpha_4}{\alpha_6} 
\left(I_2(x_1; \, \boldsymbol{\alpha}) - \frac{\alpha_5}{\alpha_6} \right)^{-1}.
 \label{eq:input3_eqI}
\end{align}
 In what follows, we let $\boldsymbol{\alpha} = (\alpha_0, \alpha_1, \alpha_2, \alpha_3, \alpha_4, \alpha_5, \alpha_6)
= (200, 1/7, 1/3,5,1,4,1)$ and $\varepsilon = 10^{-2}$; in Figure~\ref{fig:nonlinear3_eps}, 
we demonstrate that the $(x_1, x_2)$-dynamics of networks~(\ref{eq:input_2}) and~(\ref{eq:input_3})
are then close.

\begin{figure}[!htbp]
\vskip  -1.5cm
\centerline{
\includegraphics[width=0.4\columnwidth]{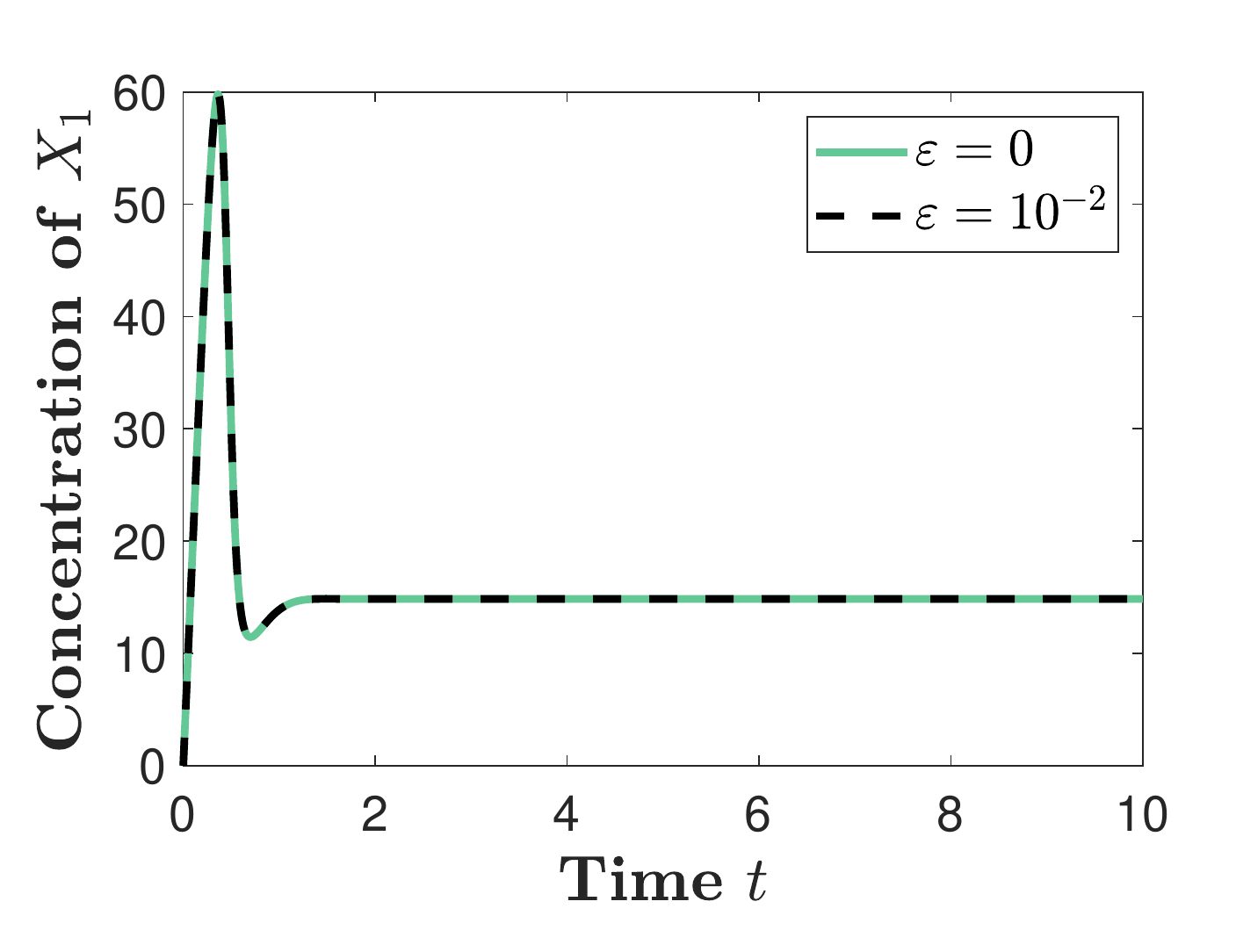}
\hskip 1mm
\includegraphics[width=0.4\columnwidth]{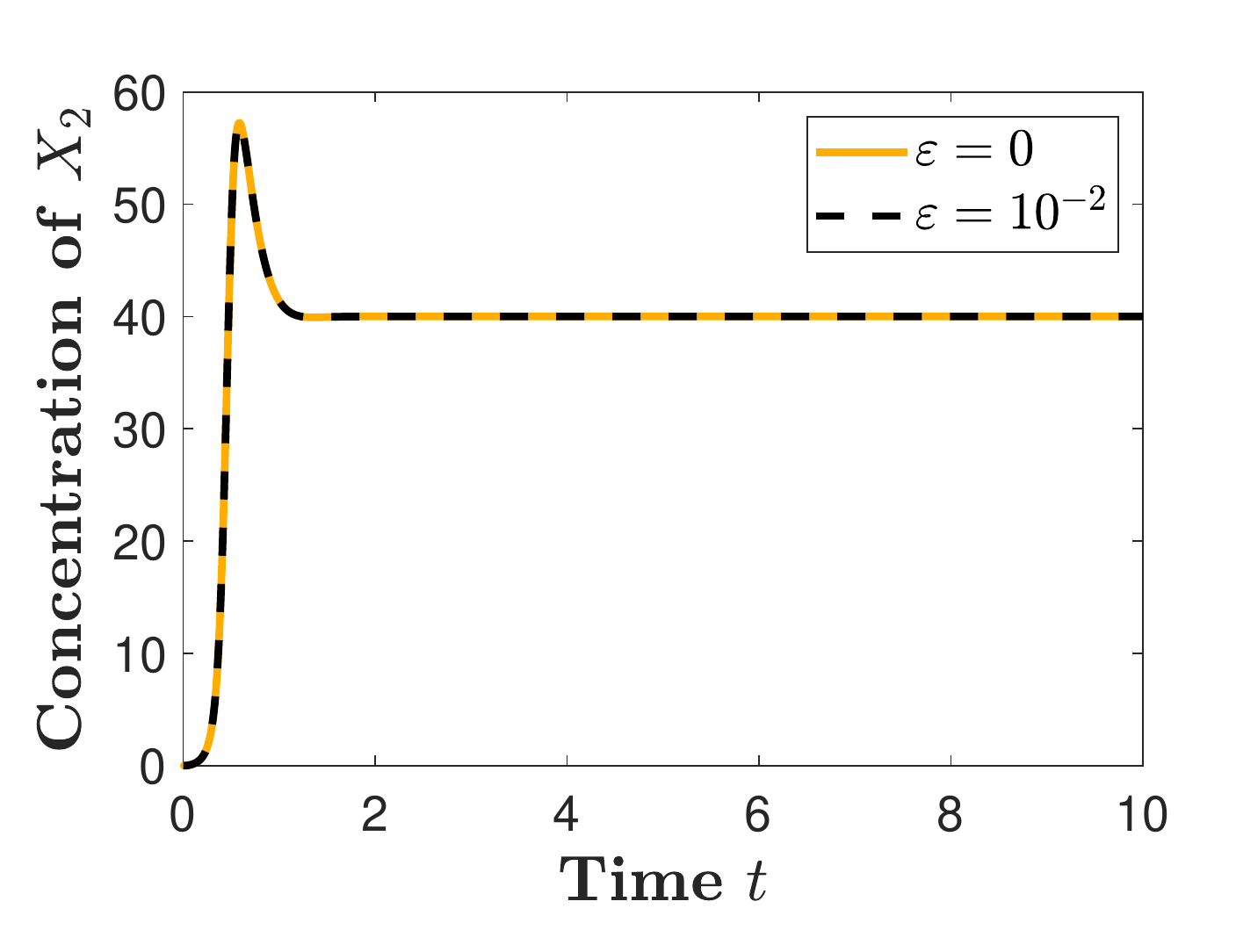}
}
\vskip -5.4cm
\leftline{\hskip 2.2cm (a) \hskip 6.3cm (b)}
\vskip 4.4cm
\caption{\it{\emph{Input network~(\ref{eq:input_3}) with rate coefficients $\boldsymbol{\alpha} = 
(\alpha_0, \alpha_1, \alpha_2, \alpha_3, \alpha_4, \alpha_5, \alpha_6)
= (200, 1/7, 1/3,5,1,4,1)$ and different values of $\varepsilon$}.
Panels~{\rm (a)}--{\rm (b)} display the deterministic trajectories for the species $X_1$
and $X_2$, respectively, from the input network~{\rm(\ref{eq:input_3})} with $\varepsilon = 0$
(equivalently, the input network~{\rm(\ref{eq:input_2})}) and 
with $\varepsilon = 10^{-2}$.}}  \label{fig:nonlinear3_eps}
\end{figure}  

Let us now embed the controller~(\ref{eq:IFC_positive_negative}) into~(\ref{eq:input_3}); 
the RREs of the output network~(\ref{eq:IFC_positive_negative})$\cup$(\ref{eq:input_3}) 
have two equilibria, both of which have identical $(x_1, x_2, x_3)$-components, given by
\begin{align}
x_1^* & = \frac{\beta_0}{\gamma_1}, 
\hspace{0.5cm}
x_2^* =  I_2(x_1^{*}; \, \boldsymbol{\alpha}),
\hspace{0.5cm} 
x_3^* = I_3(x_1^{*}; \, \boldsymbol{\alpha}).
 \label{eq:output3_eq}
\end{align}
In addition to requiring that $I_2(x_1^{*}; \, \boldsymbol{\alpha}) \ge 0$, 
one must now also demand that $I_3(x_1^{*}; \, \boldsymbol{\alpha}) \ge 0$,
to ensure that the (previously neglected) residual species $X_3$ displays a nonnegative equilibrium, 
leading to
\begin{align}
I_2 \left(x_1^{*}, \boldsymbol{\alpha} \right), 
I_3 \left(x_1^{*}, \boldsymbol{\alpha} \right) \ge 0 \iff
\frac{\alpha_3 \alpha_5}{\alpha_1 \alpha_6 + \alpha_2 \alpha_5} 
\le \frac{\beta_0}{\gamma_1} \le \frac{\alpha_3}{\alpha_2}.
 \label{eq:output3_condition}
\end{align}
By accounting for the residual species $X_3$, 
a lower bound is imposed on the achievable 
output equilibrium $x_1^* = \beta_0/\gamma_1$ in~(\ref{eq:output3_condition}),
while no such lower bound is imposed in~(\ref{eq:output2_condition}).
Therefore, while the reduced network~(\ref{eq:input_2}) is suitable 
to approximate the dynamics of $X_1$ and $X_2$ from the extended
network~(\ref{eq:input_3}), i.e. 
$\lim_{\varepsilon \to 0} \mathcal{R}_{\alpha, \varepsilon}^3 = \mathcal{R}_{\alpha}^2$,
network~(\ref{eq:input_2}) is not suitable to approximate how~(\ref{eq:input_3}) responds to control, 
i.e. $\lim_{\varepsilon \to 0} (\mathcal{R}_{\alpha, \varepsilon}^3 \cup \mathcal{R}_{\beta, \gamma}^{\pm}) 
\ne (\mathcal{R}_{\alpha}^2 \cup \mathcal{R}_{\beta, \gamma}^{\pm})$.
When a reduced network is successfully controlled under a parameter choice
for which a corresponding extended network fails to be controlled, we
say that a \emph{phantom control} occurs for the reduced network.
Hence, when the lower bound in~(\ref{eq:output3_condition})
is violated, network~(\ref{eq:IFC_positive_negative})$\cup$(\ref{eq:input_2}) 
displays phantom control.

For the chosen input coefficients $\boldsymbol{\alpha}$, 
it follows from~(\ref{eq:output3_condition}) that 
one can achieve the output equilibrium only within 
the smaller interval approximately given by $13.6 \le x_1^* \le 15$; 
therefore, even smaller uncertainties in the input coefficients
(challenge (U) from Section~\ref{sec:intro}) can then move
the system outside of this range, where the control fails.
In Figure~\ref{fig:nonlinear3}(a)--(c), we display the deterministic trajectories
for the species $X_1$, $X_3$ and $Y_2$ when the target equilibrium is given by
 $x_1^* = \beta_0/\gamma_1 = 5$, thus violating only the lower bound 
from~(\ref{eq:output3_condition}). One can notice
that a deterministic NEC occurs -  the target species $X_1$ fails to reach the desired equilibrium, while 
the residual species $X_3$ and the controlling species $Y_2$ blow-up; one can similarly
show that a stochastic NEC occurs. Analogous plots are shown in Figure~\ref{fig:nonlinear3}(d)--(f) 
when $x_1^* = \beta_0/\gamma_1 = 30$, violating the upper bound from~(\ref{eq:output3_condition});
one can notice that the species $X_2$ and $Y_1$ blow up, as in Figure~\ref{fig:nonlinear}.

\begin{figure}[!htbp]
\vskip  -2.0cm
\centerline{
\includegraphics[width=0.35\columnwidth]{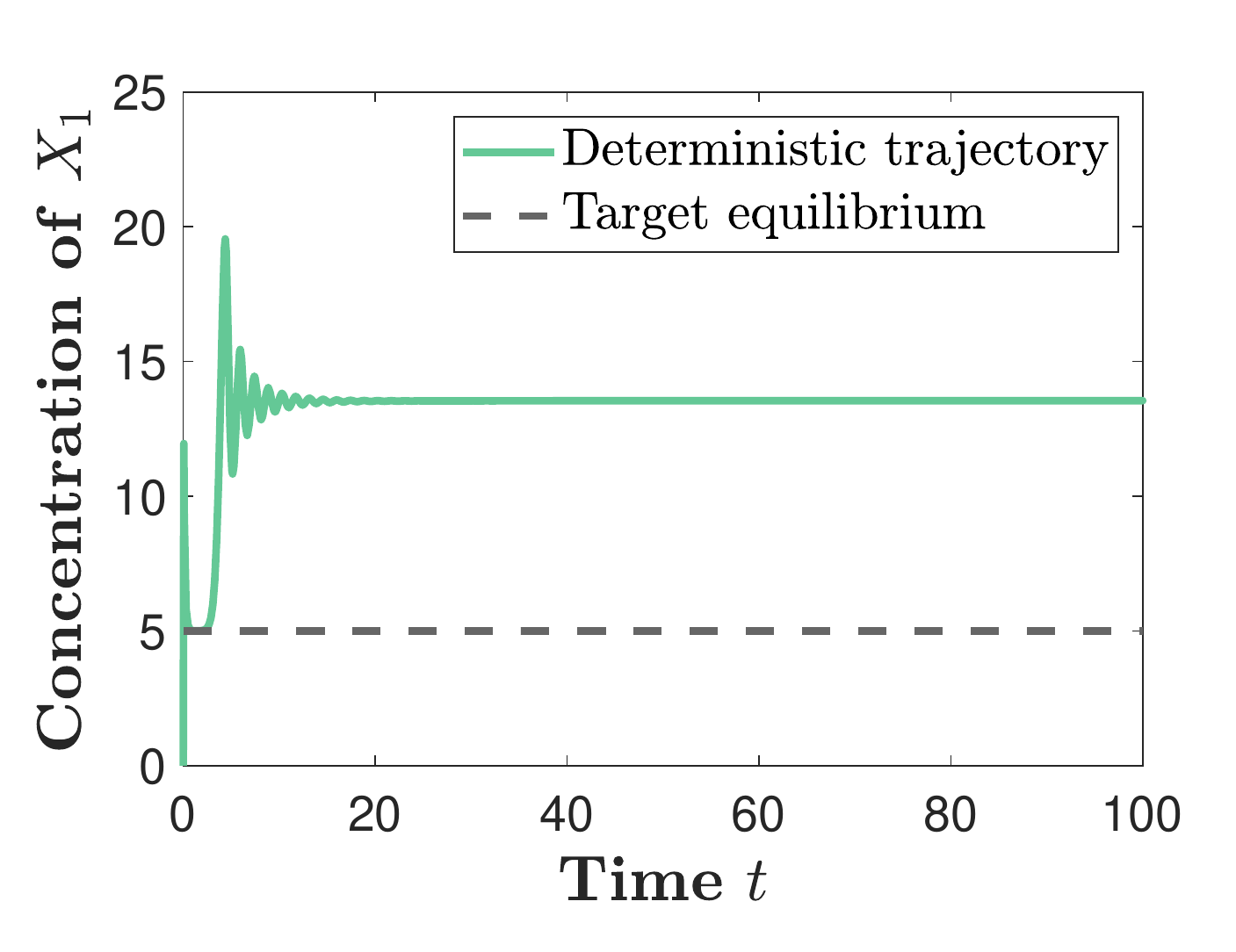}
\hskip 1mm
\includegraphics[width=0.35\columnwidth]{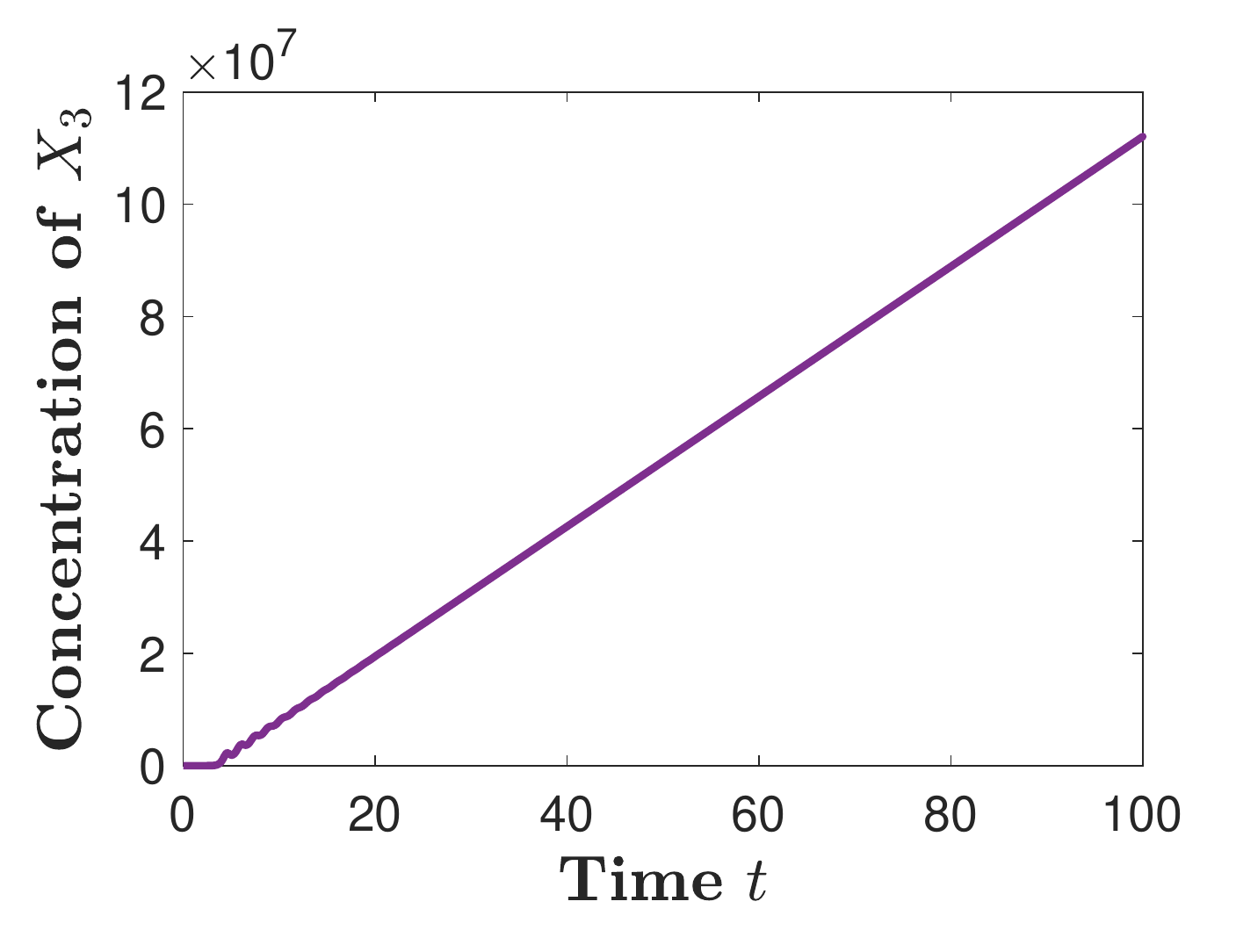}
\hskip 1mm
\includegraphics[width=0.35\columnwidth]{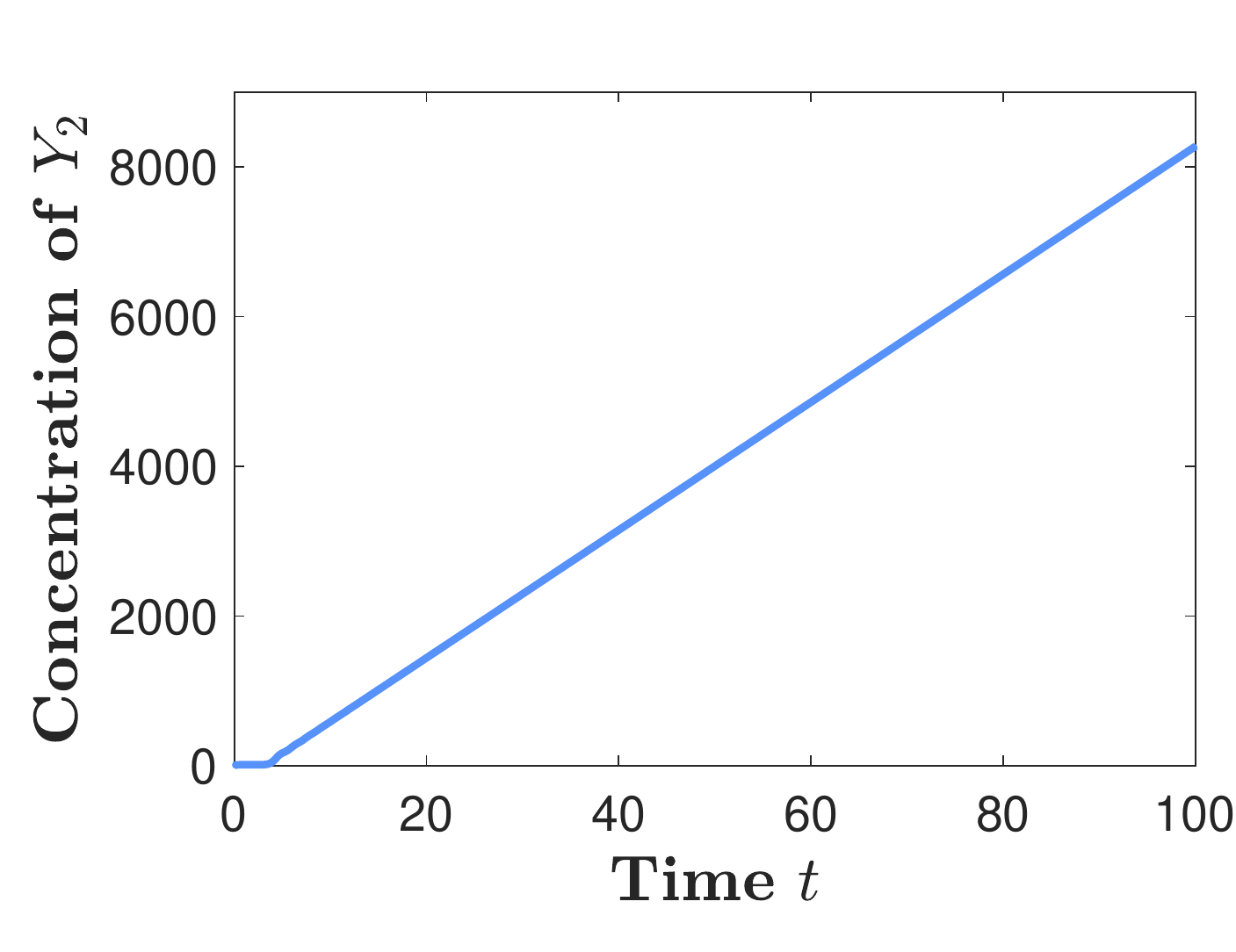}
}
\vskip -4.8cm
\leftline{\hskip -0.4cm (a) \hskip 5.5cm (b) \hskip 5.6cm (c)}
\vskip 4.2cm
\centerline{
\includegraphics[width=0.35\columnwidth]{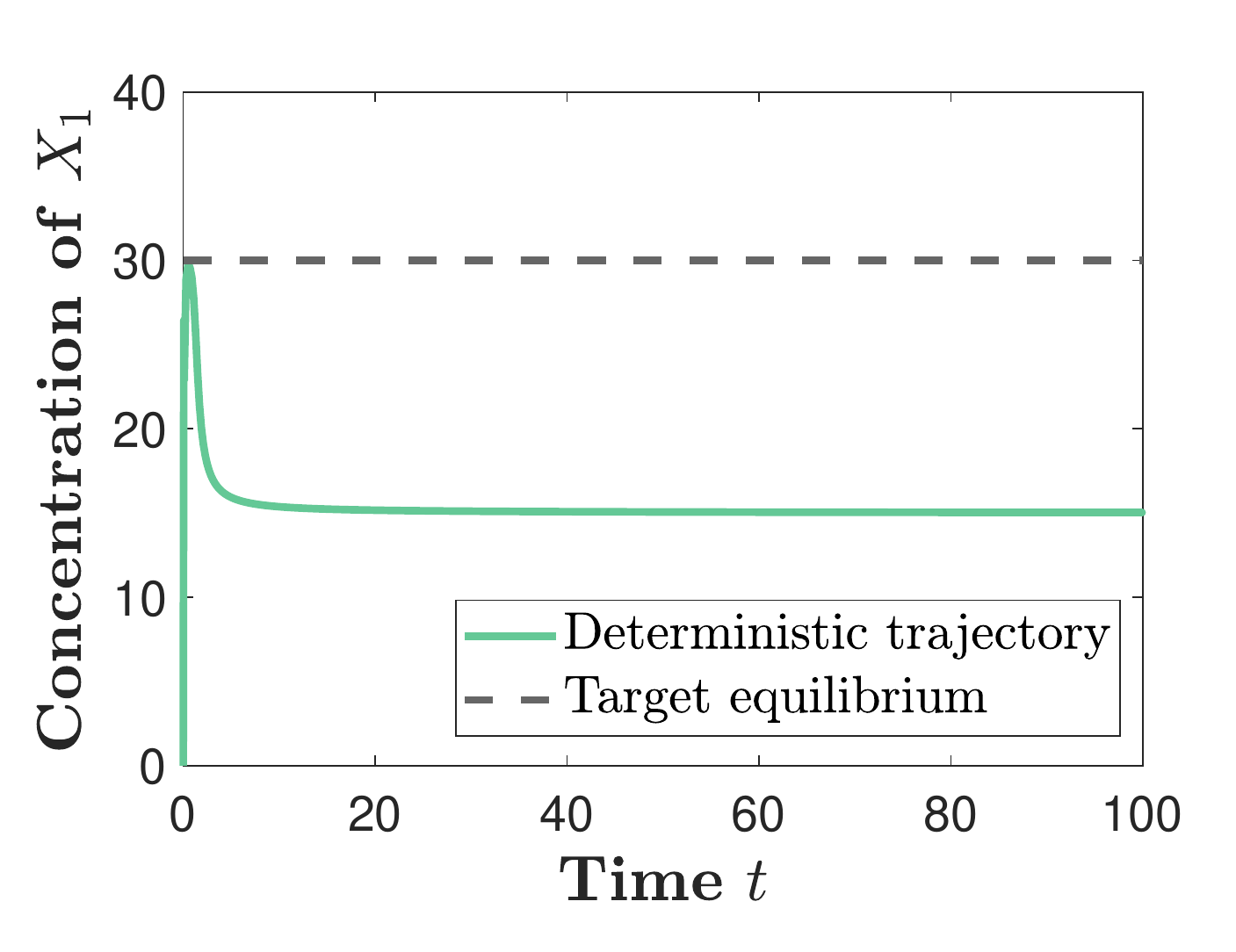}
\hskip 1mm
\includegraphics[width=0.35\columnwidth]{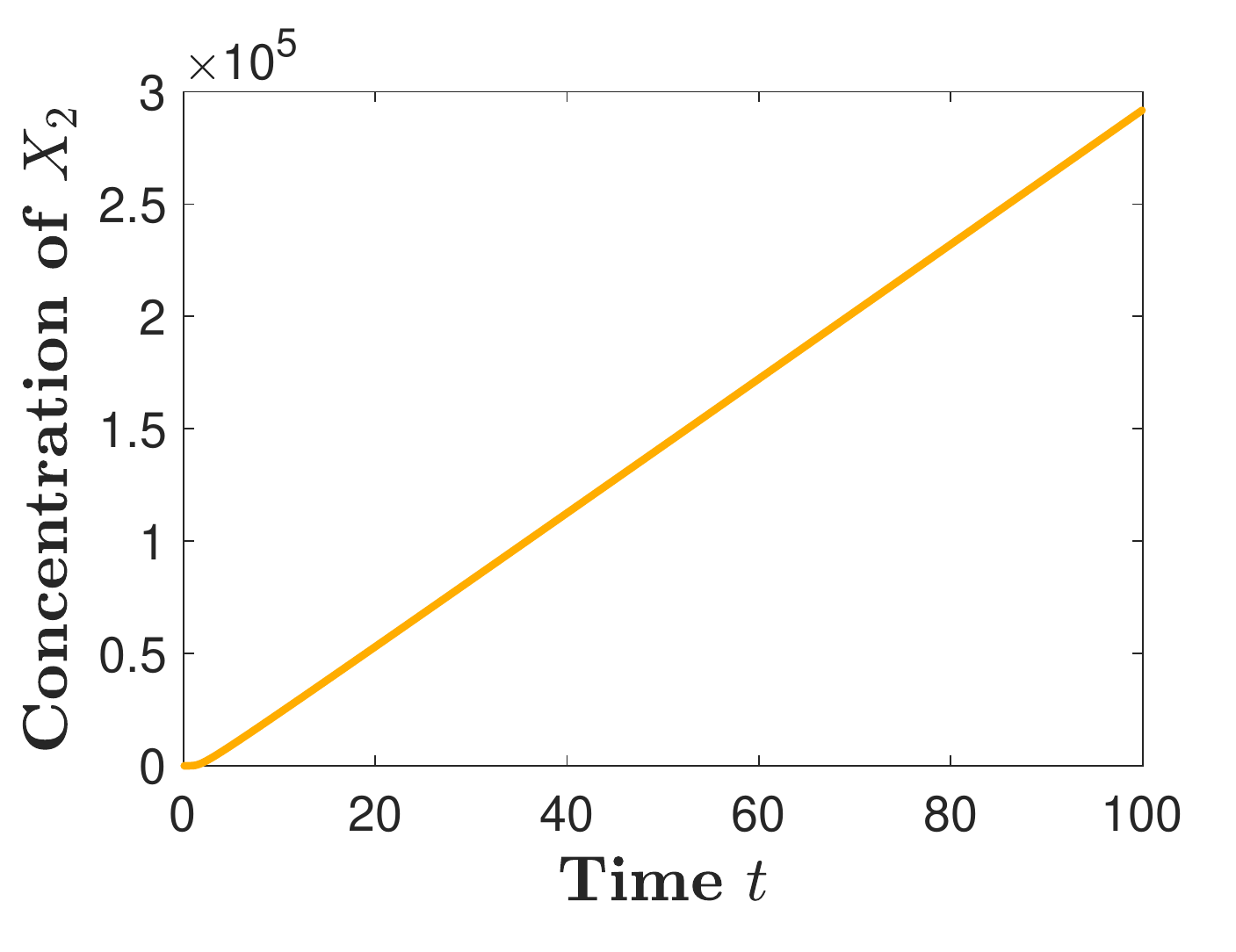}
\hskip 1mm
\includegraphics[width=0.35\columnwidth]{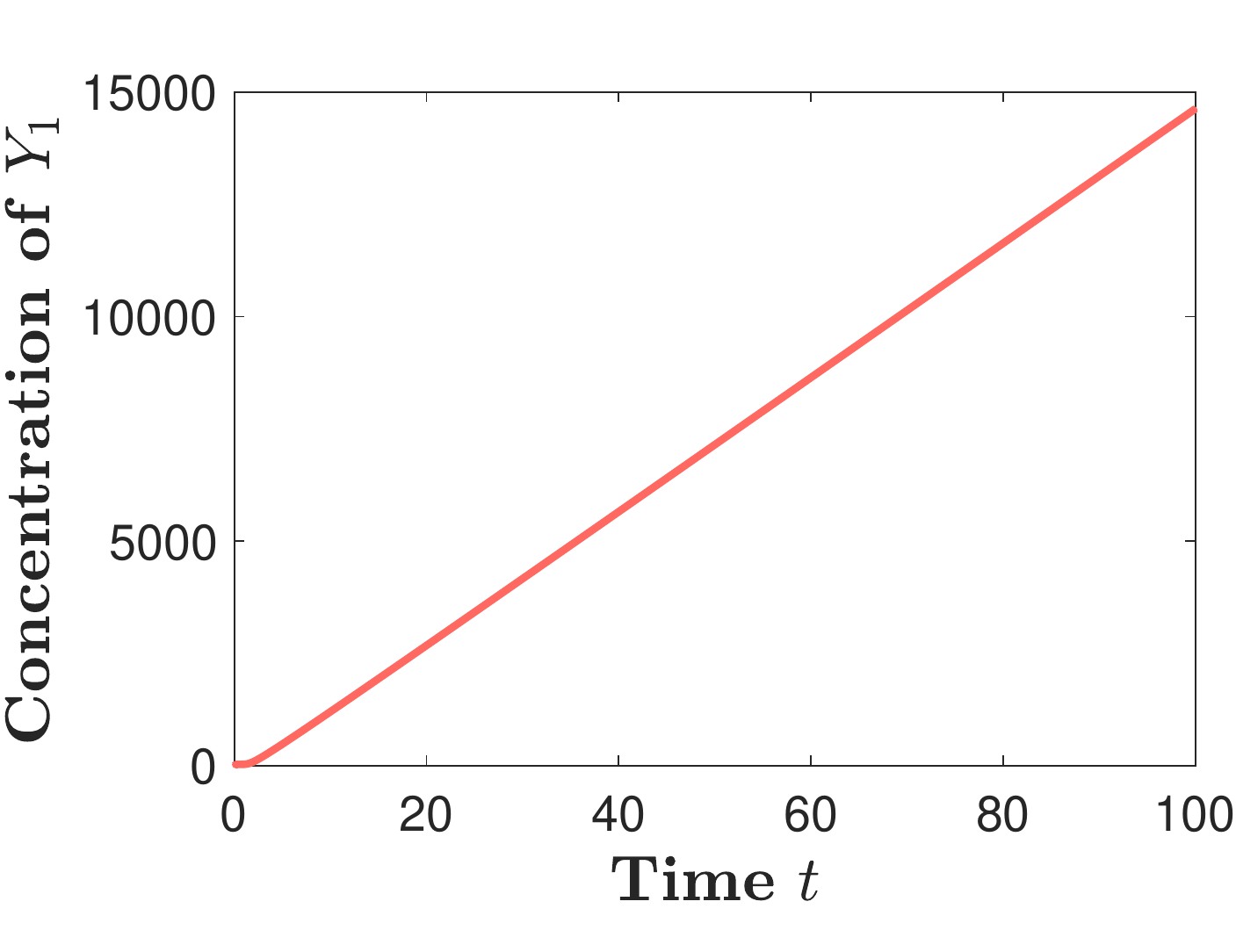}
}
\vskip -4.8cm
\leftline{\hskip -0.4cm (d) \hskip 5.5cm (e) \hskip 5.6cm (f)}
\vskip 3.9cm
\caption{\it{\emph{Application of the IFC~(\ref{eq:IFC_positive_negative}) on
the input network~(\ref{eq:input_3}) with rate coefficients
$(\alpha_0, \alpha_1, \alpha_2, \alpha_3, \alpha_4, \alpha_5, \alpha_6)
= (200, 1/7, 1/3,5,1,4,1)$ and $\varepsilon = 10^{-2}$}.
Panels~{\rm (a)}--{\rm (c)} display some of the deterministic trajectories for the output 
network~{\rm(\ref{eq:IFC_positive_negative})}$\cup${\rm(\ref{eq:input_3})}, 
when control coefficients are fixed to $(\beta_0, \beta_1,\gamma_1, \gamma_2, \gamma_3) 
= (50, 1,10,100,10)$. Analogous plots are shown in Panels~{\rm (d)}--{\rm (f)} 
when $(\beta_0, \beta_1,\gamma_1, \gamma_2, \gamma_3) = (300, 1,10,100,10)$.}}  \label{fig:nonlinear3}
\end{figure}  

\subsection{Arbitrary bimolecular input networks} \label{sec:nonnegativity}
One can continue the model-refinement process which led from
network~(\ref{eq:input_2}) to~(\ref{eq:input_3}), by including 
more auxiliary species $X_4, X_5, X_6, \ldots$, 
each of which generally introduces an additional constraint, $I_4, I_5, I_6, \ldots \ge 0$, 
which must be obeyed for an equilibrium to be nonnegative.
More generally, let $\mathcal{R}_{\alpha}$ be an arbitrary $N$-dimensional input network
satisfying properties (N), (HD) and (U) from Section~\ref{sec:intro},
$\mathcal{R}_{\beta, \gamma}$ an arbitrary $M$-dimensional IFC, 
and   $\mathcal{R}_{\alpha, \beta, \gamma} = \mathcal{R}_{\alpha} \cup \mathcal{R}_{\beta, \gamma}$
 the corresponding $(N + M)$-dimensional output network.
In order to ensure that an equilibrium $(\mathbf{x}^*, \mathbf{y}^*) \in \mathbb{R}^{N + M}$
of $\mathcal{R}_{\alpha, \beta, \gamma} $ is nonnegative, there are exactly two options. 

The first option is to choose appropriate values for the control coefficients
$\boldsymbol{\beta}$ and $\boldsymbol{\gamma}$.
However, the proportion of the state-space $\mathbb{R}^{N + M}$
occupied by the nonnegative orthant $\mathbb{R}_{\ge}^{N + M}$ 
is given by $2^{-(N + M)}$, which decreases exponentially as the dimension of the input
network $N$ increases - a fact known as the curse of dimensionality. 
Therefore, it is unlikely that an unguided choice of values for 
$\boldsymbol{\beta}$ and $\boldsymbol{\gamma}$ would achieve a
nonnegative equilibrium. In particular, the values of control coefficients
must be chosen so that the equilibria of 
\emph{all} of the $(N + M)$ species are nonnegative, which involves
solving a system of $(N + M)$ nonlinear inequalities 
with uncertain coefficients $\boldsymbol{\alpha}$
- an intractable theoretical problem. Furthermore, owing 
to a large number of inequalities, the allowed values for 
$\boldsymbol{\beta}$ and $\boldsymbol{\gamma}$ may be confined to smaller sets, which can
lead to larger parameter regime where IFCs can catastrophically fail. 
Such considerations have been demonstrated 
already for the two-dimensional network~(\ref{eq:input_2}) containing only one bimolecular reaction,
and the three-dimensional network~(\ref{eq:input_3}) containing two bimolecular reactions.

A necessary condition to bypass this intractable problem is to eliminate all of the residual invariant 
constraints $I_n, I_{n+1}, I_{n+2}, \ldots \ge 0$, which can be achieved only by eliminating all of the residual species. 
Therefore, the second option is to design a suitable controller that can be interfaced with \emph{all} of the
$N$ input species - an unfeasible experimental problem. However, for theoretical purposes, 
assume all of the input species are targetable; does there then exist an IFC
that ensures existence of a nonnegative equilibrium for any choice of the parameters 
$\boldsymbol{\alpha}$, $\boldsymbol{\beta}$ and $\boldsymbol{\gamma}$, 
thus mitigating challenge (U)?

\begin{theorem}\label{theorem:nonlinear}
Assume $\mathcal{R}_{\alpha}$ is an arbitrary mass-action input network
with $N$ input species, all of which are targetable. Then, there exists a bimolecular integral-feedback controller 
$\mathcal{R}_{\beta, \gamma}$, containing $2 N$ controlling species, 
such that the output network $\mathcal{R}_{\alpha, \beta, \gamma}
= \mathcal{R}_{\alpha} \cup \mathcal{R}_{\beta, \gamma}$ has a positive
equilibrium for all parameter values $(\boldsymbol{\alpha}, \boldsymbol{\beta},\boldsymbol{\gamma}) 
\in \mathbb{R}_{>}^{a + b + c}$.
\end{theorem}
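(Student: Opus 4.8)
The plan is to construct the controller explicitly by applying the bimolecular IFC design from Section~\ref{sec:nonlinear} independently to each of the $N$ targetable species, and then verify that the resulting coupled system always admits a positive equilibrium. Concretely, for each input species $X_k$, $k = 1, \dots, N$, I would introduce a pair of controlling species $(Y_{1,k}, Y_{2,k})$ and the reactions from~(\ref{eq:IFCnetapp}) with direct interfacing ($i = j = k$):
\begin{align}
\varnothing & \xrightarrow[]{\beta_{0,k}} Y_{1,k}, \qquad
Y_{1,k} + Y_{2,k} \xrightarrow[]{\beta_{1,k}} \varnothing, \qquad
X_k \xrightarrow[]{\gamma_{1,k}} X_k + Y_{2,k}, \nonumber \\
Y_{1,k} & \xrightarrow[]{\gamma_{2,k}} X_k + Y_{1,k}, \qquad
X_k + Y_{2,k} \xrightarrow[]{\gamma_{3,k}} Y_{2,k}, \nonumber
\end{align}
giving $2N$ controlling species in total. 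The key structural point, inherited from~(\ref{eq:RREs_debar3}), is that the equilibrium equations decouple in a favorable triangular way: setting $\mathrm{d}y_{2,k}/\mathrm{d}t = \mathrm{d}y_{1,k}/\mathrm{d}t = 0$ forces $x_k^* = \beta_{0,k}/\gamma_{1,k}$ for every $k$ \emph{regardless of the input dynamics}, because each $y_{2,k}$ is produced catalytically in $X_k$ alone and each $y_{1,k}$ balances against $y_{1,k}y_{2,k}$ alone.

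The main steps, in order, would be: (i) write the RREs for the output network, with the input contribution $f_k(\mathbf{x};\boldsymbol{\alpha})$ left as an unknown mass-action polynomial and the controller contributions $\gamma_{2,k} y_{1,k} - \gamma_{3,k} x_k y_{2,k}$ added to the $x_k$-equation; (ii) observe that the $2N$ controller equilibrium conditions pin down $\mathbf{x}^* = (\beta_{0,1}/\gamma_{1,1}, \dots, \beta_{0,N}/\gamma_{1,N})$, which is a positive vector by positivity of $\boldsymbol{\beta}, \boldsymbol{\gamma}$, and is \emph{independent} of $\boldsymbol{\alpha}$; (iii) substitute this $\mathbf{x}^*$ into the $N$ remaining equilibrium equations (the $\mathrm{d}x_k/\mathrm{d}t = 0$ conditions) to obtain, for each $k$, a scalar equation relating $y_{1,k}^*$ and $y_{2,k}^*$; (iv) eliminate $y_{2,k}^* = (\beta_{0,k}/\beta_{1,k})(y_{1,k}^*)^{-1}$ from the core balance, yielding for each $k$ a quadratic in $y_{1,k}^*$ of exactly the form in~(\ref{eq:RREs_debar3}),
\begin{align}
(y_{1,k}^*)^2 + \gamma_{2,k}^{-1} f_k(\mathbf{x}^*;\boldsymbol{\alpha})\, y_{1,k}^* - \frac{\gamma_{3,k}}{\gamma_{1,k}\gamma_{2,k}} \frac{\beta_{0,k}^2}{\beta_{1,k}} = 0; \nonumber
\end{align}
(v) note that the constant term is strictly negative, so by the intermediate value theorem (or Descartes' rule of signs) this quadratic has exactly one positive root $y_{1,k}^* > 0$, hence $y_{2,k}^* > 0$ as well, for every admissible value of the uncertain quantity $f_k(\mathbf{x}^*;\boldsymbol{\alpha})$. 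Assembling these gives a positive equilibrium $(\mathbf{x}^*, \mathbf{y}^*)$ for all $(\boldsymbol{\alpha}, \boldsymbol{\beta}, \boldsymbol{\gamma}) \in \mathbb{R}_{>}^{a+b+c}$.

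The routine verification that this network is indeed an IFC — i.e. that it computes a genuine time-integral of the errors $\beta_{0,k} - \gamma_{1,k} x_k$ through the $Y_{1,k}$-dynamics, in the sense of Section~\ref{sec:nonlinear} — follows the same kinetic-transformation argument already established there, applied componentwise. The one genuine subtlety, and the step I expect to be the main obstacle, is step~(iii)–(iv): when all $N$ species are interfaced, the input term $f_k(\mathbf{x};\boldsymbol{\alpha})$ may depend on \emph{all} components of $\mathbf{x}$, not just $x_k$; but since the controller equilibrium conditions simultaneously fix \emph{every} $x_k^*$, the quantity $f_k(\mathbf{x}^*;\boldsymbol{\alpha})$ is still a well-defined (if unknown) real number, so the per-species quadratic decoupling survives. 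The delicate point is merely to confirm that no input reaction can introduce a term into the $x_k$-equation that destroys this structure — but since the controller's $x_k$-equation augmentation $\gamma_{2,k}y_{1,k} - \gamma_{3,k}x_k y_{2,k}$ and the core equations involve the $Y$-species only through the stated reactions, and the input network does not involve any $Y_{j,k}$ at all, the block-triangular elimination is exact. I would also remark that positivity (as opposed to mere nonnegativity) is automatic here precisely because the negative interfacing $\mathcal{R}_\gamma^-$ makes the constant term of each quadratic strictly negative, ruling out the boundary root $y_{1,k}^* = 0$ that caused the NEC for the pure-positive AIFC.
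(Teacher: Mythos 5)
Your proposal is correct and follows essentially the same route as the paper's proof in Appendix~\ref{app:biproof}: attach the positive--negative controller~(\ref{eq:IFCnetapp_i}) with direct interfacing to each of the $N$ species, pin down $x_k^* = \beta_{0,k}/\gamma_{1,k}$ from the controller core equations, and solve the resulting per-species quadratic, whose strictly negative constant term guarantees exactly one positive root for $y_{1,k}^*$. In fact your write-up is more explicit than the paper's (which simply invokes ``the same argument as in~(\ref{eq:RREs_debar3})''), and your observation that $f_k(\mathbf{x}^*;\boldsymbol{\alpha})$ is a well-defined scalar because every component of $\mathbf{x}^*$ is fixed by the controllers is exactly the point that makes the componentwise decoupling work.
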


\begin{proof}
See Appendix~\ref{app:biproof} for a constructive proof.
\end{proof}
\noindent In Appendix~\ref{app:biproof}, we design a controller that achieves this task
by generalizing the approach from Section~\ref{sec:nonlinear}, 
and apply the controller of the form~(\ref{eq:IFC_positive_negative}) to every input species; 
therefore, the dimension of the controller scales with the dimension of the input network.

\section{Discussion} \label{sec:discussion}
In this paper, we have demonstrated that molecular IFCs can display 
severe stability issues when applied to biochemical networks
subject to uncertainties. In particular, all nonnegative equilibria of the controlled 
network can vanish under IFCs, and some of the species abundances can then blow up. 
We call this hazardous phenomenon a \emph{negative-equilibrium catastrophe} (NEC).
In context of electro-mehanical systems, analogous phenomenon
is known as integrator windup~\cite{Control_theory} - equilibria
of the controlled system reach beyond the boundary of physically allowed values.
For some electro-mechanical systems, one only requires the equilibria to
be real (as opposed to complex); for biochemical systems,
one additionally requires that the equilibria are also nonnegative. 
Let us stress that requiring an equilibrium to be nonnegative
is significantly more restrictive than only requiring it is real. 
For example, while real \emph{linear} systems of equations generically have a unique real solution,
 finding parameter regimes where a positive solution exists is non-trivial~\cite{LinearPositive}; 
for \emph{nonlinear} systems, determining such parameter regimes is generally even more challenging, 
see Section~\ref{sec:nonnegativity}. 
The consequences of these issues, which are unavoidable for biochemical systems,
have been under-explored in the molecular control literature to date.

We have shown in Section~\ref{sec:linear} that, due to the nonnegativity constraint,
affine (unimolecular) biochemical systems cannot achieve integral control and, even worse, 
lead to catastrophes (NECs); in contrast, many affine electro-mechanical systems can achieve
integral control~\cite{Control_theory}.
In Section~\ref{sec:nonlinear}, using the theoretical framework from~\cite{Me_Homoclinic}, 
we have then constructed a family of bimolecular (nonlinear) IFCs~(\ref{eq:IFCnetapp}).
In Section~\ref{sec:first_order}, we have proved in Theorem~\ref{theorem:IFCsummary}
 that a particular two-dimensional (two-species) molecular IFC of the form~(\ref{eq:IFCnetapp})
ensures existence of a nonnegative equilibrium when applied to stable input unimolecular 
networks of arbitrary dimensions; in particular, NECs can be eliminated in a dimension-independent manner
 when unimolecular networks are controlled.
In contrast, in Section~\ref{sec:second_order}, we have demonstrated that
control of bimolecular networks suffers from the \emph{curse of dimensionality} - 
every species in the input network generally introduces a constraint which must be 
obeyed for a nonnegative equilibrium to exist, leading to an intractable problem.
For theoretical purposes, we have proved in Theorem~\ref{theorem:nonlinear} that, 
 assuming all of the input species are known and targetable - a generally 
experimentally unfeasible assumption, then there exists  a higher-dimensional 
IFC that always eliminates NECs. 
Let us note that, in all of the biochemical networks studied in this paper,
NECs simultaneously occur at both deterministic and stochastic levels. 
In particular, as opposed to the instability arising from bounded deterministic 
oscillations~\cite{Khammash,AIFC_1,AIFC_2},
which average out at the stochastic level, NECs generally persist in the stochastic setting.

Intracellular networks are in general bimolecular, higher-dimensional
and subject to uncertainties, as respectively described by the properties 
(N), (HD) and (U) in Section~\ref{sec:intro}. 
Due to these challenges, generally only reduced (approximate) models 
of intracellular networks are available, which are obtained by eliminating a number of the underlying auxiliary
coupled molecular species and reactions. The objective of these lower-dimensional reduced models is to 
capture the dynamics of desired intracellular species on a time-scale of interest~\cite{Pavliotis}. 
However, using reduced models for purpose of control is generally unjustified due to NECs, 
i.e. reduced models do not necessarily capture how the underlying extended models respond to control. 
In particular, it takes only one of the many input species to 
display a negative equilibrium for control to fail and a catastrophic event to unfold; 
hence, including a previously neglected molecular species into a successfully controlled reduced 
model can result in an extended model for which control fails - a phenomenon 
we call \emph{phantom control}, see Section~\ref{sec:second_order}. 
Let us note that, when a reduced model displaying a NEC
is extended to include e.g. finite resources (buffer) or dilution,
then some of the underlying species concentrations, instead of growing to infinity, 
would reach finite, but larger, values; nevertheless, the effects of unwanted large concentrations, 
such as sequestration of ribosomes and depletion of metabolites, are potentially very harmful.
NECs therefore place a fundamental limit to applicability of molecular IFCs in synthetic biology.
In particular, to avoid NECs, instead of a systematic approach, 
an ad-hoc approach is generally necessary, 
consisting of gathering detailed experimental information about a desired intracellular network 
and designing suitable higher-dimensional controllers that
 can be interfaced with larger number of appropriate input species.

\section*{Acknowledgements}
This work was supported by the EPSRC grant EP/P02596X/1.
Thomas E. Ouldridge would like to thank the Royal Society for
a University Research Fellowship.

\appendix 

\section{Appendix: Background} \label{app:background}
\emph{Notation}. Given sets $\mathcal{A}_1$ and $\mathcal{A}_2$, 
their union, intersection, and difference are denoted by $\mathcal{A}_1 \cup \mathcal{A}_2$, 
$\mathcal{A}_1 \cap \mathcal{A}_2$, and $\mathcal{A}_1 \setminus \mathcal{A}_2$, respectively.
The largest element of a set of numbers $\mathcal{A}$ is denoted by $\max \mathcal{A}$. 
The empty set is denoted by $\emptyset$. 
Set $\mathbb{Z}$ is the space of integer numbers, 
$\mathbb{Z}_{\ge}$ the space of nonnegative integer numbers, 
and $\mathbb{Z}_{>}$ the space of positive integer numbers. 
Similarly, $\mathbb{R}$ is the space of real numbers, 
$\mathbb{R}_{\ge}$ the space of nonnegative real numbers,
 and $\mathbb{R}_{>}$ the space of positive real numbers. 
Euclidean column vectors are denoted in boldface, 
$\mathbf{x} = (x_1, x_2, \ldots, x_N)^{\top} \in \mathbb{R}^{N} = \mathbb{R}^{N \times 1}$, 
where $\cdot^{\top}$ denotes the transpose operator. The $i$-standard basis
vector is denoted by $\mathbf{e}_i \equiv (\delta_{i,1}, \delta_{i,2}, \ldots, \delta_{i,N})^{\top} \in \mathbb{R}^N$, 
where $\delta_{i,j} = 0$ if $i \ne j$, and $\delta_{i,j} = 1$ if $i = j$.
The zero-vector is denoted by $\mathbf{0} \equiv (0, 0, \ldots, 0)^{\top} \in \mathbb{R}^N$, and we let
$(\mathbf{1} \equiv \sum_{i = 1}^N \mathbf{e}_i) \in \mathbb{R}^N$.
Given two Euclidean vectors $\mathbf{x}, \mathbf{y} \in \mathbb{R}^{N}$,
their inner-product is denoted by $\langle \mathbf{x}, \mathbf{y} \rangle 
\equiv \sum_{i = 1}^N x_i y_i$. Abusing the notation slightly, given two sequences
$u, v : \mathbb{Z}_{\ge} \to \mathbb{R}$, their inner-product is also denoted by
 $\langle \mathbf{u}(\mathbf{x}), \mathbf{v}(\mathbf{x}) \rangle 
\equiv \sum_{\mathbf{x} \in \mathbb{Z}_{\ge}} u(\mathbf{x}) v(\mathbf{x})$; 
we let $\|u(\mathbf{x}) \|_{l_1} \equiv \sum_{\mathbf{x} \in \mathbb{Z}_{\ge}} |u(\mathbf{x})|$ 
denote the $l^1$-norm of $u(\mathbf{x})$. 
Given a matrix $A \in \mathbb{R}^{N \times N}$, with $(i,j)$-element $\alpha_{i,j} \in \mathbb{R}$,
we denote the $i$-row and $j$-column of $A$ by
$\boldsymbol{\alpha}_{i, \cdot} \equiv (\alpha_{i,1}, \alpha_{i,2}, \ldots, \alpha_{i,N}) \in \mathbb{R}_{\ge}^{1 \times N}$, 
and $\boldsymbol{\alpha}_{\cdot, j} \equiv (\alpha_{1,j}, \alpha_{2,j}, \ldots, \alpha_{N,j})^{\top} \in \mathbb{R}_{\ge}^{N \times 1}$, respectively. 
The identity matrix is denoted by $I \equiv (\mathbf{e}_1, \mathbf{e}_2, \ldots, \mathbf{e}_N) \in \mathbb{R}^{N \times N}$, 
the zero matrix by $0 \equiv (\mathbf{0}, \mathbf{0}, \ldots, \mathbf{0}) \in \mathbb{R}^{N \times N}$, 
and diagonal matrices are denoted by $\textrm{diag}(\alpha_{1,1}, \alpha_{2,2}, \ldots, \alpha_{N,N}) 
\equiv (\alpha_{1,1} \mathbf{e}_1, \alpha_{2,2} \mathbf{e}_2, \ldots, \alpha_{N, N} \mathbf{e}_N)\in \mathbb{R}^{N \times N}$.
Determinant of a matrix $A$ is denoted by $|A|$.

\subsection{Biochemical reaction networks} \label{app:CRNs}
Let $\mathcal{R}_{\alpha} = \mathcal{R}_{\alpha}(\mathcal{X})$ be a reaction network
describing interactions, under mass-action kinetics, between $N$ biochemical species $\mathcal{X} = \{X_1, X_2, \ldots, X_N\}$
in a well-mixed unit-volume reactor~\cite{Feinberg}, as specified by the following $a$ reactions:
\begin{align}
\mathcal{R}_{\alpha}(\mathcal{X}): \; \; 
& & \sum_{l = 1}^N \nu_{j, l} X_l & \xrightarrow[]{\alpha_{j}} \sum_{l = 1}^N \bar{\nu}_{j, l} X_l,
\; \; \; \;
j \in \mathcal{A} = \{1, 2, \ldots, a\}.
\label{eq:networks}
\end{align}
Here, $\boldsymbol{\alpha} = (\alpha_1, \alpha_2, \ldots, \alpha_A) \in \mathbb{R}_{>}^a$
are the positive \emph{rate coefficients} of the reactions from $\mathcal{R}_{\alpha}$.
Nonnegative vectors $\boldsymbol{\nu}_{j, \cdot} = (\nu_{j, 1}, \nu_{j,2}, \ldots, \nu_{j,N})^{\top} \in \mathbb{Z}_{\ge}^N$
and  $\boldsymbol{\bar{\nu}}_{j, \cdot} = (\bar{\nu}_{j, 1}, \bar{\nu}_{j,2}, \ldots, \bar{\nu}_{j,N})^{\top} \in \mathbb{Z}_{\ge}^N$
are the \emph{reactant} and \emph{product stoichiometric coefficients} of the $j$-reaction, respectively;
if $\boldsymbol{\nu}_{j, \cdot} = \mathbf{0}$ (respectivelty, $\boldsymbol{\bar{\nu}}_{j, \cdot} = \mathbf{0}$), 
then the reactant (respectively, product) of the $j$-reaction is the \emph{null-species}, denoted by $\varnothing$, 
 representing species that are not explicitly modelled. 
When convenient, we denote two irreversible reactions 
$(\sum_{l = 1}^N \nu_{i, l} X_l   \xrightarrow[]{\alpha_{i}}  
\sum_{l = 1}^N \bar{\nu}_{i, l} X_l) \in \mathcal{R}_{\alpha}$
and
$(\sum_{l = 1}^N \bar{\nu}_{i, l} X_l  \xrightarrow[]{\alpha_{j}} 
\sum_{l = 1}^N \nu_{i, l} X_l) \in \mathcal{R}_{\alpha}$
 jointly as the single reversible reaction
$(\sum_{l = 1}^N \nu_{i, l} X_l \xrightleftharpoons[\alpha_j]{\alpha_i} 
\sum_{l = 1}^N \bar{\nu}_{i, l} X_l) \in\mathcal{R}_{\alpha}$.
Species $X_i$ is a \emph{catalyst} in the $j$-reaction from $\mathcal{R}_{\alpha}$ if $\nu_{j, i} = \bar{\nu}_{j, i} \ne 0$; 
if $X_i$ is a catalyst in all of the reaction from $\mathcal{R}_{\alpha}$, then
we write $\mathcal{R}_{\alpha} = \mathcal{R}_{\alpha}(\mathcal{X} \setminus X_i; \, X_i)$.
The \emph{order of the j-reaction} from network $\mathcal{R}_{\alpha}$
is given by $\langle \mathbf{1}, \boldsymbol{\nu}_{j, \cdot} \rangle \in \mathbb{Z}_{\ge}$.
The \emph{order of reaction network} $\mathcal{R}_{\alpha}$ is given by  
$\textrm{max} \{\langle \mathbf{1}, \boldsymbol{\nu}_{j, \cdot} \rangle | j \in \mathcal{A}\}$;
first-order (respectively, second-order)
 reaction networks are also said to be \emph{unimolecular} (respectively, \emph{bimolecular}). 

Given a class of biochemical reaction networks, parametrized by the underlying rate 
coefficients, it may be of interest if a given property is likely to be true 
when all admissible values of the rate 
coefficients are considered, which motivates the following definition.
In what follows, we implicitly use Lebesgue measure for sets. 

\begin{definition} [\textbf{Genericity}]  \label{def:generic} 
Consider a mass-action reaction network $\mathcal{R}_{\alpha}$
parametrized by the rate coefficients $\boldsymbol{\alpha} \in \mathbb{S}_{\alpha}$, where
$\mathbb{S}_{\alpha} \subset \mathbb{R}_{>}^a$ is a nonempty open set.
Assume $\mathbb{S}_{\alpha}$ is partitioned according to 
$\mathbb{S}_{\alpha} = \Omega_{\alpha} \cup \omega_{\alpha}$, 
with $\Omega_{\alpha} \cap \omega_{\alpha} = \emptyset$, 
where $\omega_{\alpha}$ is a set of measure zero.
A property is said to be \emph{generic} 
for the set $\mathbb{S}_{\alpha}$ and network $\mathcal{R}_{\alpha}$ 
if it holds for all $\boldsymbol{\alpha} \in \Omega_{\alpha}$ 
and fails to hold for all $\boldsymbol{\alpha} \in \omega_{\alpha}$.
\end{definition}
\noindent \emph{Example}. Empy set $\omega_{\alpha} = \emptyset$, and
set $\omega_{\alpha} = \{\boldsymbol{\alpha}_1, \boldsymbol{\alpha}_2, \ldots\}$,
containing finitely or countably infinitely many points, have measure zero. 
All the points $\boldsymbol{\alpha}$ where a non-trivial polynomial 
$\mathcal{P}(\boldsymbol{\alpha})$ vanishes is also a set of measure zero~\cite{AG}; 
for example, given a nonzero matrix $A = A(\boldsymbol{\alpha})$, with $(i,j)$-element $\alpha_{i,j}$, 
the set of all points $\boldsymbol{\alpha}$ such that $|A(\boldsymbol{\alpha})| = 0$ has zero measure.

\subsection{Dynamical models of reaction networks} \label{app:dynamics}
In what follows, we present deterministic and stochastic models
of mass-action reaction networks, and provide definitions in context of blow-ups.  

\subsubsection{Deterministic model} \label{app:deterministic}
Let $\mathbf{x}(t; \, \boldsymbol{\alpha}) = 
(x_1(t; \, \boldsymbol{\alpha}), x_2(t; \, \boldsymbol{\alpha}), \ldots, x_N(t; \, \boldsymbol{\alpha}))^{\top} \in \mathbb{R}_{\ge}^N$
be a concentration vector at time $t \in \mathbb{R}_{\ge}$ for the species $\mathcal{X} = \{X_1, X_2, \ldots, X_N\}$
from the network $\mathcal{R}_{\alpha}$, given by~(\ref{eq:networks}).
A deterministic model of the reaction network $\mathcal{R}_{\alpha}$ describes 
the time-evolution of $\mathbf{x} = \mathbf{x}(t; \, \boldsymbol{\alpha})$ as
a system of first-order ordinary differential equations (ODEs), called the 
\emph{reaction-rate equations} (RREs)~\cite{Feinberg,RadekBook}, given by
\begin{align}
\frac{\mathrm{d} \mathbf{x}}{\mathrm{d}  t} & = 
\boldsymbol{\mathcal{K}}(\mathbf{x}; \, \boldsymbol{\alpha}) =  \sum_{j \in \mathcal{A}}
\alpha_j \Delta \mathbf{x}_{j, \cdot} \mathbf{x}^{\boldsymbol{\nu}_{j, \cdot}}, \label{eq:RREs}
\end{align}
where $\Delta \mathbf{x}_{j, \cdot} \equiv 
(\bar{\boldsymbol{\nu}}_{j, \cdot} - \boldsymbol{\nu}_{j, \cdot}) \in \mathbb{Z}^N$
is the \emph{reaction vector} of the $j$-reaction, and
$\mathbf{x}^{\boldsymbol{\nu_{j,\cdot}}} \equiv \prod_{i = 1}^N x_i^{\nu_{j,i}}$ with $0^0 \equiv 1$. 
Function $\boldsymbol{\mathcal{K}}(\cdot; \, \boldsymbol{\alpha}) : \mathbb{R}^N \to \mathbb{R}^N$, 
called a \emph{kinetic function} (see also Appendix~\ref{app:kinetictrans}),
is a polynomial of degree $m = \textrm{max} \{\langle \mathbf{1}, \boldsymbol{\nu}_{j, \cdot} \rangle | 
j \in \mathcal{A}\}$ in $\mathbf{x}$, which we denote by
$\boldsymbol{\mathcal{K}}(\mathbf{x}; \, \boldsymbol{\alpha}) \in 
\mathbb{P}_{m}(\mathbb{R}^N; \, \mathbb{R}^N)$.
Vector $\mathbf{x}^* = \mathbf{x}^*(\boldsymbol{\alpha}) < \boldsymbol{\infty}$ 
is called an \emph{equilibrium} of the RREs~(\ref{eq:RREs})
if $\boldsymbol{\mathcal{K}}(\mathbf{x}^*; \, \boldsymbol{\alpha}) = \mathbf{0}$.

We now present an important property of the RREs~\cite{Me_Homoclinic}.
\begin{theorem} \label{theorem:trapping}
The nonnegative orthant $\mathbb{R}_{\ge}^N$ is an invariant 
set for the {\rm ODE}s~{\rm (\ref{eq:RREs})}.
\end{theorem}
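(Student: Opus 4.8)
The plan is to reduce the claim to a pointwise sign condition on the boundary of the nonnegative orthant and then invoke a standard invariance principle. Since $\mathbb{R}_{\ge}^N$ is closed and convex, and at any boundary point $\mathbf{x}$ the outward normals are exactly the nonnegative combinations of the vectors $-\mathbf{e}_i$ over the indices $i$ with $x_i = 0$, Nagumo's invariance theorem reduces forward invariance of $\mathbb{R}_{\ge}^N$ under~(\ref{eq:RREs}) to the subtangentiality requirement that $\mathcal{K}_i(\mathbf{x}; \, \boldsymbol{\alpha}) \ge 0$ whenever $\mathbf{x} \in \mathbb{R}_{\ge}^N$ and $x_i = 0$, for each $i \in \{1, \ldots, N\}$. (Here ``invariant'' is meant in the positive sense and only on the maximal interval of existence; nothing is asserted about global existence, which can fail precisely because of the NECs that are the subject of this paper.) The entire proof therefore reduces to verifying this family of inequalities.

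The sign check is immediate from the mass-action form of $\boldsymbol{\mathcal{K}}$. Fix $i$ and write $\mathcal{K}_i(\mathbf{x}; \, \boldsymbol{\alpha}) = \sum_{j \in \mathcal{A}} \alpha_j (\bar{\nu}_{j,i} - \nu_{j,i}) \, \mathbf{x}^{\boldsymbol{\nu}_{j,\cdot}}$, and split $\mathcal{A}$ according to whether $X_i$ appears as a reactant of reaction $j$. For each $j$ with $\nu_{j,i} \ge 1$, the monomial $\mathbf{x}^{\boldsymbol{\nu}_{j,\cdot}}$ contains the factor $x_i^{\nu_{j,i}}$, which vanishes when $x_i = 0$; such terms contribute nothing on the face $\{x_i = 0\}$. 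For each $j$ with $\nu_{j,i} = 0$ (in particular all zero-order reactions, for which $\boldsymbol{\nu}_{j,\cdot} = \mathbf{0}$ and $\mathbf{x}^{\mathbf{0}} = 1$ by the convention $0^0 \equiv 1$) we have $\bar{\nu}_{j,i} - \nu_{j,i} = \bar{\nu}_{j,i} \in \mathbb{Z}_{\ge}$, while $\alpha_j > 0$ and $\mathbf{x}^{\boldsymbol{\nu}_{j,\cdot}} \ge 0$ for $\mathbf{x} \in \mathbb{R}_{\ge}^N$. Hence $\mathcal{K}_i(\mathbf{x}; \, \boldsymbol{\alpha}) = \sum_{j : \, \nu_{j,i} = 0} \alpha_j \bar{\nu}_{j,i}\, \mathbf{x}^{\boldsymbol{\nu}_{j,\cdot}} \ge 0$ on $\{x_i = 0\}$, which is exactly the required condition.

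If one prefers to avoid citing Nagumo's theorem as a black box, the same structural observation yields a self-contained argument. Writing $\mathcal{K}_i(\mathbf{x}; \, \boldsymbol{\alpha}) = P_i(\mathbf{x}) + x_i Q_i(\mathbf{x})$, where $P_i(\mathbf{x}) \equiv \sum_{j : \, \nu_{j,i} = 0} \alpha_j \bar{\nu}_{j,i}\, \mathbf{x}^{\boldsymbol{\nu}_{j,\cdot}}$ is independent of $x_i$ and nonnegative on $\mathbb{R}_{\ge}^N$, and $Q_i$ is the polynomial obtained by factoring one power of $x_i$ out of each remaining monomial, one sees that~(\ref{eq:RREs}) is a quasi-positive (essentially nonnegative) system. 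For a solution $\mathbf{x}(t)$ starting in $\mathbb{R}_{\ge}^N$, a first-exit-time argument then closes the proof: if $t^* > 0$ were the first time the trajectory met $\partial\mathbb{R}_{\ge}^N$, with $x_i(t^*) = 0$ for some $i$, boundedness of $\mathbf{x}$ on a short interval $[t^*-\delta, t^*]$ gives a constant $C$ with $\dot{x}_i \ge -C x_i$ there, so $x_i$ stays nonnegative and cannot cross zero, and ranging over the finitely many coordinates excludes exit altogether. I expect no genuine obstacle: the whole content is the elementary sign computation above together with the classical invariance lemma for quasi-positive vector fields (see also~\cite{Me_Homoclinic, Feinberg}), and the only point requiring mild care is phrasing the first-exit argument cleanly when several coordinates vanish at once.
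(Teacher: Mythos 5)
Your proposal is correct, and it supplies an argument where the paper supplies none: the stated proof of Theorem~\ref{theorem:trapping} is just a citation of~\cite{Me_Homoclinic}, and the only in-paper justification is the later remark (after Definition~\ref{def:cross_negative}) that the theorem is a direct consequence of kinetic functions having no cross-negative terms. Your sign computation is precisely that remark made explicit: on the face $\{x_i = 0\}$ every monomial of $\mathcal{K}_i$ with $\nu_{j,i} \ge 1$ vanishes, the survivors have coefficients $\alpha_j \bar{\nu}_{j,i} \ge 0$, and a cross-negative term is exactly a monomial that would break this inequality; so your route and the paper's intended one rest on the same structural fact, but you actually carry it through to invariance. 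The Nagumo step is airtight (the field is polynomial, hence locally Lipschitz, and you correctly restrict to the maximal interval of existence, which matters here since finite-time blow-up is the paper's subject). The only place needing polish is the self-contained variant: $t^*$ should be the first \emph{exit} time from $\mathbb{R}_{\ge}^N$ rather than the first time the trajectory meets the boundary (a trajectory may start on, or travel along, a face), and the bound $\dot{x}_i \ge -C x_i$ uses $P_i(\mathbf{x}) \ge 0$, which is only guaranteed while \emph{all} coordinates are still nonnegative; the clean fix is either to run the Gronwall estimate on the perturbed system $\dot{\mathbf{x}} = \boldsymbol{\mathcal{K}}(\mathbf{x}; \, \boldsymbol{\alpha}) + \varepsilon \mathbf{1}$ and let $\varepsilon \to 0$, or simply to rely on the Nagumo version, which is already complete. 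You flagged this yourself, and it is indeed only bookkeeping; there is no gap in the mathematics.
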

\begin{proof}
See~\cite{Me_Homoclinic}.
\end{proof}

In this paper, unimolecular networks are of interest and, to this end, we introduce
the following definition. 

\begin{definition} [\textbf{Cross-nonnegative matrix}]  \label{def:crossmatrix} 
 A matrix $A \in \mathbb{R}^{N \times N}$ with nonnegative off-diagonal elements, 
$\alpha_{i,j} \ge 0$ for all $i,j \in \{1,2, \ldots, N\}$ such that $i \ne j$, 
 is said to be \emph{cross-nonnegative}.
\end{definition}
\noindent \emph{Remark}. Cross-nonnegative matrices are known as 
negative $Z$, quasi-positive, essentially nonnegative, and Metzler matrices
 in the literature~\cite{Mmatrix,PositiveSystems}. 

In context of unimolecular reaction networks, Theorem~\ref{theorem:trapping} implies the following corollary.

\begin{corollary} \label{theorem:trapping_firstorder}
For unimolecular reaction networks, the kinetic function from the {\rm RRE}s~{\rm (\ref{eq:RREs})} 
is given by $\boldsymbol{\mathcal{K}}(\mathbf{x}; \, \boldsymbol{\alpha}) 
= (\boldsymbol{\alpha}_{\cdot,0} + A \mathbf{x})$, where $\boldsymbol{\alpha}_{\cdot,0} 
= (\alpha_{1,0}, \alpha_{2,0}, \ldots, \alpha_{N,0}) \in \mathbb{R}_{\ge}^N$ is a nonnegative vector,
while $A \in \mathbb{R}^{N \times N}$ is a cross-nonnegative matrix. 
\end{corollary}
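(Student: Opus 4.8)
The plan is to derive the claimed form of the kinetic function by combining the structural constraints of unimolecular mass-action kinetics with the invariance result of Theorem~\ref{theorem:trapping}. First I would observe that, by the very definition of a unimolecular network, every reaction has reactant stoichiometric vector $\boldsymbol{\nu}_{j,\cdot}$ with $\langle \mathbf{1}, \boldsymbol{\nu}_{j,\cdot}\rangle \le 1$, so each monomial $\mathbf{x}^{\boldsymbol{\nu}_{j,\cdot}}$ appearing in the sum~(\ref{eq:RREs}) is either the constant $1$ (zero-order reactions) or a single coordinate $x_{i}$ (first-order reactions). Summing the contributions, this immediately gives $\boldsymbol{\mathcal{K}}(\mathbf{x};\,\boldsymbol{\alpha}) = \mathbf{b} + A\mathbf{x}$ for some constant vector $\mathbf{b}\in\mathbb{R}^N$ and some matrix $A\in\mathbb{R}^{N\times N}$; the only content left to prove is that $\mathbf{b}$ is nonnegative and $A$ is cross-nonnegative, i.e.\ has nonnegative off-diagonal entries.

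The cleanest way to pin down the signs is to read them off directly from the reaction data, and then cross-check with Theorem~\ref{theorem:trapping}. For the constant term, $b_i = \sum_{j:\,\boldsymbol{\nu}_{j,\cdot}=\mathbf{0}} \alpha_j \bar\nu_{j,i}$, a sum of products of positive rate coefficients and nonnegative product stoichiometries, hence $b_i \ge 0$; this is the vector $\boldsymbol{\alpha}_{\cdot,0}$. For the linear part, the $(i,\ell)$-entry of $A$ collects the contributions of all first-order reactions whose single reactant is $X_\ell$, namely $\alpha_{i,\ell} = \sum_{j:\,\boldsymbol{\nu}_{j,\cdot}=\mathbf{e}_\ell}\alpha_j(\bar\nu_{j,i}-\delta_{i,\ell})$. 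When $i\ne\ell$ the term $\delta_{i,\ell}$ vanishes and each summand is $\alpha_j\bar\nu_{j,i}\ge 0$, so the off-diagonal entries are nonnegative, which is exactly cross-nonnegativity in the sense of Definition~\ref{def:crossmatrix}. (Alternatively, one can argue abstractly: if some off-diagonal entry were negative, then starting from a point on the face $\{x_i=0\}$ with $x_\ell$ large one would obtain $\dot x_i < 0$, pushing the trajectory out of $\mathbb{R}_{\ge}^N$ and contradicting Theorem~\ref{theorem:trapping}; this recovers the sign constraint without computation.)

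I would then assemble these observations into the statement: $\boldsymbol{\mathcal{K}}(\mathbf{x};\,\boldsymbol{\alpha}) = \boldsymbol{\alpha}_{\cdot,0} + A\mathbf{x}$ with $\boldsymbol{\alpha}_{\cdot,0}\in\mathbb{R}_{\ge}^N$ and $A$ cross-nonnegative, which is precisely the corollary. The diagonal entries of $A$ are unconstrained in sign (a species degrading itself contributes negatively, a species autocatalytically producing itself would contribute positively), consistent with the definition placing no requirement on them.

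The only mild subtlety — and the step I would be most careful with — is bookkeeping of reversible reactions and catalytic first-order reactions, to make sure the affine-plus-cross-nonnegative decomposition still holds: a reversible pair contributes two first-order (or zero/first-order) reactions that are handled termwise exactly as above, and a catalytic reaction such as $X_\ell \xrightarrow{\alpha_j} X_\ell + X_i$ has $\boldsymbol{\nu}_{j,\cdot}=\mathbf{e}_\ell$ and $\bar\nu_{j,i}-\delta_{i,\ell}\ge 0$ whenever $i\ne\ell$, so it too only adds to off-diagonal (or nonnegative diagonal, if $i=\ell$ with a net gain) entries. No genuinely hard analysis is needed here; the proof is essentially an exercise in unpacking the mass-action formula~(\ref{eq:RREs}) for networks of order at most one, with Theorem~\ref{theorem:trapping} available as an independent sanity check on the signs.
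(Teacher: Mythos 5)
Your proof is correct and follows the route the paper intends: the paper states the corollary as an immediate consequence of the unimolecular degree bound (which forces the kinetic function to be affine) together with Theorem~\ref{theorem:trapping}, and your termwise sign computation from~(\ref{eq:RREs}), with the invariance argument as a cross-check, is exactly that argument written out. The bookkeeping of the constant vector, the off-diagonal entries, and the reversible/catalytic cases is all accurate, so nothing is missing.
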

A linear ODE system $\mathrm{d}\mathbf{x}/\mathrm{d} t
= (\boldsymbol{\alpha}_{\cdot,0} + A \mathbf{x})$ is said to be \emph{asymptotically stable}
(also simply referred to as \emph{stable} in this paper) if all the eigenvalues of $A$ have negative real parts. 

\subsubsection{Stochastic model} \label{app:stochastic}
Let $\mathbf{X}(t; \boldsymbol{\alpha}) =
 (X_1(t; \boldsymbol{\alpha}), X_2(t; \boldsymbol{\alpha}), \ldots, X_N(t; \boldsymbol{\alpha}))^{\top} \in \mathbb{Z}_{\ge}^N$
be a copy-number vector at time $t \in \mathbb{R}_{\ge}$ for the species $\mathcal{X} = \{X_1, X_2, \ldots, X_N\}$
from the network $\mathcal{R}_{\alpha}$, given by~(\ref{eq:networks}); 
abusing the notation slightly, we denote the points in the state-space for $\mathbf{X}(t; \boldsymbol{\alpha})$ 
using the same symbol as the concentration vector from~(\ref{eq:RREs}), i.e.
by $\mathbf{x} = (x_1, x_2, \ldots, x_N)^{\top} \in \mathbb{Z}_{\ge}^N$.
A stochastic model of the reaction network $\mathcal{R}_{\alpha}$ describes 
the time-evolution of $\mathbf{X} = \mathbf{X}(t; \boldsymbol{\alpha})$ as a continuous-time discrete-space 
Markov chain~\cite{GillespieDerivation} characterized
via a difference-operator $\mathcal{L}_{\alpha}$, called the \emph{generator}, given by~\cite{VanKampen}
\begin{align}
\mathcal{L}_{\alpha} u(\mathbf{x}) & =  
\sum_{j = 1}^A \alpha_{j} \mathbf{x}^{\underline{\boldsymbol{\nu}_{j,\cdot}}}
 (E_{\mathbf{x}}^{+\Delta \mathbf{\mathbf{x}}_{j,\cdot}} - 1) 
u(\mathbf{x}), \label{eq:generator}
\end{align}
where $u : \mathbb{Z}_{\ge}^N \to \mathbb{R}$ belongs to a suitable function space.
Here, $\Delta \mathbf{x}_{j, \cdot} = (\bar{\boldsymbol{\nu}}_{j, \cdot} - \boldsymbol{\nu}_{j, \cdot}) \in \mathbb{Z}^N$
is the $j$-reaction vector, while
$\mathbf{x}^{\underline{\boldsymbol{\nu}_{j,\cdot}}} = \prod_{i = 1}^N x_i^{\underline{\nu_{j,i}}}$, 
with $x_i^{\underline{\nu_{j,i}}} = x_i (x_i - 1) \ldots (x_i - \nu_{j,i} - 1)$
and $x^{\underline{0}} \equiv 1$ for all $x \in \mathbb{Z}_{\ge}$. 
Furthermore, $E_{\mathbf{x}}^{+\Delta \mathbf{x}_{j,\cdot}} = \prod_{i = 1}^N E_{x_i}^{+\Delta x_{j,i}}$ is a step-operator such that 
$E_{\mathbf{x}}^{+\Delta \mathbf{x}_{j,\cdot}} u(\mathbf{x}) = u(\mathbf{x} + \Delta \mathbf{x}_{j,\cdot})$. 

Let $p(\cdot,t; \, \boldsymbol{\alpha}) : \mathbb{Z}_{\ge}^N \to [0,1]$ be the probability-mass function (PMF) 
at time $t$ of the Markov chain with generator~(\ref{eq:generator}), and let
 $f_{\mathbf{x}} : \mathbb{Z}_{\ge}^N \to \mathbb{R}$ be a suitable function of the species copy-numbers.
We let $\mathbb{E} f_{\mathbf{x}} = \mathbb{E} f_{\mathbf{x}}(t; \, \boldsymbol{\alpha})
\equiv \langle f_{\mathbf{x}}(\mathbf{x}), p(\mathbf{x}, t; \, \boldsymbol{\alpha}) \rangle$
denote the expectation of $f_{\mathbf{x}}$ at time $t$ with respect to $p(\mathbf{x},t; \, \boldsymbol{\alpha})$.
In this paper, we focus on the average species copy-numbers, i.e. on the
 first-moment vector $\mathbb{E} \mathbf{X} = \mathbb{E} \mathbf{X}(t; \, \boldsymbol{\alpha})
= (\mathbb{E} X_1(t; \, \boldsymbol{\alpha}), \mathbb{E} X_2(t; \, \boldsymbol{\alpha}), \ldots,
 \mathbb{E} X_N(t; \, \boldsymbol{\alpha}))^{\top} \in \mathbb{R}_{\ge}^N$,
which evolves in time according to the ODEs~\cite{RadekBook,VanKampen}
\begin{align}
\frac{\mathrm{d} \mathbb{E} \mathbf{X}}{\mathrm{d}  t} & = 
\mathbb{E} [\mathcal{L}_{\alpha} \mathbf{X}] =
 \sum_{j \in \mathcal{A}} \alpha_j  \Delta \mathbf{x}_{j, \cdot} 
\mathbb{E} \mathbf{X}^{\underline{\boldsymbol{\nu}_{j,\cdot}}}. \label{eq:average}
\end{align}
In the special case of unimolecular reaction networks, the first-moment 
equations~(\ref{eq:average}) and the RREs~(\ref{eq:RREs})
are formally equivalent; more generally, 
the less-detailed deterministic and the more-detailed stochastic models
match in the thermodynamic limit~\cite{Kurtz}.

\subsubsection{Blow-up and negative-equilibrium catastrophe} \label{app:negblowup}
In this paper, we focus on the circumstances when some of the species abundance experiences
an unbounded growth, and introduce the following definition for this purpose. 
\begin{definition} [\textbf{Blow-up}]  \label{def:blowup} 
Reaction network $\mathcal{R}_{\alpha}(\mathcal{X})$ is said to \emph{blow up deterministically} 
for a given initial condition if $\lim_{t \to \infty} x_i(t; \, \boldsymbol{\alpha}) = \infty$ for some $i \in \{1, 2, \ldots, N\}$, 
where the species concentration $\mathbf{x}(t; \, \boldsymbol{\alpha}) \in \mathbb{R}_{\ge}^N$
satisfies~{\rm (\ref{eq:RREs})}; $\mathcal{R}_{\alpha}(\mathcal{X})$ is said to \emph{blow up stochastically} 
for a given initial condition if $\lim_{t \to \infty} \mathbb{E} X_i(t; \, \boldsymbol{\alpha}) 
= \infty$ for some $i \in \{1, 2, \ldots, N\}$, where the first-moment of the species copy-numbers
 $\mathbb{E} \mathbf{X}(t; \, \boldsymbol{\alpha})  \in \mathbb{R}_{\ge}^N$ satisfies~{\rm (\ref{eq:average})}.
\end{definition}
Nonnegative ODEs, such as equations~(\ref{eq:RREs}) and~(\ref{eq:average}),
need not have a nonnegative time-independent solution, i.e. the dynamics can be confined to 
an unbounded invariant set devoid of any equilibria. In this context, we introduce the
following definition. 

\begin{definition} [\textbf{Negative-equilibrium catastrophes (NECs)}]  \label{def:catastrophe} 
Reaction network $\mathcal{R}_{\alpha}(\mathcal{X})$ is said to display
a deterministic (respectively, a stochastic) \emph{negative-equilibrium catastrophe} ({\rm NEC})
if, for all $\boldsymbol{\alpha} \in \mathbb{R}_{>}^a$ such that 
the {\rm RRE}s have no nonnegative equilibria, $\mathcal{R}_{\alpha}(\mathcal{X})$
blows up deterministically (respectively, stochastically) for some nonnegative initial conditions.
\end{definition}
\noindent \emph{Remark}. A nonnegative equilibrium can cease to exist 
by attaining a negative component, becoming complex, or vanishing all together.

\section{Appendix: Stochastic biochemical control} \label{app:biochemical_control}
In this section, we formulate the problem of achieving biochemical control over a given reaction network, 
starting with the following definition. 
\begin{definition} [\textbf{Black, grey and white box}]  \label{def:blackbox} 
Network $\mathcal{R}_{\alpha}(\mathcal{X}) \ne \emptyset$
with unknown (respectively, only partially known) 
structure and dynamics is called a \emph{black-box} (respectively, \emph{grey-box}) network; 
$\mathcal{R}_{\alpha}(\mathcal{X}) \ne \emptyset$ is called a \emph{white-box} network
if its structure and dynamics are completely known.
\end{definition}
Given a black- or grey-box \emph{input} (uncontrolled) reaction network $\mathcal{R}_{\alpha} = \mathcal{R}_{\alpha}(\mathcal{X})$,
the objective of biochemical control is to design a \emph{controller} network $\mathcal{R}_{\beta,\gamma}$
in order to ensure that the dynamics of desired input species $\mathcal{X}$
is suitably controlled in the resulting \emph{output} (controlled) network
$\mathcal{R}_{\alpha,\beta,\gamma} \equiv \mathcal{R}_{\alpha} \cup \mathcal{R}_{\beta,\gamma}$.
To this end, we partition the input species into $\mathcal{X} = \mathcal{X}_{\tau} \cup \mathcal{X}_{\rho}$, 
where $\mathcal{X}_{\tau}= \{X_1, X_2, \ldots, X_{N_{\tau}}\}$ are the $1 \le N_{\tau} \le N$ \emph{target} species 
that can be interfaced with a given controller,
while $\mathcal{X}_{\rho} = \mathcal{X} \setminus \mathcal{X}_{\tau} = \{X_{N_{\tau}+1}, X_{N_{\tau}+2}, \ldots, X_N\} $
are the $N_{\rho} = (N - N_{\tau})$ \emph{residual} species that cannot be interfaced with the controller. 
The controller can be decomposed into two sub-networks, $\mathcal{R}_{\beta,\gamma} = 
\mathcal{R}_{\beta,\gamma}(\mathcal{X}_{\tau},\mathcal{Y})
= \mathcal{R}_{\beta}(\mathcal{Y}) \cup \mathcal{R}_{\gamma}(\mathcal{X}_{\tau},\mathcal{Y})$,
where $\mathcal{R}_{\beta} = \mathcal{R}_{\beta}(\mathcal{Y})$, called the \emph{core}, 
contains all the reactions that involve only the \emph{controlling} species $\mathcal{Y} = \{Y_1, Y_2, \ldots, Y_M\}$, 
while $\mathcal{R}_{\gamma} = \mathcal{R}_{\gamma}(\mathcal{X}_{\tau},\mathcal{Y})$, 
called the \emph{interface}, contains all of the remaining reactions, involving both $\mathcal{X}_{\tau}$ and $\mathcal{Y}$.
Put more simply, the core $\mathcal{R}_{\beta}$ describes internal dynamics of the controlling species, 
while the interface $\mathcal{R}_{\gamma}$ describes how the target and controlling species interact.  

In what follows, we focus on controlling average copy-number of a single target species;  see~\cite{Me_Morphing}
for a more general multi-species control of the full PMF of both target and residual species.
To this end, we denote the rate coefficients from the sub-networks 
$\mathcal{R}_{\alpha}$, $\mathcal{R}_{\beta}$ and $\mathcal{R}_{\gamma}$
 by $\boldsymbol{\alpha} \in \mathbb{R}_{>}^a$,  
$\boldsymbol{\beta} \in \mathbb{R}_{>}^b$ and $\boldsymbol{\gamma} \in \mathbb{R}_{>}^c$, respectively.
We also let $\mathbb{E} X_1(t; \, \cdot, \cdot \, ; \, \boldsymbol{\alpha}) 
: \mathbb{R}_{>}^b \times \mathbb{R}_{>}^c \to \mathbb{R}$
be the first-moment of the target species $X_1 \in \mathcal{X}_{\tau}$, 
which is a function of the control parameters $(\boldsymbol{\beta},\boldsymbol{\gamma})$ 
for every time $t$, input coefficient $\boldsymbol{\alpha}$
and every initial condition. More precisely, 
$\mathbb{E} X_1 = \mathbb{E} X_1(t; \, \boldsymbol{\beta}, \boldsymbol{\gamma}; \, \boldsymbol{\alpha})
= \langle x_1, p(\mathbf{x}, \mathbf{y}, t; \, \boldsymbol{\beta},\boldsymbol{\gamma}; \, \boldsymbol{\alpha}) \rangle$, 
where $p(\mathbf{x}, \mathbf{y}, t; \, \boldsymbol{\beta},\boldsymbol{\gamma}; \, \boldsymbol{\alpha})$ is the time-dependent
PMF of the output network. In what follows, we denote the gradient operator with 
respect to $\mathbf{x}= (x_1, x_2, \ldots, x_N)$ by
$\boldsymbol{\nabla}_{\mathbf{x}} \equiv (\partial/\partial_{x_1}, 
\partial/\partial_{x_2}, \ldots, \partial/\partial_{x_N})$. 

\begin{definition} [\textbf{Control}]  \label{def:control} 
Consider a black-box input network $\mathcal{R}_{\alpha}(\mathcal{X})$,
and the corresponding output network 
$\mathcal{R}_{\alpha,\beta,\gamma}(\mathcal{X},\mathcal{Y})
= \mathcal{R}_{\alpha}(\mathcal{X}) \cup 
\mathcal{R}_{\beta,\gamma}(\mathcal{X}_{\tau},\mathcal{Y})$. 
Assume we are given a nonempty open set $\mathbb{S}_{\alpha} \subset \mathbb{R}_{>}^a$,
stability index $p \in \mathbb{Z}_{>}$, 
and a target value $\bar{x}_1 \in \mathbb{R}_{>}$
together with a tolerance $\varepsilon \in \mathbb{R}_{\ge}$. 
Then, the first-moment $\mathbb{E} X_1 = 
\mathbb{E} X_1(t; \, \boldsymbol{\beta}, \boldsymbol{\gamma}; \, \boldsymbol{\alpha})$
 is said to be \emph{controlled} in the long-run if there exists a nonempty open set
$\mathbb{S}_{\beta,\gamma} \subset \mathbb{R}_{>}^{b+c}$
such that the following two conditions are satisfied
for all initial conditions $(\mathbf{X}(0), \mathbf{Y}(0)) \in \mathbb{Z}_{\ge}^{N+M}$
and rate coefficients $(\boldsymbol{\alpha}, \boldsymbol{\beta}, \boldsymbol{\gamma}) 
\in \mathbb{S}_{\alpha} \times \mathbb{S}_{\beta,\gamma}$:
\begin{enumerate}
\item[{\rm (C.I)}] \textbf{Stability}: 
$\{\lim_{t \to \infty} \mathbb{E} X_i^p(t; \, \boldsymbol{\beta}, \boldsymbol{\gamma}; \, \boldsymbol{\alpha}) < \infty\}_{i = 1}^N$ 
and $\{\lim_{t \to \infty} \mathbb{E} Y_i^p(t; \, \boldsymbol{\beta},\boldsymbol{\gamma}; \, \boldsymbol{\alpha}) < \infty\}_{i = 1}^M$.
\item[{\rm (C.II)}] \textbf{Accuracy}: 
$\lim_{t \to \infty} \Big |\mathbb{E} X_1(t; \, \boldsymbol{\beta}, \boldsymbol{\gamma}; \, \boldsymbol{\alpha})
 - \bar{x}_1 \Big| \le \varepsilon$, where 
$\lim_{t \to \infty} \boldsymbol{\nabla}_{\boldsymbol{\beta}, \boldsymbol{\gamma}} 
\mathbb{E} X_1(\boldsymbol{\beta},\boldsymbol{\gamma}; \, \boldsymbol{\alpha}) \neq \mathbf{0}$.
\end{enumerate}
\end{definition}
Condition~(C.I) requires boundedness of the long-time moments, up to order $p > 1$,
 of \emph{all} of the species from the output network $\mathcal{R}_{\alpha,\beta,\gamma}(\mathcal{X},\mathcal{Y})$. 
Minimally, one requires that the long-time first-moments are bounded, i.e.
that the controller does not trigger a stochastic blow-up (see Definition~\ref{def:blowup}).
The larger $p \ge 1$ one chooses, the thinner the tail of the long-time output PMF, 
which ensures a more stable behavior of the output network. 
Condition~(C.II) demands that the long-time average of $X_1$, 
which is required to depend on at least one control parameter, is sufficiently close to the target value.
Let us remark that~(C.I)--(C.II) must hold within neighborhoods of the underlying rate coefficient values, 
reflecting the fact that measurement and fine-tuning of rate coefficients is not error-free. 
For the same reason, we have put forward a more relaxed accuracy criterion by allowing nonzero tolerance in condition~(C.II).

\subsection{Robust control} \label{app:linearIC}
Condition (C.II) from Definition~\ref{def:control} may be challenging to 
achieve due to the fact that the long-time first-moment $\mathbb{E} X_1$ 
generally depends on the initial conditions, and on the input coefficients $\boldsymbol{\alpha}$, 
which motivates the following definition.

\begin{definition}[\textbf{Robustness}]\label{def:robust}
Consider a black-box input network $\mathcal{R}_{\alpha}(\mathcal{X})$,
and the corresponding output network $\mathcal{R}_{\alpha, \beta, \gamma}(\mathcal{X}, \mathcal{Y})
 = \mathcal{R}_{\alpha}(\mathcal{X}) \cup 
\mathcal{R}_{\beta,\gamma}(\mathcal{X}_{\tau},\mathcal{Y})$. 
Assume conditions {\rm (C.I)--(C.II)} from {\rm Definition~\ref{def:control}} are satisfied. 
Then, network $\mathcal{R}_{\beta,\gamma}(\mathcal{X}_{\tau},\mathcal{Y})$
is a \emph{robust} controller of the first-moment $\mathbb{E} X_1$ in the long-run if the
 following two conditions are also satisfied:
\begin{enumerate}
\item[\emph{(R.I)}] \textbf{Robustness to initial conditions}.
There exists a unique stationary {\rm PMF} $p(\mathbf{x}, \mathbf{y}; \, 
\boldsymbol{\beta},\boldsymbol{\gamma}; \, \boldsymbol{\alpha})$ such that
 $\lim_{t \to \infty} \| p(\mathbf{x}, \mathbf{y}, t; \, \boldsymbol{\beta},\boldsymbol{\gamma}; \, \boldsymbol{\alpha})
- p(\mathbf{x}, \mathbf{y}; \, \boldsymbol{\beta},\boldsymbol{\gamma}; \, \boldsymbol{\alpha}) \|_{l_1} = 0$
for all $(\mathbf{X}(0), \mathbf{Y}(0)) \in \mathbb{Z}_{\ge}^{N+M}$
and $(\boldsymbol{\alpha}, \boldsymbol{\beta}, \boldsymbol{\gamma}) 
\in \mathbb{S}_{\alpha} \times \mathbb{S}_{\beta,\gamma}$.
\item[\emph{(R.II)}] \textbf{Robustness to input coefficients}. The stationary first-moment,
given by $\mathbb{E} X_1^*(\boldsymbol{\beta},\boldsymbol{\gamma}; \, \boldsymbol{\alpha})$
$\equiv \langle x_1,  p(\mathbf{x}, \mathbf{y}; \, \boldsymbol{\beta},\boldsymbol{\gamma}; \, \boldsymbol{\alpha}) > 0$,
satisfies $\boldsymbol{\nabla}_{\boldsymbol{\alpha}} 
\mathbb{E} X_1^*(\boldsymbol{\beta},\boldsymbol{\gamma}; \, \boldsymbol{\alpha}) = \mathbf{0}$
for all $(\boldsymbol{\alpha}, \boldsymbol{\beta}, \boldsymbol{\gamma}) 
\in \mathbb{S}_{\alpha} \times \mathbb{S}_{\beta,\gamma}$.
\end{enumerate}
\end{definition}
\noindent \emph{Remark}. Robust controllers are also called
\emph{integral-feedback} controllers~\cite{Control_theory}.

\noindent \emph{Remark}. Conditions (C.II) and  (R.II) demand that the first-moment of $X_1$ 
is nondegenerate with respect to the control parameters, 
$\boldsymbol{\nabla}_{\boldsymbol{\beta}, \boldsymbol{\gamma}} 
\mathbb{E} X_1^*(\boldsymbol{\beta},\boldsymbol{\gamma}; \, \boldsymbol{\alpha})\neq \mathbf{0}$,
 and that it is degenerate with respect to the input parameters,
$\boldsymbol{\nabla}_{\boldsymbol{\alpha}} 
\mathbb{E} X_1^*(\boldsymbol{\beta},\boldsymbol{\gamma}; \, \boldsymbol{\alpha}) = \mathbf{0}$, 
respectively. As we prove in Lemma~\ref{lemma:linear_intfed} below, 
the degeneracy condition enforces singularity of an appropriate kinetic matrix.

\section{Appendix: Nonexistence of unimolecular integral-feedback controllers} \label{app:nonexistence} 
Let $\mathcal{R}_{\alpha}(\mathcal{X})$ be a black-box input network
with a desired target species $X_1 \in \mathcal{X}_{\tau}$,
and let $\mathcal{R}_{\beta,\gamma}(\mathcal{X}_{\tau},\mathcal{Y})
= \mathcal{R}_{\beta}(\mathcal{Y}) \cup \mathcal{R}_{\gamma}(\mathcal{X}_{\tau},\mathcal{Y})$
be a unimolecular network. The first-moment equations for the 
output network $\mathcal{R}_{\alpha, \beta, \gamma}(\mathcal{X}, \mathcal{Y}) 
= \mathcal{R}_{\alpha}(\mathcal{X})
 \cup \mathcal{R}_{\beta,\gamma}(\mathcal{X}_{\tau},\mathcal{Y})$  can be written in the following form:
\begin{align}
\frac{\mathrm{d} \mathbb{E} \mathbf{X}_{\rho}}{\mathrm{d}  t} & = 
\mathbb{E} [\mathcal{L}_{\alpha} \mathbf{X}_{\rho}], \nonumber \\
\frac{\mathrm{d} \mathbb{E} \mathbf{X}_{\tau}}{\mathrm{d}  t} & = 
\mathbb{E} [\mathcal{L}_{\alpha} \mathbf{X}_{\tau}] + C^{1,1} \mathbb{E} \mathbf{X}_{\tau} + C^{1,2} \mathbb{E} \mathbf{Y}, \nonumber \\
\frac{\mathrm{d} \mathbb{E} \mathbf{Y}}{\mathrm{d}  t} & = 
\boldsymbol{\beta}_{\cdot,0} + C^{2,1} \mathbb{E} \mathbf{X}_{\tau} + \bar{C}^{2,2} \mathbb{E} \mathbf{Y},
 \label{eq:average_firstorder}
\end{align}
where $\mathcal{L}_{\alpha}$ is the (unknown) generator of the input network, 
 $\boldsymbol{\beta}_{\cdot,0} = (\beta_{1,0}, \beta_{2,0}, \ldots,\beta_{M,0})\in \mathbb{R}_{\ge}^M$
is induced by the core network $\mathcal{R}_{\beta}(\mathcal{Y})$,
while cross-nonnegative matrix $C^{1,1} \in \mathbb{R}^{N_{\tau} \times N_{\tau}}$ and nonnegative matrices 
$C^{1,2} \in \mathbb{R}_{\ge}^{N_{\tau} \times M}$ and $C^{2,1} \in \mathbb{R}_{\ge}^{M \times N_{\tau}}$
are induced by the interfacing network $\mathcal{R}_{\gamma}(\mathcal{X}_{\tau},\mathcal{Y})$. 
Matrix $\bar{C}^{2,2} \in \mathbb{R}_{\ge}^{M \times M}$ can be written 
as $\bar{C}^{2,2} = (B + C^{2,2})$, with cross-nonnegative matrices $B$ and $C^{2,2}$ being 
induced by $\mathcal{R}_{\beta}(\mathcal{Y})$ and $\mathcal{R}_{\gamma}(\mathcal{X}_{\tau},\mathcal{Y})$, 
respectively. Note that $\bar{C}^{2,2}$ encodes all of the reactions that change the copy-numbers of 
the controlling species $\mathcal{Y}$, either in $\mathcal{X}_{\tau}$-independent (via matrix $B$) 
or $\mathcal{X}_{\tau}$-dependent manner (via matrix $C^{2,2}$). 
Let us also note that, by definition, $\boldsymbol{\beta} \in \mathbb{R}_{>}^b$
contains nonzero elements from vector $\boldsymbol{\beta}_{\cdot,0}$, and 
 absolute value of nonzero elements from matrix $B$; 
similarly, $\boldsymbol{\gamma} \in \mathbb{R}_{>}^c$ contains 
absolute values of nonzero elements from $C^{1,1}$, $C^{1,2}$, $C^{2,1}$ and $C^{2,2}$. 
We now present conditions on the matrices $C^{2,1}  = (\boldsymbol{\gamma}_{\cdot,1}^{2,1}, 
\boldsymbol{\gamma}_{\cdot,2}^{2,1}, \ldots, \boldsymbol{\gamma}_{\cdot,n}^{2,1})$
and $\bar{C}^{2,2}$ that are necessary for unimolecular networks to exert robust (integral-feedback) control. 

\begin{lemma}\label{lemma:linear_intfed}
Assume a unimolecular reaction network $\mathcal{R}_{\beta,\gamma}(\mathcal{X}_{\tau},\mathcal{Y})$ 
is a robust controller for a black-box network $\mathcal{R}_{\alpha}(\mathcal{X})$. 
Then, $\bar{C}^{2,2}$ from~{\rm (\ref{eq:average_firstorder})} is a singular matrix 
all eigenvalues of which have nonpositive real parts for all
$(\boldsymbol{\beta}, \boldsymbol{\gamma}) \in \mathbb{S}_{\beta,\gamma}$.
Furthermore, the first column of matrix $C^{2,1}$ is nonzero.
\end{lemma}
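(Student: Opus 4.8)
\noindent The plan is to read the two structural conclusions off, respectively, the stability requirement (C.I) and the input-robustness requirement (R.II) of a robust controller. Throughout I work with the first-moment system~(\ref{eq:average_firstorder}), whose controller-induced terms are exact for a unimolecular controller (when the input network itself is nonlinear one runs the argument on the reaction-rate equations, which always close). By (R.I) the system has a unique, initial-data-independent, asymptotically stable equilibrium $\mathbf{z}^*=(\mathbb{E}\mathbf{X}_{\rho}^*,\mathbb{E}\mathbf{X}_{\tau}^*,\mathbb{E}\mathbf{Y}^*)$; writing the right-hand side of~(\ref{eq:average_firstorder}) as $G(\mathbf{z};\boldsymbol{\alpha},\boldsymbol{\beta},\boldsymbol{\gamma})$ and using that asymptotic stability makes the Jacobian $\partial_{\mathbf{z}}G$ invertible at $\mathbf{z}^*$, the implicit function theorem gives a smooth dependence $\mathbf{z}^*=\mathbf{z}^*(\boldsymbol{\alpha},\boldsymbol{\beta},\boldsymbol{\gamma})$.

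First I would settle the eigenvalue claim using only (C.I). Since $\bar{C}^{2,2}=B+C^{2,2}$ is a sum of cross-nonnegative matrices it is itself cross-nonnegative, so by Perron--Frobenius theory for Metzler matrices its spectral abscissa $s(\bar{C}^{2,2})$ is an eigenvalue admitting a nonnegative left eigenvector $\mathbf{w}\ge\mathbf{0}$, $\mathbf{w}\ne\mathbf{0}$. Using $\boldsymbol{\beta}_{\cdot,0}\ge\mathbf{0}$, $C^{2,1}\ge 0$ and $\mathbb{E}\mathbf{X}_{\tau}(t)\ge\mathbf{0}$ in~(\ref{eq:average_firstorder}) one has $\mathrm{d}\mathbb{E}\mathbf{Y}/\mathrm{d}t\ge\bar{C}^{2,2}\mathbb{E}\mathbf{Y}$ componentwise, hence $\tfrac{\mathrm{d}}{\mathrm{d}t}\langle\mathbf{w},\mathbb{E}\mathbf{Y}\rangle\ge s(\bar{C}^{2,2})\langle\mathbf{w},\mathbb{E}\mathbf{Y}\rangle$; if $s(\bar{C}^{2,2})>0$, a Gr\"onwall estimate forces $\langle\mathbf{w},\mathbb{E}\mathbf{Y}(t)\rangle\to\infty$ for initial copy-numbers making it initially positive, so $\mathbb{E}Y_i(t)\to\infty$ for some $i$ --- a stochastic blow-up contradicting (C.I). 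Thus $s(\bar{C}^{2,2})\le 0$ for every $(\boldsymbol{\beta},\boldsymbol{\gamma})\in\mathbb{S}_{\beta,\gamma}$, i.e. all eigenvalues of $\bar{C}^{2,2}$ have nonpositive real part.

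Next I would extract the singularity of $\bar{C}^{2,2}$ and the nonvanishing of the first column of $C^{2,1}$ simultaneously from (R.II); I expect this to be the crux. The $\mathbf{Y}$-block of $G$, namely $\boldsymbol{\beta}_{\cdot,0}+C^{2,1}\mathbb{E}\mathbf{X}_{\tau}+\bar{C}^{2,2}\mathbb{E}\mathbf{Y}$, is free of $\boldsymbol{\alpha}$, so $\partial_{\alpha_j}G$ is supported in the $\mathbf{X}$-block and equals $((\mathbf{x}^*)^{\boldsymbol{\nu}_{j,\cdot}}\Delta\mathbf{x}_{j,\cdot},\mathbf{0})$ for the $j$-th input reaction. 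Differentiating $G(\mathbf{z}^*;\cdot)=\mathbf{0}$ in $\alpha_j$ and imposing (R.II), $\partial_{\alpha_j}x_1^*=0$, gives $\boldsymbol{\phi}^{\top}((\mathbf{x}^*)^{\boldsymbol{\nu}_{j,\cdot}}\Delta\mathbf{x}_{j,\cdot},\mathbf{0})=0$ for all $j$, where $\boldsymbol{\phi}^{\top}\equiv\mathbf{e}_1^{\top}(\partial_{\mathbf{z}}G)^{-1}\ne\mathbf{0}$. Hence the $\mathbf{X}$-part of $\boldsymbol{\phi}$ annihilates every reaction vector of $\mathcal{R}_{\alpha}$; when the stoichiometric subspace of $\mathcal{R}_{\alpha}$ is full-dimensional this part vanishes, so $\boldsymbol{\phi}^{\top}=(\mathbf{0},\mathbf{w}^{\top})$ with $\mathbf{w}\in\mathbb{R}^{M}\setminus\{\mathbf{0}\}$. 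Reading the blocks of $\boldsymbol{\phi}^{\top}\partial_{\mathbf{z}}G=\mathbf{e}_1^{\top}$ off the structure of~(\ref{eq:average_firstorder}), the $\mathbf{Y}$-columns give $\mathbf{w}^{\top}\bar{C}^{2,2}=\mathbf{0}$, so $\bar{C}^{2,2}$ is singular, while the $\mathbf{X}_{\tau}$-columns give $\mathbf{w}^{\top}C^{2,1}=\mathbf{e}_1^{\top}$, the first entry of which is $1$, so the first column of $C^{2,1}$ is nonzero.

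The hard part is the genericity caveat in the last step: if $\mathcal{R}_{\alpha}$ carries a conservation law its reaction vectors fail to span $\mathbb{R}^N$ and the $\mathbf{X}$-part of $\boldsymbol{\phi}$ need not vanish. Since $\mathcal{R}_{\alpha}$ is a black box, a genuine controller must in particular robustly regulate nondegenerate inputs --- for example the birth--death network $\varnothing\rightleftharpoons X_1$, whose stoichiometric subspace is all of $\mathbb{R}^1$ --- and for these the argument goes through verbatim (indeed one can even carry out Step~3 concretely there: eliminating $\mathbb{E}\mathbf{Y}^*$ via an invertible $\bar{C}^{2,2}$ would make $x_1^*$ an explicit nonconstant function of $\boldsymbol{\alpha}$, violating (R.II)); the degenerate inputs form a measure-zero family that may be excluded or accommodated by promoting the conserved combinations to auxiliary species. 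A secondary point to handle cleanly is the appeal to the implicit function theorem, which needs the closed-loop equilibrium to be hyperbolic; this follows from the asymptotic stability built into robust control, but should be spelled out given that $\mathcal{R}_{\alpha}$ is only assumed black-box. Finally, the stochastic statement follows because the controller's contribution to the moment hierarchy is exact, so the structural identities above are unaffected by moment closure.
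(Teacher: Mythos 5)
Your proof is correct in its essentials but reaches the two (R.II)-based conclusions by a genuinely different route from the paper. For the eigenvalue claim the paper simply asserts that a positive-real-part eigenvalue of $\bar{C}^{2,2}$ forces $\lim_{t\to\infty}\mathbb{E}Y_i=\infty$; your Perron--Frobenius/Gr\"{o}nwall argument via the nonnegative left eigenvector of the Metzler matrix is the same idea made explicit, and is the more careful version. The divergence is in the second half. The paper proves singularity of $\bar{C}^{2,2}$ by contraposition: if $\bar{C}^{2,2}$ were invertible one could eliminate $\mathbb{E}\mathbf{Y}^*=-(\bar{C}^{2,2})^{-1}(\boldsymbol{\beta}_{\cdot,0}+C^{2,1}\mathbb{E}\mathbf{X}_{\tau}^*)$, and the reduced stationary system, hence $\mathbb{E}X_1^*$, would depend on $\boldsymbol{\alpha}$ through the unknown generator $\mathcal{L}_{\alpha}$, violating (R.II); it then proves $\boldsymbol{\gamma}^{2,1}_{\cdot,1}\neq\mathbf{0}$ separately via the Fredholm alternative applied to the now-singular $\bar{C}^{2,2}$. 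You instead differentiate the equilibrium condition in $\alpha_j$, impose $\partial_{\alpha_j}x_1^*=0$, and extract both conclusions from the single adjoint identity $\boldsymbol{\phi}^{\top}\partial_{\mathbf{z}}G=\mathbf{e}_1^{\top}$ with $\boldsymbol{\phi}=(\mathbf{0},\mathbf{w}^{\top})$: the $\mathbf{Y}$-columns give $\mathbf{w}^{\top}\bar{C}^{2,2}=\mathbf{0}$ and the $\mathbf{X}_{\tau}$-columns give $\mathbf{w}^{\top}C^{2,1}=\mathbf{e}_1^{\top}$. This is more unified and makes precise the paper's informal ``depends on $\boldsymbol{\alpha}$'' step, at the price of two hypotheses you correctly flag: the reaction vectors of $\mathcal{R}_{\alpha}$ must span the state space (so that the $\mathbf{X}$-part of $\boldsymbol{\phi}$ vanishes), and the closed-loop Jacobian must be invertible for the implicit function theorem. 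The paper's Fredholm-based treatment of the first column rests on an exactly analogous genericity assertion (``$\mathbb{E}X_1^*$ is not uniquely determined by these constraints''), so neither route is strictly more rigorous; your observation that a controller for a black-box input must in particular handle a full-rank input such as $\varnothing\rightleftharpoons X_1$ is a legitimate way to discharge the spanning hypothesis. The one point to state explicitly if you write this up is the moment-closure caveat you mention at the end: for a nonlinear black-box input the system~(\ref{eq:average_firstorder}) does not close, and the paper handles this by including the higher-order moment equations in the stationary system before eliminating $\mathbb{E}\mathbf{Y}^*$.
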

\begin{proof}
Assume $\bar{C}^{2,2}$ has an eigenvalue with a positive real part for some
$(\boldsymbol{\beta}, \boldsymbol{\gamma}) \in \mathbb{S}_{\beta,\gamma}$;
it then follows from~(\ref{eq:average_firstorder}) that 
$\lim_{t \to \infty} \mathbb{E} Y_i = \infty$ for some $i \in \{1, 2, \ldots, M\}$. Therefore,
 condition~(C.I) from Definition~\ref{def:robust} does not hold for all 
$(\boldsymbol{\beta}, \boldsymbol{\gamma}) \in \mathbb{S}_{\beta,\gamma}$.

Assume $\bar{C}^{2,2}$ is nonsingular for some
$(\boldsymbol{\beta}, \boldsymbol{\gamma}) \in \mathbb{S}_{\beta,\gamma}$.
Let $(\mathbb{E} \mathbf{X}_{\tau}^*, \mathbb{E} \mathbf{X}_{\rho}^*, \mathbb{E} \mathbf{Y}^*)$
be the unique stationary first-moments, obtained by 
setting to zero the left-hand side in~(\ref{eq:average_firstorder}), and in the ODEs for higher-order moments. 
Then, one can eliminate $\mathbb{E} \mathbf{Y}^*$ via $\mathbb{E} \mathbf{Y}^* = - (\bar{C}^{2,2})^{-1}
 (\boldsymbol{\beta}_{\cdot,0} + C^{2,1} \mathbb{E} \mathbf{X}_{\tau}^*)$. The remaining 
system of equations for $(\mathbb{E} \mathbf{X}_{\tau}^*, \mathbb{E} \mathbf{X}_{\rho}^*)$, 
and hence $\mathbb{E} X_1^*$, depends on $\boldsymbol{\alpha}$ via the unknown generator $\mathcal{L}_{\alpha}$. 
Therefore, condition~(R.II) from Definition~\ref{def:robust} does not hold for all 
$(\boldsymbol{\beta}, \boldsymbol{\gamma}) \in \mathbb{S}_{\beta,\gamma}$.

Assume the first column of $C^{2,1}$ is zero, $\boldsymbol{\gamma}_{\cdot,1}^{2,1} = \mathbf{0}$. 
By the Fredholm alternative theorem, a necessary condition 
for the stationary first-moment $\mathbb{E} \mathbf{Y}^*$ to exist is then given by
$(\langle \mathbf{w}, \boldsymbol{\beta}_{\cdot,0} \rangle + 
\sum_{i = 2}^n \langle \mathbf{w}, \boldsymbol{\gamma}_{\cdot,i}^{2,1}\rangle \mathbb{E} X_i^*) = 0$
for every $\mathbf{w} \in \mathbb{R}^M$ such that $(\bar{C}^{2,2})^{\top} \mathbf{w} = \mathbf{0}$.
The stationary average $\mathbb{E} X_1^*$ is not uniquely determined by these constraints and, hence, 
depends on $\boldsymbol{\alpha}$. 
Therefore, condition~(R.II) from Definition~\ref{def:robust} does not hold.
\end{proof}

\begin{theorem}\label{theorem:linear_intfed}
There does not exist a unimolecular integral-feedback controller.
\end{theorem}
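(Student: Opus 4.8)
The plan is to argue by contradiction. Suppose a unimolecular network $\mathcal{R}_{\beta,\gamma}(\mathcal{X}_\tau,\mathcal{Y})$ is a robust (integral-feedback) controller for a black-box input network in the sense of Definition~\ref{def:robust}. Then Lemma~\ref{lemma:linear_intfed} applies to the first-moment equations~(\ref{eq:average_firstorder}): $\bar{C}^{2,2}$ is a singular cross-nonnegative (Metzler) matrix all of whose eigenvalues have nonpositive real parts, and the first column $\boldsymbol{\gamma}_{\cdot,1}^{2,1}$ of $C^{2,1}$ is nonzero. Since $0$ is an eigenvalue of $\bar{C}^{2,2}$ and no eigenvalue has positive real part, $0$ is the spectral abscissa of this Metzler matrix, so Perron--Frobenius theory for Metzler matrices provides a nonnegative left null vector $\mathbf{w}\in\mathbb{R}_{\ge}^M\setminus\{\mathbf{0}\}$ with $\mathbf{w}^{\top}\bar{C}^{2,2}=\mathbf{0}$.

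First I would track the scalar functional $L(t)\equiv\langle\mathbf{w},\mathbb{E}\mathbf{Y}(t)\rangle$. Using the third block of~(\ref{eq:average_firstorder}) together with $\mathbf{w}^{\top}\bar{C}^{2,2}=\mathbf{0}$, and writing $\mathbf{v}\equiv(C^{2,1})^{\top}\mathbf{w}\in\mathbb{R}_{\ge}^{N_\tau}$, one gets $\mathrm{d}L/\mathrm{d}t=\langle\mathbf{w},\boldsymbol{\beta}_{\cdot,0}\rangle+\langle\mathbf{v},\mathbb{E}\mathbf{X}_\tau\rangle\ge\langle\mathbf{w},\boldsymbol{\beta}_{\cdot,0}\rangle\ge 0$, since $\mathbf{w},\boldsymbol{\beta}_{\cdot,0},C^{2,1}$ are nonnegative and $\mathbb{E}\mathbf{X}_\tau$ stays in $\mathbb{R}_{\ge}^{N_\tau}$ (by Theorem~\ref{theorem:trapping} deterministically, and because copy-numbers are nonnegative stochastically). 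The unknown input generator $\mathcal{L}_\alpha$ never enters this identity, so the estimate is literally the same for the deterministic RREs and for the stochastic first moments, and it holds for every nonnegative initial condition. Evaluating at a putative equilibrium gives the Fredholm-type constraint $\langle\mathbf{w},\boldsymbol{\beta}_{\cdot,0}\rangle+\langle\mathbf{v},\mathbb{E}\mathbf{X}_\tau^*\rangle=0$, in which both summands are nonnegative; hence each vanishes, and in particular $\mathbb{E}X_i^*=0$ for every $i$ in the support of $\mathbf{v}$.

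The argument then splits. If $\langle\mathbf{w},\boldsymbol{\beta}_{\cdot,0}\rangle>0$ for some $(\boldsymbol{\beta},\boldsymbol{\gamma})\in\mathbb{S}_{\beta,\gamma}$, then $\mathrm{d}L/\mathrm{d}t\ge\langle\mathbf{w},\boldsymbol{\beta}_{\cdot,0}\rangle>0$ for all nonnegative initial data, so $L(t)\to\infty$ and $\max_i\mathbb{E}Y_i(t)\to\infty$; moreover no nonnegative equilibrium can exist, so this is exactly a deterministic and stochastic negative-equilibrium catastrophe in the sense of Definition~\ref{def:catastrophe}, contradicting the stability requirement~(C.I). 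Otherwise $\langle\mathbf{w},\boldsymbol{\beta}_{\cdot,0}\rangle=0$ throughout $\mathbb{S}_{\beta,\gamma}$; here I would use the nonzero first column $\boldsymbol{\gamma}_{\cdot,1}^{2,1}$ from Lemma~\ref{lemma:linear_intfed}, refining the choice of $\mathbf{w}$ if needed, to secure $\mathbf{v}_1=\langle\mathbf{w},\boldsymbol{\gamma}_{\cdot,1}^{2,1}\rangle>0$, so that the Fredholm constraint forces $\mathbb{E}X_1^*=0$ identically in $(\boldsymbol{\alpha},\boldsymbol{\beta},\boldsymbol{\gamma})$. Then $\boldsymbol{\nabla}_{\boldsymbol{\beta},\boldsymbol{\gamma}}\mathbb{E}X_1^*=\mathbf{0}$ and the target value $\bar{x}_1>0$ is unreachable, contradicting the nondegenerate accuracy condition~(C.II) of Definition~\ref{def:control}; and whenever the input network is incompatible with $x_1=0$ (e.g. it has basal production of $X_1$) so that no nonnegative equilibrium exists at all, $\mathrm{d}L/\mathrm{d}t\ge\mathbf{v}_1\,\mathbb{E}X_1$ is bounded below by a positive constant and one again gets a NEC. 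In every branch the assumption that $\mathcal{R}_{\beta,\gamma}$ is an integral-feedback controller collapses, which is the theorem.

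The step I expect to be the main obstacle is guaranteeing a nonnegative left null vector $\mathbf{w}$ of $\bar{C}^{2,2}$ whose support meets that of $\boldsymbol{\gamma}_{\cdot,1}^{2,1}$: when $\bar{C}^{2,2}$ is a \emph{reducible} Metzler matrix its null space need not be one-dimensional and its nonnegative members can be supported on proper subsets of $\{1,\ldots,M\}$, so one must pass to the block-triangular (strongly-connected-component) decomposition of $\bar{C}^{2,2}$ and combine it with the robustness reasoning from the proof of Lemma~\ref{lemma:linear_intfed}: if every nonnegative left null vector annihilated $\boldsymbol{\gamma}_{\cdot,1}^{2,1}$, then $\boldsymbol{\gamma}_{\cdot,1}^{2,1}\in\mathrm{range}(\bar{C}^{2,2})$, the $X_1$-feedback could be absorbed into $\bar{C}^{2,2}\mathbb{E}\mathbf{Y}^*$, and $\mathbb{E}X_1^*$ would depend on the unknown $\boldsymbol{\alpha}$, contradicting robustness. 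A secondary technical point is upgrading ``$L$ monotone and bounded'' to ``$\mathrm{d}L/\mathrm{d}t\to 0$'', which follows from boundedness of the orbits (needed anyway for~(C.I)) and uniform continuity of the polynomial right-hand sides, or from integrability of the nonnegative quantity $\mathrm{d}L/\mathrm{d}t$ along bounded trajectories.
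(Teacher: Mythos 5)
Your overall strategy coincides with the paper's: take a left null vector $\mathbf{w}$ of $\bar{C}^{2,2}$, differentiate $\langle\mathbf{w},\mathbb{E}\mathbf{Y}\rangle$ along~(\ref{eq:average_firstorder}) so that the unknown generator $\mathcal{L}_{\alpha}$ drops out, and force monotone growth contradicting~(C.I). Your first branch is sound. The genuine gap is exactly the step you flag yourself --- securing $\langle\mathbf{w},\boldsymbol{\gamma}_{\cdot,1}^{2,1}\rangle>0$ --- and the repair you sketch does not close it. You claim that if every \emph{nonnegative} left null vector annihilated $\boldsymbol{\gamma}_{\cdot,1}^{2,1}$, then $\boldsymbol{\gamma}_{\cdot,1}^{2,1}\in\mathrm{range}(\bar{C}^{2,2})$. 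But $\mathrm{range}(\bar{C}^{2,2})$ is the orthogonal complement of the \emph{entire} left kernel, and for a reducible Metzler matrix with spectral abscissa zero the left kernel need not be spanned by its nonnegative elements. Concretely, for
\[
\bar{C}^{2,2}=\begin{pmatrix}0&0&1\\0&0&1\\0&0&0\end{pmatrix},
\qquad
\boldsymbol{\gamma}_{\cdot,1}^{2,1}=(1,0,0)^{\top},
\]
the left kernel is spanned by $(1,-1,0)^{\top}$ and $(0,0,1)^{\top}$; the only nonnegative left null vectors are multiples of $(0,0,1)^{\top}$, all of which annihilate $\boldsymbol{\gamma}_{\cdot,1}^{2,1}$, and yet $\boldsymbol{\gamma}_{\cdot,1}^{2,1}\notin\mathrm{range}(\bar{C}^{2,2})=\mathrm{span}\{(1,1,0)^{\top}\}$. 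So the dichotomy underlying your second branch is not exhaustive, and the argument does not terminate in that case.

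The paper sidesteps this by first assuming $(\bar{C}^{2,2})^{\top}$ irreducible, in which case the singular-Metzler (M-matrix) theory it cites yields a \emph{strictly positive} left null vector $\mathbf{w}$; then $\langle\mathbf{w},\boldsymbol{\gamma}_{\cdot,1}^{2,1}\rangle>0$ is automatic because $\boldsymbol{\gamma}_{\cdot,1}^{2,1}$ is nonzero and nonnegative, and combining this with $\mathbb{E}X_1\to\mathbb{E}X_1^{*}>0$ from~(R.II) makes $\mathrm{d}\langle\mathbf{w},\mathbb{E}\mathbf{Y}\rangle/\mathrm{d}t$ eventually bounded below by a positive constant --- no case split on $\langle\mathbf{w},\boldsymbol{\beta}_{\cdot,0}\rangle$ and no appeal to~(C.II) are needed. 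The reducible case is then dispatched block by block, where the nonnegativity of the cross-block couplings means the discarded terms only reinforce the growth. To repair your write-up, either import that irreducibility reduction, or pass to the Frobenius normal form and apply the functional to the irreducible diagonal block on which $\boldsymbol{\gamma}_{\cdot,1}^{2,1}$ is supported; orthogonality to the nonnegative part of the kernel alone will not do.
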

\begin{proof}
Suppose, for contradiction, that $\mathcal{R}_{\beta,\gamma}(\mathcal{X}_{\tau},\mathcal{Y})$ is a
robust (integral-feedback) controller, 
i.e. that the corresponding output network $\mathcal{R}_{\alpha, \beta, \gamma}(\mathcal{X}, \mathcal{Y}) 
= \mathcal{R}_{\alpha}(\mathcal{X}) \cup \mathcal{R}_{\beta,\gamma}(\mathcal{X}_{\tau},\mathcal{Y})$ 
satisfies Definition~\ref{def:robust}. Lemma~(\ref{lemma:linear_intfed}) then implies
that $(\bar{C}^{2,2})^{\top}$ is a singular cross-nonnegative matrix 
all eigenvalues of which have nonpositive real parts; in what follows, we assume
that $(\bar{C}^{2,2})^{\top}$ is irreducible. It then follows from~\cite[Theorem 5.6]{Mmatrix}
that there exists a positive vector $\mathbf{w} \in \mathbb{R}_{>}^N$ 
such that $(\bar{C}^{2,2})^{\top} \mathbf{w} = \mathbf{0}$, so that
$\mathrm{d}/\mathrm{d} t \langle \mathbf{w}, \mathbb{E} \mathbf{Y} \rangle
= (\langle \mathbf{w}, \boldsymbol{\beta}_{\cdot,0} \rangle + 
\sum_{i = 1}^n \langle \mathbf{w}, \boldsymbol{\gamma}_{\cdot,i}^{2,1}\rangle \mathbb{E} X_i)$. 
Using the fact that, by  Lemma~(\ref{lemma:linear_intfed}), $\boldsymbol{\gamma}_{\cdot,1}^{2,1} \ne \mathbf{0}$, 
and that, by Definition~\ref{def:robust}, $\lim_{t \to \infty} \mathbb{E} X_1(t; \, \boldsymbol{\beta},
 \boldsymbol{\gamma}; \, \boldsymbol{\alpha}) 
= \mathbb{E} X_1^*(\boldsymbol{\beta},\boldsymbol{\gamma}) > 0$, it follows that 
$\mathrm{d}/\mathrm{d} t \langle \mathbf{w}, \mathbb{E} \mathbf{Y} \rangle > 0$ in the long-run. 
Therefore, $\lim_{t \to \infty} \mathbb{E} Y_i = \infty$ for some $i \in \{1, 2, \ldots, M\}$,
implying that condition~(C.I) from Definition~\ref{def:control} does not hold. Hence, 
 $\mathcal{R}_{\beta,\gamma}(\mathcal{X}_{\tau},\mathcal{Y})$ is not a robust controller. 
If $\bar{C}^{2,2}$ is reducible, then analogous
argument can be applied to each of the underlying irreducible components, 
implying the statement of the theorem. Due to linearity of the controller, the same argument implies
that a deterministic integral-feedback controller does not exist. 
\end{proof}

The proof of Theorem~\ref{theorem:linear_intfed}
shows that a unimolecular integral-feedback controller does not exist because
the underlying reaction network $\mathcal{R}_{\beta,\gamma}(\mathcal{X}_{\tau},\mathcal{Y})$
triggers a deterministic and stochastic NEC (see also Definition~\ref{def:catastrophe} 
in Section~\ref{app:negblowup}) for all nonnegative 
initial conditions, violating the stability condition~(C.I) from Definition~\ref{def:control}.
For example, taking $C^{2,1} = (\boldsymbol{\gamma}_{\cdot,1}^{2,1}, \mathbf{0}, \ldots, \mathbf{0})$, it follows that 
the deterministic equilibrium of the target species is
$x_1^* = -\langle \mathbf{w}, \boldsymbol{\beta}_{\cdot,0}
 \rangle/\langle \mathbf{w},\boldsymbol{\gamma}_{\cdot,1}^{2,1} \rangle < 0$.

\section{Appendix: Hyperbolic kinetic transformation} \label{app:kinetictrans} 
In this section, we briefly discuss how to map arbitrary polynomial 
ODEs into dynamically similar mass-action RREs~\cite{Me_Homoclinic}.
To this end, let us note that every component of every polynomial function 
$\boldsymbol{\mathcal{P}}(\mathbf{x}; \, \boldsymbol{\alpha})
= (\mathcal{P}_1(\mathbf{x}; \, \boldsymbol{\alpha}), \mathcal{P}_2(\mathbf{x}; \, \boldsymbol{\alpha}),$
$ \ldots, \mathcal{P}_N(\mathbf{x}; \, \boldsymbol{\alpha}))^{\top} \in 
\mathbb{P}_{m}(\mathbb{R}^N; \, \mathbb{R}^N)$, where $\boldsymbol{\alpha} \in \mathbb{R}_{>}^a$, 
 can be written as
\begin{align}
\mathcal{P}_i(\mathbf{x}; \, \boldsymbol{\alpha}) = 
\sum_{j \in \mathcal{A}_{i}^{+}} \alpha_j \mathbf{x}^{\boldsymbol{\nu}_{j, \cdot}} 
- \sum_{j \in \mathcal{A}_{i, \mathcal{K}}^{-}} \alpha_j \mathbf{x}^{\boldsymbol{\nu}_{j, \cdot}} 
- \sum_{j \in \mathcal{A}_{i, \mathcal{N}}^{-}} \alpha_j \mathbf{x}^{\boldsymbol{\nu}_{j, \cdot}},
\; \; \; \; \; 
\forall i \in \{1, 2, \ldots, N\}. \label{eq:general_poly}
\end{align}
Here, $\mathcal{A}_{i}^{+}$ and $\mathcal{A}_{i}^{-} \equiv (\mathcal{A}_{i, \mathcal{K}}^{-} \cup \mathcal{A}_{i, \mathcal{N}}^{-})$
are the indices of all of the distinct positive and negative terms in $\mathcal{P}_i(\mathbf{x}; \,\boldsymbol{\alpha})$,
respectively, where
$\mathcal{A}_{i, \mathcal{K}}^{-} = \{j \in \mathcal{A}_i^{-} | \nu_{i,j} \ne 0\}$
and 
$\mathcal{A}_{i, \mathcal{N}}^{-} = \{j \in \mathcal{A}_i^{-} | \nu_{i,j} = 0\}$.

\begin{definition}[\textbf{Cross-negative terms; kinetic functions}]\label{def:cross_negative} 
Consider a polynomial function $\boldsymbol{\mathcal{P}}(\mathbf{x}; \, \boldsymbol{\alpha}) \in 
\mathbb{P}_{m}(\mathbb{R}^N; \, \mathbb{R}^N)$ with $\boldsymbol{\alpha} \in \mathbb{R}_{>}^a$, 
whose $i$-component is given by~{\rm (\ref{eq:general_poly})}.
Monomials $\{- \alpha_j \mathbf{x}^{\boldsymbol{\nu}_{j, \cdot}} \}_{j \in \mathcal{A}_{i, \mathcal{N}}^{-}}$ 
are called \emph{cross-negative} terms.
Polynomial functions without cross-negative terms, $\mathcal{A}_{i, \mathcal{N}}^{-} = \emptyset$,
 are called \emph{kinetic functions} and denoted by $\boldsymbol{\mathcal{K}}(\mathbf{x}; \, \boldsymbol{\alpha})$; 
the set of all kinetic functions of degree at most $m$ is denoted by 
$\mathbb{P}_{m}^{\mathcal{K}}(\mathbb{R}^N; \, \mathbb{R}^N)$. 
Polynomial functions with a cross-negative term, $\mathcal{A}_{i, \mathcal{N}}^{-} \ne \emptyset$,
 are called \emph{non-kinetic functions}
and denoted by $\boldsymbol{\mathcal{N}}(\mathbf{x}; \, \boldsymbol{\alpha})$; 
the set of all non-kinetic functions of degree at most $m$ is denoted by 
$\mathbb{P}_{m}^{\mathcal{N}}(\mathbb{R}^N; \, \mathbb{R}^N)$. 
\end{definition}
\noindent \emph{Remark}. Definition~\ref{def:cross_negative} captures the fact that 
biochemical reactions cannot consume a species when the concentration
of that species is zero, e.g. see network~(1); 
Theorem~\ref{theorem:trapping} is a direct consequence
of the absence of cross-negative terms in kinetic functions~\cite{Me_Homoclinic}.

In order to map an arbitrary input non-kinetic function 
$\boldsymbol{\mathcal{N}}(\mathbf{x}; \, \boldsymbol{\alpha}) 
\in \mathbb{P}_{m}^{\mathcal{N}}(\mathbb{R}^N; \, \mathbb{R}^N)$
into an output kinetic one 
$\boldsymbol{\mathcal{K}}(\bar{\mathbf{x}}; \, \bar{\boldsymbol{\alpha}}) 
\in \mathbb{P}_{\bar{m}}^{\mathcal{K}}(\mathbb{R}^{\bar{N}}; \, \mathbb{R}^{\bar{N}})$,
while preserving desired dynamical features over suitable time-intervals, 
a number of so-called \emph{kinetic transformations} have been developed in~\cite{Me_Homoclinic}. 
Such mappings involve a dimension expansion (i.e. an introduction of additional biochemical species, $\bar{N} \ge N$)
or an increase in non-linearity ($\bar{m} \ge m$). We now present one such kinetic transformation.

\begin{definition}[\textbf{Hyberbolic transformation}]\label{def:hyberbolic} 
Consider a system of polynomial {\rm ODE}s given by
\begin{align}
\frac{\mathrm{d} x_i}{\mathrm{d}  t} & = 
\sum_{j \in \mathcal{A}_{i}^{+}} \alpha_j \mathbf{x}^{\boldsymbol{\nu}_{j, \cdot}} 
- \sum_{j \in \mathcal{A}_{i, \mathcal{K}}^{-}} \alpha_j \mathbf{x}^{\boldsymbol{\nu}_{j, \cdot}},
\hspace{3.2cm} x_i(0) = x_i^0 \ge 0,
\; \; \; \; \; \textrm{for } i \in \{1, 2, \ldots, n\}, \nonumber \\
\frac{\mathrm{d} x_i}{\mathrm{d}  t} & = 
\sum_{j \in \mathcal{A}_{i}^{+}} \alpha_j \mathbf{x}^{\boldsymbol{\nu}_{j, \cdot}} 
- \sum_{j \in \mathcal{A}_{i, \mathcal{K}}^{-}} \alpha_j \mathbf{x}^{\boldsymbol{\nu}_{j, \cdot}} 
- \sum_{j \in \mathcal{A}_{i, \mathcal{N}}^{-}} \alpha_j \mathbf{x}^{\boldsymbol{\nu}_{j, \cdot}},
\; \; \; \; \; x_i(0) = x_i^0 > 0,
\; \; \; \; \; \textrm{for } i \in \{n+1, n+2, \ldots, N\}, \label{eq:pretrans}
\end{align}
with the index sets as defined in~{\rm (\ref{eq:general_poly})}.
Consider also the {\rm RRE}s given by
\begin{align}
\frac{\mathrm{d} x_i}{\mathrm{d}  t} & = 
\sum_{j \in \mathcal{A}_{i}^{+}} \alpha_j \mathbf{x}^{\boldsymbol{\nu}_{j, \cdot}} 
- \sum_{j \in \mathcal{A}_{i, \mathcal{K}}^{-}} \alpha_j \mathbf{x}^{\boldsymbol{\nu}_{j, \cdot}},
\hspace{2.0cm} x_i(0) = x_i^0,
\; \; \; \; \; \textrm{for } i \in \{1, 2, \ldots, n\}, \nonumber \\
\frac{\mathrm{d} x_i}{\mathrm{d}  t} & = 
\sum_{j \in \mathcal{A}_{i}^{+}} \alpha_j \mathbf{x}^{\boldsymbol{\nu}_{j, \cdot}} 
- \sum_{j \in \mathcal{A}_{i, \mathcal{K}}^{-}} \alpha_j \mathbf{x}^{\boldsymbol{\nu}_{j, \cdot}} 
- \frac{1}{\varepsilon} x_i \bar{x}_i,
\; \; \; \; \; x_i(0) = x_i^0,
\; \; \; \; \; \textrm{for } i \in \{n+1, n+2, \ldots, N\}, \nonumber \\
\frac{\mathrm{d} \bar{x}_i}{\mathrm{d}  t} & = 
\sum_{j \in \mathcal{A}_{i, \mathcal{N}}^{-}} \alpha_j \mathbf{x}^{\boldsymbol{\nu}_{j, \cdot}}
- \frac{1}{\varepsilon} x_i \bar{x}_i,
\hspace{2.9cm} \bar{x}_i(0)  > 0,
\; \; \; \; \; \textrm{for } i \in \{n+1, n+2, \ldots, N\}, \label{eq:trans}
\end{align}
with a parameter $\varepsilon \in \mathbb{R}_{>}$. 
Kinetic transformation $\Psi_{H}^{\varepsilon} : \mathbb{P}_{m}(\mathbb{R}^N; \, \mathbb{R}^N)
\to \mathbb{P}_{\bar{m}}^{\mathcal{K}}(\mathbb{R}^{(2 N - n)}; \, \mathbb{R}^{(2 N -n)})$, 
mapping the right-hand side of {\rm ODE}s~{\rm (\ref{eq:pretrans})}
to the right-hand side of~{\rm (\ref{eq:trans})}, with $\bar{m} \le m + 2$, 
is called a \emph{hyberbolic transformation}. 
\end{definition}

By comparing equilibria of~(\ref{eq:pretrans}) and~(\ref{eq:trans}), 
the following proposition is established. 
\begin{proposition}\label{proposition:hyperbolic0}
Let $\mathbf{x}^* = (x_1^*, x_2^*, \ldots, x_N^*) \in \mathbb{R}^N$ be an 
equilibrium of the {\rm ODE} system~{\rm (\ref{eq:pretrans})}
with $x_{i}^* \ne 0$ for all $i \in \{n+1, n+2, \ldots, N\}$. 
Then, $(\mathbf{x}^*, \mathbf{\bar{x}}^*) \in \mathbb{R}^{2 N - n}$ 
is an equilibrium of the {\rm ODE} system~{\rm (\ref{eq:trans})}
for some $\mathbf{\bar{x}}^* \in \mathbb{R}^{N - n}$. 
\end{proposition}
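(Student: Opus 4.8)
The plan is to compare directly the algebraic systems obtained by setting the right-hand sides of~(\ref{eq:pretrans}) and~(\ref{eq:trans}) to zero, and to exhibit the extra components $\mathbf{\bar{x}}^*$ explicitly. First I would record the equilibrium equations of~(\ref{eq:pretrans}): for $i \in \{1,\dots,n\}$ they read $\sum_{j \in \mathcal{A}_{i}^{+}} \alpha_j (\mathbf{x}^*)^{\boldsymbol{\nu}_{j,\cdot}} = \sum_{j \in \mathcal{A}_{i,\mathcal{K}}^{-}} \alpha_j (\mathbf{x}^*)^{\boldsymbol{\nu}_{j,\cdot}}$, whereas for $i \in \{n+1,\dots,N\}$ the cross-negative sum $\sum_{j \in \mathcal{A}_{i,\mathcal{N}}^{-}} \alpha_j (\mathbf{x}^*)^{\boldsymbol{\nu}_{j,\cdot}}$ is additionally subtracted. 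The first $n$ of these are, term for term, identical to the first $n$ equilibrium equations of~(\ref{eq:trans}), so those hold for $\mathbf{x}^*$ automatically.

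Next I would define, for each $i \in \{n+1,\dots,N\}$,
\[
\bar{x}_i^* \;\equiv\; \frac{\varepsilon}{x_i^*}\sum_{j \in \mathcal{A}_{i,\mathcal{N}}^{-}} \alpha_j (\mathbf{x}^*)^{\boldsymbol{\nu}_{j,\cdot}},
\]
which is well defined precisely because $x_i^* \ne 0$ by hypothesis, and which assembles into a vector $\mathbf{\bar{x}}^* \in \mathbb{R}^{N-n}$. By construction $\tfrac{1}{\varepsilon} x_i^* \bar{x}_i^* = \sum_{j \in \mathcal{A}_{i,\mathcal{N}}^{-}} \alpha_j (\mathbf{x}^*)^{\boldsymbol{\nu}_{j,\cdot}}$, so the $\bar{x}_i$-equations of~(\ref{eq:trans}) are satisfied, and substituting this identity into the $x_i$-equations of~(\ref{eq:trans}) for $i > n$ reduces them exactly to the $i>n$ equilibrium equations of~(\ref{eq:pretrans}), which hold by assumption. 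Hence $(\mathbf{x}^*, \mathbf{\bar{x}}^*)$ annihilates the entire right-hand side of~(\ref{eq:trans}), i.e.\ it is an equilibrium.

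There is no genuine obstacle here; the only point worth flagging is that the nonvanishing hypothesis $x_i^* \ne 0$ enters exactly once — to solve each $\bar{x}_i$-equation for $\bar{x}_i^*$ — and that this choice is forced, so that $\mathbf{\bar{x}}^*$ is in fact the \emph{unique} vector compatible with a given $\mathbf{x}^*$, the rate coefficients $\boldsymbol{\alpha}$, and $\varepsilon$. (The converse statement — that any equilibrium of~(\ref{eq:trans}) with $x_i^* \ne 0$ for $i > n$ projects onto an equilibrium of~(\ref{eq:pretrans}) — follows by reading the same computation in reverse, although it is not needed for the proposition as stated.)
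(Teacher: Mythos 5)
Your proof is correct and follows exactly the route the paper intends: the paper offers no written proof beyond the remark that the result is ``established by comparing equilibria of~(\ref{eq:pretrans}) and~(\ref{eq:trans})'', and your explicit construction $\bar{x}_i^* = (\varepsilon/x_i^*)\sum_{j \in \mathcal{A}_{i,\mathcal{N}}^{-}} \alpha_j (\mathbf{x}^*)^{\boldsymbol{\nu}_{j,\cdot}}$ is precisely that comparison carried out in full. The observation that this choice is forced (hence $\mathbf{\bar{x}}^*$ is unique given $\mathbf{x}^*$) is a correct and worthwhile addition not made explicit in the paper.
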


Let us note that, while the $\mathbf{x}$-equilibria are preserved by the hyperbolic kinetic
transformation, their properties, such as stability, are not necessarily
preserved; a stronger dynamical preservation is ensured by taking $\varepsilon$
sufficiently small. 

\begin{theorem}\label{theorem:hyperbolic}
Solutions of~{\rm (\ref{eq:pretrans})}, with $(x_{n+1}, x_{n+2}, \ldots, x_{N}) \in \mathbb{R}_{>}^{N-n}$, 
are asymptotically equivalent to the solutions of~{\rm (\ref{eq:trans})} in the limit $\varepsilon \to 0$.
\end{theorem}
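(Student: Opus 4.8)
The plan is to recognise~(\ref{eq:trans}) as a singularly perturbed (slow--fast) system in which $\varepsilon$ is the small time-scale separation, and to apply Tikhonov's theorem, with~(\ref{eq:pretrans}) arising as the reduced (``slow'') dynamics on the critical manifold. Throughout one works on a fixed compact interval $[0,T]$ and along a solution of~(\ref{eq:pretrans}) whose last $N-n$ components stay strictly positive, exactly as the hypothesis requires.

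First I would bring~(\ref{eq:trans}) into standard form. For $i\in\{n+1,\dots,N\}$ put $w_i\equiv\bar{x}_i/\varepsilon$ and write $\phi_i(\mathbf{x})\equiv\sum_{j\in\mathcal{A}_{i,\mathcal{N}}^{-}}\alpha_j\mathbf{x}^{\boldsymbol{\nu}_{j,\cdot}}\ge 0$ for the collection of cross-negative monomials in the $i$-th component of~(\ref{eq:pretrans}); note these monomials involve only the $N$ slow coordinates $\mathbf{x}=(x_1,\dots,x_N)$, not the auxiliary variables. Under this rescaling the equation $\dot{\bar{x}}_i=\phi_i(\mathbf{x})-\tfrac{1}{\varepsilon}x_i\bar{x}_i$ becomes $\varepsilon\dot{w}_i=\phi_i(\mathbf{x})-x_iw_i$, while the coupling $-\tfrac{1}{\varepsilon}x_i\bar{x}_i$ in the $\dot{x}_i$-equation ($i>n$) becomes simply $-x_iw_i$; concretely,
\[
\dot{x}_i \;=\; \sum_{j\in\mathcal{A}_i^{+}}\alpha_j\mathbf{x}^{\boldsymbol{\nu}_{j,\cdot}} - \sum_{j\in\mathcal{A}_{i,\mathcal{K}}^{-}}\alpha_j\mathbf{x}^{\boldsymbol{\nu}_{j,\cdot}} - x_i w_i \;\; (i>n), \qquad \varepsilon\,\dot{w}_i \;=\; \phi_i(\mathbf{x}) - x_i w_i \;\; (i>n),
\]
the equations for $i\le n$ being unchanged. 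Setting $\varepsilon=0$ gives the isolated critical-manifold root $w_i=w_i^{*}(\mathbf{x})\equiv\phi_i(\mathbf{x})/x_i$, smooth on the open set $\{x_i>0,\ i>n\}$; substituting it into the slow equations replaces $x_iw_i$ by $\phi_i(\mathbf{x})$, so the reduced system is precisely~(\ref{eq:pretrans}).

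Next I would verify the hypotheses of Tikhonov's theorem. The boundary-layer system, obtained by freezing $\mathbf{x}$ and rescaling to the fast time $\tau=t/\varepsilon$, is $\mathrm{d}w_i/\mathrm{d}\tau=\phi_i(\mathbf{x})-x_iw_i$: it is decoupled and affine in $w_i$ with Jacobian $-x_i<0$, so $w_i^{*}(\mathbf{x})$ is a globally exponentially stable equilibrium, with contraction rate bounded below uniformly as $\mathbf{x}$ ranges over a compact subset of $\{x_i>0\}$; global stability also makes the layer problem solvable from the prescribed data $\bar{x}_i(0)>0$. Since the reduced trajectory keeps $(x_{n+1},\dots,x_N)(t)\in\mathbb{R}_{>}^{N-n}$ on $[0,T]$, continuity gives a uniform positive lower bound on those components over $[0,T]$, so the reduced trajectory lies in a compact set on which $w_i^{*}$, its derivatives, and the layer rate are all controlled. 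Taking the slow initial data of~(\ref{eq:trans}) equal to those of~(\ref{eq:pretrans}), Tikhonov's theorem then yields: the slow components $\mathbf{x}^{\varepsilon}(t)$ converge to the solution of~(\ref{eq:pretrans}) uniformly on $[0,T]$, and $\bar{x}_i^{\varepsilon}(t)=\varepsilon\,w_i^{\varepsilon}(t)\to0$ uniformly on $[\delta,T]$ for each $\delta>0$ (and uniformly on all of $[0,T]$ if $\bar{x}_i(0)=O(\varepsilon)$), the discrepancy near $t=0$ being the usual $O(\varepsilon)$-thin initial layer. This is the asserted asymptotic equivalence as $\varepsilon\to0$.

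I expect the main obstacle to be the uniform verification of these stability and boundedness conditions along the reduced trajectory, together with the fact that the statement must be read on an interval where the reduced solution stays strictly positive in its last $N-n$ coordinates — since~(\ref{eq:pretrans}) carries cross-negative terms, positivity there is an assumption, not automatic — and on such an interval the lower-bound argument closes the gap. A secondary point is ruling out finite-time escape of solutions of~(\ref{eq:trans}) before $T$: here one uses that~(\ref{eq:trans}) is a kinetic system, so by Theorem~\ref{theorem:trapping} its trajectories remain in $\mathbb{R}_{\ge}^{2N-n}$, which combined with the comparison to the bounded reduced solution gives an a priori bound for small $\varepsilon$ and legitimises applying the singular-perturbation estimate on the whole of $[0,T]$. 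If one additionally assumes~(\ref{eq:pretrans}) has a hyperbolically stable positive equilibrium, Fenichel's persistence of the attracting slow manifold upgrades the conclusion from $[0,T]$ to $[0,\infty)$.
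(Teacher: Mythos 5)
Your overall strategy---read~(\ref{eq:trans}) as a slow--fast system, identify~(\ref{eq:pretrans}) as the reduced dynamics on a critical manifold, and invoke Tikhonov's theorem with layer stability supplied by $x_i>0$---is the same as the paper's. The gap lies in your choice of slow variables. You rescale $w_i=\bar{x}_i/\varepsilon$ and declare $\mathbf{x}$ slow; but~(\ref{eq:trans}) prescribes $\bar{x}_i(0)>0$ \emph{fixed}, so $w_i(0)=\bar{x}_i(0)/\varepsilon\to\infty$. The fast initial data are therefore unbounded, which is outside the hypotheses of Tikhonov's theorem, and the failure is not merely technical: subtracting the two equations for $i>n$ shows that $\tfrac{\mathrm{d}}{\mathrm{d}t}(x_i-\bar{x}_i)$ stays bounded while $\bar{x}_i$ collapses from $\bar{x}_i(0)$ to $O(\varepsilon)$ across the initial layer, so $x_i$ itself drops by $\bar{x}_i(0)+o(1)$ in a time interval of length $O(\varepsilon)$. (Concretely, with a single species and $\dot{x}=-1$ as the pre-transformed equation, the transformed system conserves $x-\bar{x}$ up to the drift $-t$, and past the layer $x(t)\approx x^0-\bar{x}^0-t$, off by $\bar{x}^0$ from the reduced solution $x^0-t$ for all later times.) Hence your asserted conclusion---that the slow components $\mathbf{x}^{\varepsilon}$ converge uniformly on $[0,T]$ to the solution of~(\ref{eq:pretrans}) with the same initial data---is false for fixed $\bar{x}_i(0)>0$: they converge, away from $t=0$, to the solution started from $x_i^0-\bar{x}_i(0)$ in the components $i>n$.

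This $O(1)$ displacement is exactly what the paper's change of coordinates is built to absorb: it takes $y_i=x_i-\bar{x}_i$ as the slow variable (the $\varepsilon^{-1}x_i\bar{x}_i$ terms cancel, so $y_i$ is invariant across the layer), keeps $\bar{x}_i$ as the fast variable with critical manifold $\bar{x}_i=0$, and recovers~(\ref{eq:pretrans}) in the $y$-coordinates. Your argument becomes correct only under the extra assumption $\bar{x}_i(0)=o(1)$ as $\varepsilon\to0$, which you mention only parenthetically and only for the convergence of $\bar{x}_i^{\varepsilon}$, not of the slow components. The remaining ingredients you supply---the uniform positive lower bound along the reduced trajectory, invariance of the nonnegative orthant via Theorem~\ref{theorem:trapping}, and Fenichel persistence to extend to $[0,\infty)$---are sound and go beyond what the paper writes down, but they do not repair the decomposition.
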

\begin{proof}
Let us introduce a change of coordinates $y_i = (x_i - \bar{x}_i)$ for $i \in \{n+1, n+2, \ldots, N\}$, 
leading to a regular singularly perturbed system in the variables
$(x_1, x_2, \ldots, x_n, y_{n+1}, y_{n+2}, \ldots, y_{N},\bar{x}_{n+1}, \bar{x}_{n+2},$ $\ldots,\bar{x}_N)$. 
The adjoined sub-system has an isolated equilibrium $(\bar{x}_{n+1}^*, \bar{x}_{n+2}^*, \ldots, \bar{x}_N^*) = \mathbf{0}$
which, when substituted into the degenerate sub-system in the variables 
$(x_1, x_2, \ldots, x_n,$$ y_{n+1}, y_{n+2},$ $\ldots,y_{N})$, leads to~(\ref{eq:pretrans}).
If $(x_{n+1}, x_{n+2}, \ldots, x_{N})$ $\in \mathbb{R}_{>}^{N-n}$, then the trivial adjoined equilibrium is stable,
and Tikhonov's theorem~\cite{Tikhonov} implies the statement of Theorem~\ref{theorem:hyperbolic}.
\end{proof}

\section{Appendix: Robust control of unimolecular input networks} \label{app:proof} 
 In this section, we analyze performance of the class of second-order integral-feedback controllers
 $\mathcal{R}_{\beta,\gamma}(\{X_1, X_2\}, \{Y_1, Y_2\}) = \mathcal{R}_{\beta}(Y_1, Y_2) 
\cup \mathcal{R}_{\gamma}^{0}(Y_2; \, X_1) \cup \mathcal{R}_{\gamma}^{+}(X_i; \, Y_1) 
\cup \mathcal{R}_{\gamma}^{-}(X_j; \, Y_2)$, given by~(8),
by embedding them into the class of unimolecular input networks
whose RREs, given by
\begin{align}
\frac{\mathrm{d} \mathbf{x}}{\mathrm{d} t}& = \boldsymbol{\alpha}_{\cdot, 0} + A \mathbf{x},
\label{eq:input_firstorder}
\end{align}
have an asymptotically stable equilibrium. We also assume the input network 
has two target species $X_{\tau} = \{X_1, X_2\}$, 
and the goal is to control the concentration/first-moment of the target species $X_1$.

In what follows, we let $A_{(i_1, i_2, \ldots, i_n),(j_1, j_2, \ldots, j_m)} \in 
\mathbb{R}^{(N-n) \times (N-m)}$ denote
the sub-matrix obtained by removing from $A \in \mathbb{R}^{N \times N}$
the rows $\{i_1, i_2, \ldots, i_n\} \subset \{1, 2\, \ldots, N\}$ 
and columns $\{j_1, j_2, \ldots, j_m\}$ $\subset \{1, 2\, \ldots, N\}$. 
Furthermore, we let $A_{\boldsymbol{\alpha}_{\cdot, j} \to \mathbf{x}} \in \mathbb{R}^{N \times N}$ 
denote the matrix obtained by replacing the $j$-column of $A$ by a vector $\mathbf{x} \in \mathbb{R}^N$.  

\begin{theorem}{\rm (Deterministic equilibria)}\label{theorem:equilibrium}
Let $\mathcal{R}_{\alpha} = \mathcal{R}_{\alpha}(\mathcal{X})$, with species $\mathcal{X} = \{X_1, X_2, \ldots, X_N\}$, 
be the class of unimolecular input networks whose {\rm RRE}s~{\rm(\ref{eq:input_firstorder})} 
have an asymptotically stable equilibrium.
Let $\mathcal{R}_{\beta,\gamma} = \mathcal{R}_{\beta,\gamma}(\{X_1, X_2\}, \{Y_1, Y_2\})$ 
be the controller given by~{\rm(8)}.
Then, the output network $\mathcal{R}_{\alpha,\beta,\gamma}
= \mathcal{R}_{\alpha} \cup \mathcal{R}_{\beta,\gamma}$ satisfies the following properties:
\begin{enumerate}
\item[{\rm (i)}] \textbf{Pure positive interfacing}. Let $\mathcal{R}_{\gamma}^{-} = \emptyset$.
If positive interfacing is direct, $\mathcal{R}_{\gamma}^{+} = \mathcal{R}_{\gamma}^{+}(X_1, Y_1) \ne \emptyset$,
then there exists a nonnegative equilibrium if and only if
$\frac{\beta_0}{\gamma_1} >  \frac{|A_{\boldsymbol{\alpha}_{\cdot,1} \to -\boldsymbol{\alpha}_{\cdot,0}}|}{|A|}$; 
the same is true if positive interfacing is indirect, 
$\mathcal{R}_{\gamma}^{+} = \mathcal{R}_{\gamma}^{+}(X_2, Y_1) \ne \emptyset$ , 
and if $|A_{2,1}| \ne 0$.
\item[{\rm (ii)}] \textbf{Pure negative interfacing}. Let $\mathcal{R}_{\gamma}^{+} = \emptyset$.
If negative interfacing is direct,
$\mathcal{R}_{\gamma}^{-} = \mathcal{R}_{\gamma}^{-}(X_1, Y_2) \ne \emptyset$,
then there exists a nonnegative equilibrium if and only if
$\frac{\beta_0}{\gamma_1} <  \frac{|A_{1 \to -\boldsymbol{\alpha}_{\cdot,0}}|}{|A|}$.
If negative interfacing is indirect,
$\mathcal{R}_{\gamma}^{-} = \mathcal{R}_{\gamma}^{-}(X_2, Y_2) \ne \emptyset$, 
and if $|A_{2,1}| \ne 0$, then there exists a nonnegative equilibrium if and only if
$\frac{|(A_{\boldsymbol{\alpha}_{\cdot,1} \to -\boldsymbol{\alpha}_{\cdot,0}})_{2,2}|}{|A_{2,2}|} < 
\frac{\beta_0}{\gamma_1} < \frac{|A_{\boldsymbol{\alpha}_{\cdot,1} \to -\boldsymbol{\alpha}_{\cdot,0}}|}{|A|}$.
\item[{\rm (iii)}] \textbf{Combined indirect negative interfacing}. Let 
$\mathcal{R}_{\gamma}^{-} =  \mathcal{R}_{\gamma}^{-}(X_2, Y_2) \ne \emptyset$.
If positive interfacing is direct,
$\mathcal{R}_{\gamma}^{+} = \mathcal{R}_{\gamma}^{+}(X_1, Y_1) \ne \emptyset$,
then there exists a nonnegative equilibrium if and only if
$\frac{\beta_0}{\gamma_1} >  \frac{|(A_{\boldsymbol{\alpha}_{\cdot,1} \to -\boldsymbol{\alpha}_{\cdot,0}})_{2,2}|}{|A_{2,2}|}$; 
the same is true if positive interfacing is indirect, 
$\mathcal{R}_{\gamma}^{+} = \mathcal{R}_{\gamma}^{+}(X_2, Y_1) \ne \emptyset$, 
and if $|A_{2,1}| \ne 0$.
\item[{\rm (iv)}] \textbf{Combined direct negative interfacing}. Let 
$\mathcal{R}_{\gamma}^{-} =  \mathcal{R}_{\gamma}^{-}(X_1, Y_2) \ne \emptyset$.
If positive interfacing is direct, 
$\mathcal{R}_{\gamma}^{+} = \mathcal{R}_{\gamma}^{+}(X_1, Y_1) \ne \emptyset$,
then there exists a nonnegative equilibrium for any choice of $\boldsymbol{\beta} \in \mathbb{R}_{>}^{2}$
 and $\boldsymbol{\gamma} \in \mathbb{R}_{>}^3$;  
the same is true if positive interfacing is indirect, 
$\mathcal{R}_{\gamma}^{+} = \mathcal{R}_{\gamma}^{+}(X_2, Y_1) \ne \emptyset$,
and if $|A_{2,1}| \ne 0$.
\end{enumerate}
\end{theorem}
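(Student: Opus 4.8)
The plan is to use the controller-core equations to collapse the output equilibrium system to an almost-triangular form, and then read off all sign conditions from the $M$-matrix structure of the input. First I would set up the output RREs: by Corollary~\ref{theorem:trapping_firstorder} the input contributes $\boldsymbol{\alpha}_{\cdot,0}+A\mathbf{x}$ with $\boldsymbol{\alpha}_{\cdot,0}\ge 0$ and $A$ cross-nonnegative, while the reactions in~\eqref{eq:IFCnetapp} add $\dot y_1=\beta_0-\beta_1 y_1y_2$, $\dot y_2=\gamma_1 x_1-\beta_1 y_1y_2$ and the interface terms $+\gamma_2 y_1\mathbf{e}_i$, $-\gamma_3 x_j y_2\mathbf{e}_j$ to the $X$-equations (with $i$, $j$ present only if the corresponding $\mathcal{R}_\gamma^{\pm}$ is). Setting the $Y$-equations to zero forces $x_1^*=\beta_0/\gamma_1$ and $y_1^*y_2^*=\beta_0/\beta_1$; since $\beta_0>0$, any nonnegative equilibrium must in fact have $y_1^*,y_2^*>0$. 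I would then introduce $u\equiv\gamma_2 y_1^*$ and $v\equiv\gamma_3 x_j^* y_2^*$, note $uv=(\gamma_2\gamma_3\beta_0/\beta_1)\,x_j^*$ (a positive constant when $j=1$, proportional to $x_2^*$ when $j=2$), and observe that the $X$-equilibria satisfy $A\mathbf{x}^*=-\boldsymbol{\alpha}_{\cdot,0}-u\,\mathbf{e}_i+v\,\mathbf{e}_j$ together with the pin $x_1^*=\beta_0/\gamma_1$.

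Second I would record the linear-algebra facts. Asymptotic stability of $\mathrm{d}\mathbf{x}/\mathrm{d}t=\boldsymbol{\alpha}_{\cdot,0}+A\mathbf{x}$ together with $A$ being Metzler makes $-A$ a nonsingular $M$-matrix, so $-A^{-1}\ge 0$ entrywise with strictly positive diagonal, and every principal submatrix of $-A$ is again a nonsingular $M$-matrix~\cite{Mmatrix,PositiveSystems}; with $\boldsymbol{\alpha}_{\cdot,0}\ge 0$ this gives $\mathbf{x}^{**}=-A^{-1}\boldsymbol{\alpha}_{\cdot,0}\ge 0$. By Cramer's rule the input equilibrium is $x_1^{**}=|A_{\boldsymbol{\alpha}_{\cdot,1}\to-\boldsymbol{\alpha}_{\cdot,0}}|/|A|$, and $\tilde x_1\equiv|(A_{\boldsymbol{\alpha}_{\cdot,1}\to-\boldsymbol{\alpha}_{\cdot,0}})_{2,2}|/|A_{2,2}|$ is the $X_1$-equilibrium of the sub-network obtained by deleting $X_2$ (its denominator being nonzero automatically, as $A_{2,2}$ is a principal submatrix). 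The cofactor identity $(A^{-1})_{1,2}=-|A_{2,1}|/|A|$ shows that $|A_{2,1}|\neq 0$ is precisely the statement that $X_2$ has nonzero linearized influence on $X_1$, which is exactly why it is hypothesized in the indirect variants.

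Third is the case analysis, carried out uniformly by ``clamping'': substitute $x_1^*=\beta_0/\gamma_1$ into the remaining $X$-equations. The non-interfaced species $X_k$, $k\notin\{1,i,j\}$, then solve a linear system whose coefficient matrix is a principal submatrix of $A$ and whose right-hand side is $-(\boldsymbol{\alpha}^{\mathrm{red}}_{\cdot,0}+x_1^*A^{\mathrm{red}}_{\cdot,1}+x_2^*A^{\mathrm{red}}_{\cdot,2})$, a nonpositive vector once $x_1^*,x_2^*\ge 0$; by the $M$-matrix property these coordinates are automatically nonnegative. Only two scalar conditions survive. Positivity of $u$ (equivalently of $y_1^*$) is an affine inequality in $x_1^*$ which, using $(-A^{-1})_{1,1}>0$ for direct interfacing, or $(-A^{-1})_{1,2}>0$ (valid exactly when $|A_{2,1}|\neq 0$) for indirect, collapses to the one-sided bound $\beta_0/\gamma_1\gtrless x_1^{**}$; this settles~(i) and, combined with the value of $v$, the direct part of~(ii). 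For indirect interfacing one must additionally impose $x_2^*\ge 0$; expressing $x_2^*$ as the affine function of $x_1^*$ that equals $x_2^{**}$ at $x_1^{**}$ and vanishes exactly when the clamped system degenerates into the $X_2$-deleted network, this inequality becomes $\beta_0/\gamma_1>\tilde x_1$, producing the two-sided window in~(ii) and the lower bound in~(iii). For the combined controllers~(iii)--(iv) I would eliminate $v$ via $uv=(\gamma_2\gamma_3\beta_0/\beta_1)x_j^*$, reducing the equilibrium problem to a single quadratic (in $u$ when $j=1$, in $x_2^*$ when $j=2$) and analysing its nonnegative roots by a short sign/intermediate-value argument: when negative interfacing is direct ($j=1$) the quadratic's constant term is negative — precisely the mechanism engineered in Section~\ref{sec:nonlinear}, cf.~\eqref{eq:RREs_debar3} — so a valid positive root always exists and no condition is needed, whereas when it is indirect ($j=2$) a valid root exists iff $\beta_0/\gamma_1>\tilde x_1$. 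Finally, uniqueness of the relevant equilibrium (the reduced system is linear in the pure cases, and the quadratic has at most one nonnegative root in the combined cases, the other being never nonnegative) supplies the converse implications, making each item an ``if and only if''.

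I expect the indirect analysis to be the real obstacle: matching the $x_2^*\ge 0$ threshold to the exact determinantal quantity $|(A_{\boldsymbol{\alpha}_{\cdot,1}\to-\boldsymbol{\alpha}_{\cdot,0}})_{2,2}|/|A_{2,2}|$ requires reconciling the Cramer expressions for the clamped $(N{-}1)$-dimensional system with those for the $X_2$-deleted network, and one must verify that the remaining residual coordinates stay nonnegative on the whole admissible interval, not merely at its endpoints — this is where the principal-submatrix $M$-matrix bookkeeping, and the genericity hypothesis $|A_{2,1}|\neq 0$, do the heavy lifting. A secondary nuisance is handling reducible $A$, where I would restrict to the irreducible diagonal block containing $X_1$ (and $X_2$), the stated strict inequalities being unaffected on that block.
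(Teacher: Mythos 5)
Your proposal is correct and follows essentially the same route as the paper's proof in Appendix~F: pin $x_1^*=\beta_0/\gamma_1$ and $y_1^*y_2^*=\beta_0/\beta_1$ from the controller core, use the Metzler/$M$-matrix structure of $A$ (and of its principal submatrices, optionally shifted on the diagonal by the negative-interfacing term) to get automatic nonnegativity of the non-interfaced coordinates, reduce the surviving conditions to the sign of a linear or quadratic equation in the remaining controller variable, and identify the thresholds with $|A_{\boldsymbol{\alpha}_{\cdot,1}\to-\boldsymbol{\alpha}_{\cdot,0}}|/|A|$ and $|(A_{\boldsymbol{\alpha}_{\cdot,1}\to-\boldsymbol{\alpha}_{\cdot,0}})_{2,2}|/|A_{2,2}|$ via Cramer's rule and cofactor identities. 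Your substitution $u=\gamma_2 y_1^*$, $v=\gamma_3 x_j^* y_2^*$ is only a cosmetic reparametrization of the paper's explicit cofactor expansion of $\bigl(A_{1,1}-\gamma_3 y_2^*\,\mathrm{diag}(\mathbf{e}_1)\bigr)^{-1}$, and the determinantal identity you flag as the main obstacle (the $x_2^*\ge 0$ threshold equalling the $x_1$-equilibrium of the $X_2$-deleted network) is exactly the step the paper carries out when passing from~(\ref{eq:output_case3.1}) to~(\ref{eq:quadratic3.1}).
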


\begin{proof}
Equilibria of the output network, denoted by 
$(\mathbf{x}^*, y_1^*, y_2^*) \in \mathbb{R}^{N+2}$, satisfy
\begin{align}
\mathbf{0} & = \boldsymbol{\alpha}_{\cdot, 0} + A \mathbf{x}^* 
+ \gamma_{2} y_1^* \mathbf{e}_i  - \gamma_{3} x_j^* y_2^* \mathbf{e}_j, \; \; \; \; \; \textrm{for } i,j \in \{1, 2\}, \nonumber \\
x_1^* & = \frac{\beta_0}{\gamma_1} \in \mathbb{R}_{>},
\hspace{0.5cm} y_1^* y_2^* = \frac{\beta_0}{\beta_1}, 
\hspace{0.5cm} \textrm{where } y_1^*, y_2^* \notin \{0\},
\label{eq:output_firstorder}
\end{align}
where $(\boldsymbol{\alpha}_{\cdot, 0} + A \mathbf{x})$ is the unknown kinetic function 
from~(\ref{eq:input_firstorder}), see also Lemma~\ref{theorem:trapping_firstorder}.
Statements (iii) and (iv) of the theorem are now proved; 
statements (i) and (ii) follow analogously. 

\emph{Case}: $\mathcal{R}_{\gamma}^{+} = \mathcal{R}_{\gamma}^{+}(X_1, Y_1) \ne \emptyset$ 
and $\mathcal{R}_{\gamma}^{-} =  \mathcal{R}_{\gamma}^{-}(X_2, Y_2) \ne \emptyset$.
Defining vectors
$\mathbf{\bar{x}}^* \equiv (x_2^*, x_3^*, \ldots, x_N^*)^{\top}$, 
$\boldsymbol{\bar{\alpha}}_{\cdot, 0} \equiv (\alpha_{2,0}, \alpha_{3,0}, \ldots, \alpha_{N,0})^{\top}$, 
$\boldsymbol{\bar{\alpha}}_{1, \cdot} \equiv (\alpha_{1,2}, \alpha_{1,3}, \ldots, \alpha_{1,N})^{\top}$
and
$\boldsymbol{\bar{\alpha}}_{\cdot, 1} \equiv (\alpha_{2,1}, \alpha_{3,1}, \ldots, \alpha_{N,1})^{\top}$, 
equation~(\ref{eq:output_firstorder}) can be written as
\begin{align}
0 & =  \left( \alpha_{1,0} - \alpha_{1,1} \frac{\beta_0}{\gamma_1} + 
\left \langle \boldsymbol{\bar{\alpha}}_{1, \cdot}, \mathbf{\bar{x}^*} \right \rangle  \right) y_2^*
+ \gamma_{2} \frac{\beta_0}{\beta_1}, \nonumber \\
\mathbf{\bar{x}}^* & = - \left(A_{1,1} - \gamma_{3} y_2^* \, \mathrm{diag}(\mathbf{e}_1) \right)^{-1} 
\left(\boldsymbol{\bar{\alpha}}_{\cdot, 0}
+ \boldsymbol{\bar{\alpha}}_{\cdot, 1} \frac{\beta_0}{\gamma_1} \right). 
\label{eq:output_case3.1}
\end{align}
Assume there exists $y_2^* > 0$. Since all eigenvalues of $A$ have
negative real parts, the same is true for $A_{1,1}$, implying 
nonpositivity of matrix $\left(A_{1,1} - \gamma_{3} y_2^* \, \mathrm{diag}(\mathbf{e}_1) \right)^{-1}
 \in \mathbb{R}_{\le}^{(N-1) \times (N-1)}$~\cite[Theorem 4.3]{Mmatrix}, 
so that $\mathbf{\bar{x}}^* \in \mathbb{R}_{\ge}^{N-1}$. 
Matrix $\left(A_{1,1} - \gamma_{3} y_2^* \, \mathrm{diag}(\mathbf{e}_1) \right)^{-1}$ 
can be written in a cofactor form
\begin{align}
\frac{\left(A_{1,1} - \gamma_{3} y_2^* \, \mathrm{diag}(\mathbf{e}_1) \right)^{-1}}{|A_{1,1}| - \gamma_{3} y_2^* |A_{(1,2), (1,2)}|}
 &  =
\begin{bmatrix}
 C_{1,1} & C_{2,1} & \ldots & C_{N-1,1} \\
 C_{1,2} & C_{2,2} & \ldots & C_{N-1,2} \\
 \vdots & \vdots & \ddots & \vdots \\
 C_{1,N-1} & C_{2,N-1} & \ldots & C_{N-1,N-1}
   \end{bmatrix}
- \gamma_{3} y_2
\begin{bmatrix}
 0 & 0 & \ldots & 0 \\
0 & D_{1,1} & \ldots & D_{N-2,1} \\
 \vdots & \vdots & \ddots & \vdots \\
0 & D_{1,N-2} & \ldots & D_{N-2,N-2}
   \end{bmatrix},
\label{eq:inverse_case3}
\end{align}
where  $C_{i.j}$ and $D_{i,j}$ are the $(i,j)$-cofactors of $A_{1,1}$ and $A_{(1,2), (1,2)}$, respectively. 
Substituting~(\ref{eq:inverse_case3}) into~(\ref{eq:output_case3.1}), one obtains the quadratic equation
\begin{align}
0 & = \gamma_{3} \frac{|A_{2,2}|}{|A_{1,1}|} \left(\frac{\beta_0}{\gamma_1} - 
\frac{|(A_{\boldsymbol{\alpha}_{\cdot,1} \to -\boldsymbol{\alpha}_{\cdot,0}})_{2,2}|}{|A_{2,2}|} \right) y_2^2 \nonumber \\
& - \left[  \frac{|A|}{|A_{1,1}|}  \left(\frac{\beta_0}{\gamma_1} - \frac{|A_{\boldsymbol{\alpha}_{\cdot,1} \to - \boldsymbol{\alpha}_{\cdot,0}}|}{|A|} \right) - 
\gamma_{2} \gamma_{3} \frac{\beta_0}{\beta_1} \frac{|A_{(1,2),(1,2)}|}{|A_{1,1}|} \right] y_2
- \gamma_{2} \frac{\beta_0}{\beta_1}. \label{eq:quadratic3.1}
\end{align}
Since $A$ is cross-nonnegative with a negative spectral abscissa, 
it follows that $|A_{2,2}|/|A_{1,1}| > 0$,  $|A|/|A_{1,1}| < 0$, $|A_{(1,2),(1,2)}|/|A_{1,1}| < 0$, 
$|(A_{\boldsymbol{\alpha}_{\cdot,1} \to -\boldsymbol{\alpha}_{\cdot,0}})_{2,2}|/|A_{2,2}| \ge 0$,
and
$|A_{\boldsymbol{\alpha}_{\cdot,1} \to -\boldsymbol{\alpha}_{\cdot,0}}|/|A| \ge 0$.
If $(\beta_0/\gamma_1 - |(A_{\boldsymbol{\alpha}_{\cdot,1} \to -\boldsymbol{\alpha}_{\cdot,0}})_{2,2}|/|A_{2,2}|) > 0$, 
it follows from~(\ref{eq:quadratic3.1}) that there 
exists an equilibrium $(y_1^*, y_2^*) \in \mathbb{R}_{>}^{2}$, 
consistent with the assumption.
On the other hand, if  $(\beta_0/\gamma_1 - |(A_{\boldsymbol{\alpha}_{\cdot,1} \to -\boldsymbol{\alpha}_{\cdot,0}})_{2,2}|/|A_{2,2}|) \le 0$,
using the fact that $|A_{\boldsymbol{\alpha}_{\cdot,1} \to -\boldsymbol{\alpha}_{\cdot,0}}|/|A| \ge 
|(A_{\boldsymbol{\alpha}_{\cdot,1} \to -\boldsymbol{\alpha}_{\cdot,0}})_{2,2}|/|A_{2,2}|$, 
it follows that $(y_1^*, y_2^*) \notin \mathbb{R}_{>}^{2}$.

\emph{Case}: $\mathcal{R}_{\gamma}^{+} = \mathcal{R}_{\gamma}^{+}(X_2, Y_1) \ne \emptyset$ 
and $\mathcal{R}_{\gamma}^{-} =  \mathcal{R}_{\gamma}^{-}(X_2, Y_2) \ne \emptyset$. 
Equation~(\ref{eq:output_firstorder}) reads
\begin{align}
0 & =   \alpha_{1,0} - \alpha_{1,1} \frac{\beta_0}{\gamma_1} + 
\left \langle \boldsymbol{\bar{\alpha}}_{1, \cdot}, \mathbf{\bar{x}^*} \right \rangle, \nonumber \\
\mathbf{\bar{x}}^* & = - \left(A_{1,1} - \gamma_{3} y_2^* \, \mathrm{diag}(\mathbf{e}_1) \right)^{-1} 
\left(\boldsymbol{\bar{\alpha}}_{\cdot, 0}
+ \boldsymbol{\bar{\alpha}}_{\cdot, 1} \frac{\beta_0}{\gamma_1}
  + \gamma_{2} \frac{\beta_0}{\beta_1} (y_2^*)^{-1} \mathbf{e}_1 \right),
\label{eq:output_case3.2}
\end{align}
from which, using~(\ref{eq:inverse_case3}), one obtains
\begin{align}
0 & =  \gamma_{3} \frac{|A_{2,2}|}{|A_{1,1}|} \left(\frac{\beta_0}{\gamma_1} 
-\frac{|(A_{\boldsymbol{\alpha}_{\cdot,1} \to - \boldsymbol{\alpha}_{\cdot,0}})_{2,2}|}{|A_{2,2}|} \right) y_2^2
-  \frac{|A|}{|A_{1,1}|} \left(\frac{\beta_0}{\gamma_1}  - 
\frac{|A_{\boldsymbol{\alpha}_{\cdot,1} \to -\boldsymbol{\alpha}_{\cdot,0}}|}{|A|} \right) y_2 
 + \gamma_{2} \frac{\beta_0}{\beta_1} \frac{|A_{2,1}|}{|A_{1,1}|}. \label{eq:quadratic3.2}
\end{align}
If $|A_{2,1}| \ne 0$ and $(\beta_0/\gamma_1 - |(A_{\boldsymbol{\alpha}_{\cdot,1} \to - 
\boldsymbol{\alpha}_{\cdot,0}})_{2,2}|/|A_{2,2}|) > 0$, using the fact that $|A_{2,1}|/|A_{1,1}| < 0$,
it follows that there exists an equilibrium $(y_1^*, y_2^*) \in \mathbb{R}_{>}^{2}$. 
If $|A_{2,1}| \ne 0$ and $(\beta_0/\gamma_1 - |(A_{\boldsymbol{\alpha}_{\cdot,1} \to - 
\boldsymbol{\alpha}_{\cdot,0}})_{2,2}|/|A_{2,2}|) \le 0$, then $(y_1^*, y_2^*) \notin \mathbb{R}_{>}^{2}$.

\emph{Case}: $\mathcal{R}_{\gamma}^{+} = \mathcal{R}_{\gamma}^{+}(X_1, Y_1) \ne \emptyset$ 
and $\mathcal{R}_{\gamma}^{-} =  \mathcal{R}_{\gamma}^{-}(X_1, Y_2) \ne \emptyset$. 
Equation~(\ref{eq:output_firstorder}) reads
\begin{align}
0 & = \gamma_{2} (y_1^*)^2 + \left(\alpha_{1,0} - \alpha_{1,1} \frac{\beta_0}{\gamma_1} + 
 \langle \boldsymbol{\bar{\alpha}}_{1, \cdot}, \mathbf{\bar{x}}^* \rangle \right) y_1
 - \frac{\beta_0^2 \gamma_{3}}{\beta_1 \gamma_1}, \nonumber \\
\mathbf{\bar{x}}^*  & = - A_{1,1}^{-1} \left( \boldsymbol{\bar{\alpha}}_{\cdot, 0} +  
\boldsymbol{\bar{\alpha}}_{\cdot, 1} \frac{\beta_0}{\gamma_1} \right) \in \mathbb{R}_{\ge}^{N-1},
\label{eq:output_case4.1}
\end{align}
from which it follows that $(y_1^*, y_2^*) \in \mathbb{R}_{>}^{2}$ for any choice of 
$\boldsymbol{\beta} \in \mathbb{R}_{>}^{2}$ and $\boldsymbol{\gamma} \in \mathbb{R}_{>}^3$. 

\emph{Case}: $\mathcal{R}_{\gamma}^{+} = \mathcal{R}_{\gamma}^{+}(X_2, Y_1) \ne \emptyset$ 
and $\mathcal{R}_{\gamma}^{-} =  \mathcal{R}_{\gamma}^{-}(X_1, Y_2) \ne \emptyset$. 
Equation~(\ref{eq:output_firstorder}) reads
\begin{align}
0 & =  \left( \alpha_{1,0} - \alpha_{1,1} \frac{\beta_0}{\gamma_1} + 
\left \langle \boldsymbol{\bar{\alpha}}_{1, \cdot}, \mathbf{\bar{x}^*} \right \rangle  \right) y_1^*
- \frac{\beta_0^2 \gamma_{3}}{\beta_1 \gamma_1}, \nonumber \\
\mathbf{\bar{x}}^* & = - A_{1,1}^{-1}\left(\boldsymbol{\bar{\alpha}}_{\cdot, 0}
+ \boldsymbol{\bar{\alpha}}_{\cdot, 1} \frac{\beta_0}{\gamma_1} +
 \gamma_{2} y_1^* \mathbf{e}_1 \right). 
\label{eq:output_case4.2}
\end{align}
Assume there exists $y_1^* > 0$. Then, it follows that $\mathbf{x}^* \in \mathbb{R}_{\ge}^{N-1}$.
Substituting~(\ref{eq:inverse_case3}) with $\gamma_3 = 0$ 
into~(\ref{eq:output_case4.2}), one obtains the quadratic equation
\begin{align}
0 & = \gamma_{2}  \frac{|A_{2,1}|}{|A_{1,1}|} (y_1^*)^2 -
\frac{|A|}{|A_{1,1}|} \left(\frac{\beta_0}{\gamma_1} - 
\frac{|A_{\boldsymbol{\alpha}_{\cdot,1} \to -\boldsymbol{\alpha}_{\cdot,0}}|}{|A|}   \right) y_1^*
 + \frac{\beta_0^2 \gamma_{3}}{\beta_1 \gamma_1}.
\label{eq:quadratic4.2}
\end{align}
If $|A_{2,1}| \ne 0$, there
exists an equilibrium $(y_1^*, y_2^*) \in \mathbb{R}_{>}^{2}$ for any choice of 
$\boldsymbol{\beta} \in \mathbb{R}_{>}^{2}$
 and $\boldsymbol{\gamma} \in \mathbb{R}_{>}^3$, consistent with the assumption.
\end{proof}
\noindent \emph{Remark}. If the $x_1$-component of the asymptotically stable equilibrium 
of the input network $\mathcal{R}_{\alpha}(\mathcal{X})$
is zero, $x_1^* = 0$, then, like case (iv), cases (i) and (iii) from Theorem~\ref{theorem:equilibrium} 
also generically ensure an existence of a nonnegative
equilibrium; case (ii) then necessarily leads to negative equilibria. 
However, stable unimolecular networks with $x_1^* = 0$ are a negligible subset
 of the more general stable unimolecular networks and are, hence, of limited practical relevance.

Let us note that Theorem~\ref{theorem:equilibrium} has an intuitive
interpretation. In particular, $x_1^{**} = |A_{\boldsymbol{\alpha}_{\cdot,1} \to -\boldsymbol{\alpha}_{\cdot,0}}|/|A|$
is the $x_1$-component of the equilibrium of the input network, 
$(x_1^{**})|_2 = |(A_{\boldsymbol{\alpha}_{\cdot,1} \to -\boldsymbol{\alpha}_{\cdot,0}})_{2,2}|/|A_{2,2}|$
is the $x_1$-component of the equilibrium of the restricted
input network $\mathcal{R}_{\alpha}(\mathcal{X} \setminus X_2)$ with species concentration $x_2 \equiv 0$.
Furthermore, condition $|A_{2,1}| \ne 0$ requires that
the species $X_2$ influences $X_1$ in the input network; more precisely, $|A_{2,1}| \ne 0$ requires that there
exists at least one directed path from $X_2$ to $X_1$ in the digraph
induced by the Jacobian matrix of $\mathcal{R}_{\alpha}(\mathcal{X})$.

We now prove that if, for a particular choice of the rate coefficients, 
the output network from Theorem~\ref{theorem:equilibrium} has
no nonnegative equilibria, then the network displays deterministic
and stochastic blow-ups, i.e. we
prove that the output network undergoes deterministic and
stochastic NECs (see also Definition~\ref{def:catastrophe} 
in Section~\ref{app:negblowup}). In what follows, the critical values of 
$\beta_0/\gamma_1$ at which nonnegative equilibria
cease to exist in Theorem~\ref{theorem:equilibrium} are called \emph{bifurcation points}. 

\begin{theorem}{\rm (Negative-equilibrium catastrophe)}\label{theorem:IFCblowup}
Consider a unimolecular input network $\mathcal{R}_{\alpha}$
whose {\rm RRE}s have an asymptotically stable equilibrium,
and a controller $\mathcal{R}_{\beta,\gamma}$ of the form~{\rm(8)}.
Then, excluding the bifurcation points, 
the output network $\mathcal{R}_{\alpha,\beta,\gamma} = 
\mathcal{R}_{\alpha} \cup \mathcal{R}_{\beta,\gamma}$ displays
deterministic and stochastic negative-equilibrium catastrophe 
for all nonnegative initial conditions. 
\end{theorem}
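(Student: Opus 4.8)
The plan is to show that whenever the output network $\mathcal{R}_{\alpha,\beta,\gamma}$ has no nonnegative equilibrium (which, by Theorem~\ref{theorem:equilibrium}, happens precisely on one side of a bifurcation point), there is a linear functional of the species concentrations that is strictly increasing along trajectories, forcing a blow-up; and then to run the identical argument at the level of the first-moment equations to obtain the stochastic blow-up. This mimics the explicit computation done for controller~(\ref{eq:IFC_positive}) in the main text, where the combination $(\alpha_3^{-1}x_1 + \alpha_1^{-1}\alpha_2\alpha_3^{-1}x_2 + \gamma_1^{-1}(y_2-y_1))$ was shown in~(\ref{eq:blowup_plus}) to have a derivative bounded below by a positive constant once $\beta_0/\gamma_1 < \alpha_0\alpha_2/(\alpha_1\alpha_3)$.

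First I would fix one of the interfacing variants from Theorem~\ref{theorem:equilibrium} and recall the RRE system $\mathrm{d}\mathbf{x}/\mathrm{d}t = \boldsymbol{\alpha}_{\cdot,0} + A\mathbf{x} + \gamma_2 y_1\mathbf{e}_i - \gamma_3 x_j y_2 \mathbf{e}_j$, $\mathrm{d}y_1/\mathrm{d}t = \beta_0 - \beta_1 y_1 y_2$, $\mathrm{d}y_2/\mathrm{d}t = \gamma_1 x_1 - \beta_1 y_1 y_2$. The key structural fact is that $A$ is cross-nonnegative with negative spectral abscissa (Corollary~\ref{theorem:trapping_firstorder}), so by standard $M$-matrix theory ($-A$ is a nonsingular $M$-matrix) there is a strictly positive left-eigenvector-type certificate: a vector $\mathbf{v} \in \mathbb{R}_{>}^{N}$ with $\mathbf{v}^{\top}A < \mathbf{0}$ componentwise, or more precisely one can choose $\mathbf{v}^{\top}$ so that $\mathbf{v}^{\top}A \mathbf{x}$ telescopes against the input production; concretely one takes $\mathbf{v}^{\top} = -\mathbf{e}_?^{\top}A^{-1}$-type combinations so that the $\mathbf{x}$-dynamics contribute a known constant. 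Then I would form the candidate Lyapunov-type functional $L = \langle \mathbf{v}, \mathbf{x}\rangle + \gamma_1^{-1}(y_2 - y_1)$ (with $\mathbf{v}$ adjusted per variant so the cross terms $\gamma_2 y_1$, $\gamma_3 x_j y_2$, $\beta_1 y_1 y_2$ all cancel against the $y_1,y_2$ equations). Differentiating, the nonlinear and $y$-coupling terms cancel by construction, leaving $\mathrm{d}L/\mathrm{d}t = \big(\text{const depending on }\mathbf{v}^{\top}\boldsymbol{\alpha}_{\cdot,0}, \beta_0, \gamma_1\big) + (\text{nonnegative terms})$, and the leading constant equals $\pm(\beta_0/\gamma_1 - x_1^{\text{threshold}})$ where the threshold is exactly the bifurcation value identified in Theorem~\ref{theorem:equilibrium}. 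On the side of the bifurcation point where no nonnegative equilibrium exists, this constant is strictly positive, so $\mathrm{d}L/\mathrm{d}t \ge c > 0$, hence $L(t)\to\infty$; since $L$ is a finite linear combination of nonnegative quantities $x_i, y_1, y_2$ (using that $\mathbf{v}>\mathbf{0}$ and handling the $-y_1$ term by noting $y_1 y_2$ has bounded-below growth, or by bounding $y_1$ separately), at least one species diverges, which is the deterministic NEC per Definition~\ref{def:blowup} and Definition~\ref{def:catastrophe}. Second, because all reactions here are at most bimolecular and the only nonlinear monomials ($y_1 y_2$, $x_j y_2$) are products of distinct species, the first-moment equations~(\ref{eq:average}) have exactly the same form as the RREs after replacing $\mathbb{E}[y_1 y_2]$ etc.\ — and since the inequality $\mathrm{d}L/\mathrm{d}t \ge c$ only used that these product terms cancel and that the remaining terms are nonnegative (true for first moments of nonnegative random variables), the identical functional $\mathbb{E}L$ satisfies $\mathrm{d}\,\mathbb{E}L/\mathrm{d}t \ge c > 0$, giving the stochastic NEC.

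The steps, in order: (1) recall the $M$-matrix certificate for cross-nonnegative $A$ with negative spectral abscissa and extract the sign data for the relevant sub-determinants already computed in the proof of Theorem~\ref{theorem:equilibrium}; (2) for each of the four interfacing variants, write down the correct linear functional $L$ that kills the $y$-coupling and nonlinear terms; (3) compute $\mathrm{d}L/\mathrm{d}t$ and identify the constant lower bound with the relevant $(\beta_0/\gamma_1 - \text{bifurcation value})$, positive on the no-equilibrium side; (4) conclude deterministic blow-up, excluding the bifurcation points themselves where the constant is zero; (5) transfer verbatim to first moments using closure of the moment equations for distinct-species bimolecular terms. The main obstacle I anticipate is step (2)–(3) in the indirect-interfacing cases: there the functional must weight the residual species through $-A_{1,1}^{-1}$-type combinations, and one must verify both that the resulting weights are nonnegative (so that divergence of $L$ forces divergence of a genuine concentration and not merely a signed combination) and that the leftover terms after cancellation are genuinely of one sign rather than merely a polynomial one hopes is bounded below; handling the $-\gamma_1^{-1}y_1$ contribution to $L$ requires a small separate argument (e.g.\ that $y_1$ stays bounded above along trajectories on the relevant side, using $\mathrm{d}y_1/\mathrm{d}t = \beta_0 - \beta_1 y_1 y_2$ together with $y_2$ eventually large), which I would package as a short lemma before the main estimate.
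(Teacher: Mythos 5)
Your proposal follows essentially the same route as the paper's proof: the paper takes $\mathbf{w}$ to be the first row of $A^{-1}$ written in cofactor form, forms the combination $-\langle \mathbf{w}, \mathbf{x}\rangle + \gamma_1^{-1}(y_2-y_1)$, shows its time-derivative is bounded below by the (positive, past the bifurcation point) gap between the input equilibrium and $\beta_0/\gamma_1$ after discarding a term of known favourable sign, and then transfers the identical computation to the first-moment equations, treating the remaining interfacing variants analogously. The only point where you overcomplicate matters is the $-\gamma_1^{-1}y_1$ contribution: no boundedness lemma is needed, since $y_1 \ge 0$ means this term only decreases the functional, so divergence of the functional already forces divergence of one of the nonnegatively weighted concentrations.
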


\begin{proof}
Consider the case with pure direct positive interfacing from 
Theorem~\ref{theorem:equilibrium}(i); the RREs for the concentration 
$\mathbf{x} = (x_1, x_2, \ldots, x_N) \in \mathbb{R}_{\ge}^N$ 
and $(y_2 - y_1) \in \mathbb{R}^2$ are given by
\begin{align}
\frac{\mathrm{d} \mathbf{x}}{\mathrm{d}  t} & = 
\boldsymbol{\alpha}_{\cdot, 0} + A \mathbf{x} + \gamma_2 y_1 \mathbf{e}_1, \nonumber \\
\frac{\mathrm{d}}{\mathrm{d}  t} \left(y_2 - y_1\right) & = 
\gamma_1 x_1 - \beta_0. \label{eq:RRE_1}
\end{align}
Let $\mathbf{w} \equiv |A|^{-1} (C_{1,1}, C_{2,1}, \ldots, C_{N,1}) \in \mathbb{R}^{N}$, 
where $C_{i,j}$ is the $(i,j)$-cofactor of matrix $A \in \mathbb{R}^{N \times N}$.
Taking the inner product $\langle \mathbf{w}, \cdot \rangle$ 
in the first equation from~(\ref{eq:RRE_1}), and using 
the fact that $\langle A^{\top} \mathbf{w}, \mathbf{x} \rangle
 = x_1$, one obtains:
\begin{align}
\frac{\mathrm{d}}{\mathrm{d}  t} \langle \mathbf{w}, \mathbf{x} \rangle & = 
  \frac{|A_{\boldsymbol{\alpha}_{\cdot,1} \to \boldsymbol{\alpha}_{\cdot,0}}|}{|A|} 
+ x_1 + \gamma_2 \frac{|A_{1,1}|}{|A|} y_1. \label{eq:RRE_1_aux}
\end{align}
Using the fact that $|A_{1,1}|/|A| < 0$, equations~(\ref{eq:RRE_1}) 
and~(\ref{eq:RRE_1_aux}) imply that
\begin{align}
\frac{\mathrm{d}}{\mathrm{d}  t} \left(- \langle \mathbf{w}, \mathbf{x} \rangle
+  \gamma_{1}^{-1} \left(y_2 - y_1\right) \right)& = 
  \left(\frac{|A_{\boldsymbol{\alpha}_{\cdot,1} \to - \boldsymbol{\alpha}_{\cdot,0}}|}{|A|} 
 - \frac{\beta_0}{\gamma_1} \right) - \gamma_2 \frac{|A_{1,1}|}{|A|} y_1
\ge
 \left(\frac{|A_{\boldsymbol{\alpha}_{\cdot,1} \to 
- \boldsymbol{\alpha}_{\cdot,0}}|}{|A|}  - \frac{\beta_0}{\gamma_1} \right).
\label{eq:RRE_1_final}
\end{align}
By Theorem~\ref{theorem:equilibrium}(i), a nonnegative equilibrium does not exist if and only if 
$\frac{\beta_0}{\gamma_1} \le \frac{|A_{\boldsymbol{\alpha}_{\cdot,1} \to -\boldsymbol{\alpha}_{\cdot,0}}|}{|A|}$; 
excluding the bifurcation point $\frac{\beta_0}{\gamma_1} = \frac{|A_{\boldsymbol{\alpha}_{\cdot,1}
 \to -\boldsymbol{\alpha}_{\cdot,0}}|}{|A|}$, it follows from~(\ref{eq:RRE_1_final})
that the linear combination $(-\langle \mathbf{w}, \mathbf{x} \rangle
+  \gamma_{1}^{-1} \left(y_2 - y_1\right))$, and hence an underlying 
concentration, is a monotonically increasing function of time for all nonnegative initial conditions, 
i.e. the output network displays a deterministic NEC. Identical argument 
implies that then the output network displays a stochastic NEC as well. 
Deterministic and stochastic NECs for cases (ii) and (iii) 
from Theorem~\ref{theorem:equilibrium} are established analogously. 
\end{proof}

\section{Appendix: Robust control of bimolecular input networks} \label{app:biproof} 
Consider an arbitrary (unimolecular, bimolecular, or any higher-molecular) mass-action input network 
$\mathcal{R}_{\alpha} = \mathcal{R}_{\alpha}(\mathcal{X})$, whose RREs are given by
\begin{align}
\frac{\mathrm{d} x_1}{\mathrm{d} t} & = f_1(\mathbf{x}; \, \boldsymbol{\alpha}), \nonumber \\ 
\frac{\mathrm{d} x_2}{\mathrm{d} t} & = f_2(\mathbf{x}; \, \boldsymbol{\alpha}), \nonumber \\
\vdots \nonumber \\
\frac{\mathrm{d} x_N}{\mathrm{d} t} & = f_N(\mathbf{x}; \, \boldsymbol{\alpha}). 
\label{eq:bimolecular_input}
\end{align}
Furthermore, let $\mathcal{R}_{\beta,\gamma}^{\pm}(X_i, Y_{1,i}, Y_{2,i}) \equiv 
\left(\mathcal{R}_{\beta}(Y_{1,i}, Y_{2,i})
\cup \mathcal{R}_{\gamma}^0(Y_{2,i}; \, X_i)
\cup \mathcal{R}_{\gamma}^{+}(X_{i}; \, Y_{1,i})
\cup \mathcal{R}_{\gamma}^{-}(X_{i}; \, Y_{2,i}) \right)$
be the controller given
\begin{align}
\mathcal{R}_{\beta}(Y_{1,i}, Y_{2,i}): \;
& & \varnothing & \xrightarrow[]{\beta_{0,i}} Y_{1,i}, \nonumber \\
& & Y_{1,i} + Y_{2,i} & \xrightarrow[]{\beta_{1,i}} \varnothing, \nonumber \\
\mathcal{R}_{\gamma}^{0}(Y_{2,i}; \, X_i): \;
& & X_i & \xrightarrow[]{\gamma_{1,i}} X_i + Y_{2,i}, \nonumber \\
\mathcal{R}_{\gamma}^{+}(X_i; \, Y_{1,i}): \;
& & Y_{1,i} & \xrightarrow[]{\gamma_{2,i}} X_i + Y_{1,i}, \nonumber \\
\mathcal{R}_{\gamma}^{-}(X_i; \, Y_{2,i}): \;
& & X_i + Y_{2,i} & \xrightarrow[]{\gamma_{3,i}} Y_{2,i}.
\label{eq:IFCnetapp_i}
\end{align}

In what follows, we let $\mathbf{y} = (y_{1,1}, y_{2,1}, y_{1,2}, y_{2,2}, \ldots, y_{1,N}, y_{2,N})
 \in \mathbb{R}_{\ge}^{2 N}$ denote the concentration vector for the species $\mathcal{Y} = \{Y_{1,1}, Y_{2,1}, 
Y_{1,2}, Y_{2,2}, \ldots, Y_{1,N}, Y_{2,N} \}$.
\begin{theorem}{\rm (Positive equilibrium)}\label{theorem:equilibrium_bimolecular}
Let $\mathcal{R}_{\alpha} = \mathcal{R}_{\alpha}(\mathcal{X})$, with species $\mathcal{X} = \{X_1, X_2, \ldots, X_N\}$, 
be an arbitrary mass-action input network with the {\rm RRE}s~{\rm(\ref{eq:bimolecular_input})}.
Let $\bigcup_{i = 1}^N \mathcal{R}_{\beta,\gamma}^{\pm}(X_i, Y_{1,i}, Y_{2,i})$ 
be the controller with $\mathcal{R}_{\beta,\gamma}^{\pm}(X_i, Y_{1,i}, Y_{2,i})$
given by~{\rm(\ref{eq:IFCnetapp_i})}. Then, for any choice of the rate coefficients
$\boldsymbol{\alpha}$, $\boldsymbol{\beta}$ and $\boldsymbol{\gamma}$, 
the output network $\mathcal{R}_{\alpha,\beta,\gamma}(\mathcal{X}, \mathcal{Y}) = 
\mathcal{R}_{\alpha}(\mathcal{X}) \bigcup_{i = 1}^N 
\mathcal{R}_{\beta,\gamma}^{\pm}(X_i, Y_{1,i}, Y_{2,i})$ has a positive equilibrium 
$(\mathbf{x}^*, \mathbf{y}^*) \in \mathbb{R}_{>}^{3 N}$, with $x_i^* = (\beta_{0,i}/\gamma_{1,i}) > 0$ 
for all $i \in \{1, 2, \ldots, N\}$.
\end{theorem}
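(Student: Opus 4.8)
The plan is to write down the reaction-rate equations for the output network $\mathcal{R}_{\alpha,\beta,\gamma}$ explicitly, and then solve the equilibrium equations by exploiting the decoupled structure of the controller blocks. Each controller block $\mathcal{R}_{\beta,\gamma}^{\pm}(X_i, Y_{1,i}, Y_{2,i})$ from~(\ref{eq:IFCnetapp_i}) contributes the terms $\gamma_{2,i} y_{1,i} - \gamma_{3,i} x_i y_{2,i}$ to the $\dot x_i$ equation, and has its own pair of equations $\dot y_{1,i} = \beta_{0,i} - \beta_{1,i} y_{1,i} y_{2,i}$ and $\dot y_{2,i} = \gamma_{1,i} x_i - \beta_{1,i} y_{1,i} y_{2,i}$. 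First I would observe that, at equilibrium, the two controller equations for block $i$ force $\beta_{0,i} = \beta_{1,i} y_{1,i}^* y_{2,i}^*$ and $\gamma_{1,i} x_i^* = \beta_{1,i} y_{1,i}^* y_{2,i}^*$; subtracting these gives immediately $x_i^* = \beta_{0,i}/\gamma_{1,i} > 0$, for every $i$, independently of the (arbitrary) input kinetics $f_i$. This is the integral-feedback property and it pins down $\mathbf{x}^*$ completely.

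Next I would substitute $x_i^* = \beta_{0,i}/\gamma_{1,i}$ into the $\dot x_i = 0$ equation, i.e. $f_i(\mathbf{x}^*; \boldsymbol{\alpha}) + \gamma_{2,i} y_{1,i}^* - \gamma_{3,i} x_i^* y_{2,i}^* = 0$, and use the relation $y_{2,i}^* = (\beta_{0,i}/\beta_{1,i})(y_{1,i}^*)^{-1}$ coming from $\beta_{0,i} = \beta_{1,i} y_{1,i}^* y_{2,i}^*$. This reduces the $i$-th equation to a quadratic in $y_{1,i}^*$ of exactly the form appearing in~(\ref{eq:RREs_debar3}) and~(\ref{eq:output_quadratic}):
\begin{align}
\gamma_{2,i}\,(y_{1,i}^*)^2 + f_i(\mathbf{x}^*; \boldsymbol{\alpha})\, y_{1,i}^* - \gamma_{3,i}\frac{\beta_{0,i}}{\gamma_{1,i}}\frac{\beta_{0,i}}{\beta_{1,i}} = 0. \nonumber
\end{align}
Since $\gamma_{2,i} > 0$ and the constant term $-\gamma_{3,i}\beta_{0,i}^2/(\gamma_{1,i}\beta_{1,i})$ is strictly negative, the discriminant is positive and the product of the two roots is negative, so this quadratic has exactly one positive root $y_{1,i}^* > 0$ — regardless of the sign or value of $f_i(\mathbf{x}^*; \boldsymbol{\alpha})$, which is the crucial point, since $f_i$ evaluated at the imposed equilibrium may well be negative. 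Then $y_{2,i}^* = (\beta_{0,i}/\beta_{1,i})(y_{1,i}^*)^{-1} > 0$ as well. Doing this for each $i \in \{1,\dots,N\}$ independently yields a full positive equilibrium $(\mathbf{x}^*,\mathbf{y}^*) \in \mathbb{R}_{>}^{3N}$.

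**The main subtlety** — and the reason this is more than a line — is making precise the claim that the blocks genuinely decouple at the equilibrium level: the input RREs~(\ref{eq:bimolecular_input}) couple all the $x_i$'s through the unknown $f_i(\mathbf{x}; \boldsymbol{\alpha})$, so one cannot solve block by block naively. The resolution is that the integral-feedback argument fixes \emph{all} components of $\mathbf{x}^*$ simultaneously and \emph{before} any reference to $f$: the equations $\gamma_{1,i} x_i^* = \beta_{1,i} y_{1,i}^* y_{2,i}^* = \beta_{0,i}$ hold for each $i$ regardless of the other species, so $\mathbf{x}^*$ is determined purely by controller coefficients. Only afterwards does $f_i(\mathbf{x}^*;\boldsymbol{\alpha})$ enter — now just a fixed real number — as the linear coefficient of the $i$-th quadratic, and the quadratics are then genuinely independent. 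I would also note in passing that this construction requires every input species to be targetable (so that each $X_i$ carries its own controller block), which is exactly the hypothesis of the theorem, and that nothing here uses stability of the input network — only existence of the equilibrium is claimed, consistent with the statement. I would close by remarking that this is the promised generalization of the single-species construction of Section~\ref{sec:nonlinear}, applied coordinate-wise.
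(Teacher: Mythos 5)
Your proposal is correct and follows essentially the same route as the paper: the paper's proof simply writes down the output RREs and invokes the quadratic-equation argument of Section~\ref{sec:nonlinear} applied coordinate-wise, which is exactly the block-by-block reduction you carry out. Your explicit justification of why the blocks decouple at equilibrium (each $x_i^*$ being pinned down by its own controller pair before $f_i$ enters as a fixed scalar) is a detail the paper leaves implicit, but it is the same argument.
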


\begin{proof}
The RREs of the output network 
$\mathcal{R}_{\alpha}(\mathcal{X}) \bigcup_{i = 1}^N 
\mathcal{R}_{\beta,\gamma}^{\pm}(X_i, Y_{1,i}, Y_{2,i})$ are given by
\begin{align}
\frac{\mathrm{d} x_1}{\mathrm{d} t} & = f_1(\mathbf{x}; \, \boldsymbol{\alpha})   
+ \gamma_{2,1} y_{1,1} - \gamma_{3,1} x_1 y_{2,1}, 
\hspace{0.3cm} 
\frac{\mathrm{d} y_{1,1}}{\mathrm{d} t} = \beta_{0,1} - \beta_{1,1} y_{1,1} y_{2,1},
\hspace{0.3cm} 
\frac{\mathrm{d} y_{2,1}}{\mathrm{d} t} =  \gamma_{1,1} x_1 - \beta_{1,1} y_{1,1} y_{2,1},
\nonumber \\
\frac{\mathrm{d} x_2}{\mathrm{d} t} & = f_2(\mathbf{x}; \, \boldsymbol{\alpha})   
+ \gamma_{2,2} y_{1,2} - \gamma_{3,2} x_2 y_{2,2}, 
\hspace{0.3cm} 
\frac{\mathrm{d} y_{1,2}}{\mathrm{d} t} = \beta_{0,2} - \beta_{1,2} y_{1,2} y_{2,2},
\hspace{0.3cm} 
\frac{\mathrm{d} y_{2,2}}{\mathrm{d} t} =  \gamma_{1,2} x_2 - \beta_{1,2} y_{1,2} y_{2,2},
\nonumber \\
\vdots \nonumber \\
\frac{\mathrm{d} x_N}{\mathrm{d} t} & = f_N(\mathbf{x}; \, \boldsymbol{\alpha})   
+ \gamma_{2,N} y_{1,N} - \gamma_{3,N} x_N y_{2,N}, 
\hspace{0.1cm} 
\frac{\mathrm{d} y_{1,N}}{\mathrm{d} t} = \beta_{0,N} - \beta_{1,N} y_{1,N} y_{2,N},
\hspace{0.1cm} 
\frac{\mathrm{d} y_{2,N}}{\mathrm{d} t} =  \gamma_{1,N} x_N - \beta_{1,N} y_{1,N} y_{2,N},
\label{eq:bimolecular_output}
\end{align}
and the statement of the theorem follows using the same 
argument as in~(\ref{eq:RREs_debar3}) from Section~\ref{sec:nonlinear}.
\end{proof}

\label{lastpage}

\end{document}